\DeclarePairedDelimiter\paren{\lparen}{\rparen}
\newtheorem{proposition}{Proposition}
\newtheorem{lemma}{Lemma}
\newtheorem{theorem}{Theorem}
\newtheorem{claim}{Claim}
\newtheorem{corollary}{Corollary}
\theoremstyle{definition}
\newtheorem{definition}{Definition}
\theoremstyle{remark}
\newtheorem{example}{Example}
\newenvironment{claimproof}
{\noindent {\em Proof of claim:} }
{\hfill $\diamond$ \smallskip}
\newcommand{\N}{\ensuremath{\mathbb N}\xspace}
\newcommand{\F}{\ensuremath{\mathbb F}\xspace}
\newcommand{\Q}{\ensuremath{\mathbb Q}\xspace}
\newcommand{\cB}{\ensuremath{\mathcal B}\xspace}
\newcommand{\cF}{\ensuremath{\mathcal F}\xspace}
\newcommand{\cI}{\ensuremath{\mathcal I}\xspace}
\newcommand{\cP}{\ensuremath{\mathcal P}\xspace}
\newcommand{\cM}{\ensuremath{\mathcal M}\xspace}
\newcommand{\cS}{\ensuremath{\mathcal S}\xspace}
\newcommand{\cT}{\ensuremath{\mathcal T}\xspace}
\newcommand{\cY}{\ensuremath{\mathcal Y}\xspace}
\newcommand{\rep}[1]{{\widehat{#1}}}
\DeclareMathOperator{\Pf}{Pf}
\begin{document}

\title{Representative set statements for delta-matroids
  and the Mader delta-matroid}
\author{Magnus Wahlström}
\affil{Royal Holloway, University of London}

\maketitle
\begin{abstract}
  The representative sets lemma for linear matroids has many powerful 
  surprising applications in parameterized complexity, including
  improved FPT dynamic programming algorithms (Fomin et al., JACM 2016) and 
  polynomial kernelization and sparsification results for graph
  separation problems (Kratsch and Wahlström, JACM 2020).  However,
  its application can be sporadic, as it presupposes the existence of
  a linear matroid encoding a property relevant to the problem at
  hand.  Correspondingly, although its application led to several new
  kernelizations (e.g., \textsc{Almost 2-SAT} and restricted variants
  of \textsc{Multiway Cut}), there are also several problems left open
  (e.g., the general case of \textsc{Multiway Cut}). 

  We present representative sets-style statements for linear
  \emph{delta-matroids}, which are set systems that
  generalize matroids, with important connections to matching theory
  and graph embeddings. Furthermore, our proof uses a 
  new approach of \emph{sieving polynomial families}, which
  generalizes the linear algebra approach of the representative sets
  lemma to a setting of bounded-degree polynomials.
  The representative sets statements for linear delta-matroids then
  follow by analyzing the Pfaffian of the skew-symmetric matrix
  representing the delta-matroid.  Applying the
  same framework to the determinant instead of the Pfaffian recovers
  the representative sets lemma for linear matroids.  
  Altogether, this significantly extends the toolbox available for
  kernelization.
  
  As an application, we show an exact sparsification result for
  \emph{Mader networks}: Let $G=(V,E)$ be a graph and $\cT$ a
  partition of a set of terminals $T \subseteq V(G)$, $|T|=k$. A
  \emph{$\cT$-path} in $G$ is a path with endpoints in distinct parts
  of $\cT$ and internal vertices disjoint from $T$.  In polynomial
  time, we can derive a graph $G'=(V',E')$ with $T \subseteq V(G')$,
  such that for every subset $S \subseteq T$ there is a packing of
  $\cT$-paths with endpoints $S$ in $G$ if and only if there is one in
  $G'$, and $|V(G')|=O(k^3)$.  This generalizes the (undirected
  version of the) \emph{cut-covering lemma}, which corresponds to the
  case that $\cT$ contains only two blocks.
  
  To prove the Mader network sparsification result, we furthermore
  define the class of \emph{Mader delta-matroids},  and show that
  they have linear representations.  This should be of independent
  interest. 
\end{abstract}

\tableofcontents

\section{Introduction}

Kernelization, in the sense the term is used in parameterized
complexity, is a subtle and interesting notion.
Its original and most common justification is as a formalization of the notion
of \emph{efficient instance preprocessing}. 
This also agrees with an initial reading of the definition.
Informally, given an instance $I$ of a parameterized problem with
parameter $k$, a \emph{kernelization}
is a polynomial-time routine that simplifies $I$ -- for example,
by discarding irrelevant elements, or replacing parts of the
instance by smaller, functionally equivalent structures -- so that the
output $(I',k')$ is equivalent to $(I,k)$ in the sense that $(I,k)$ is
a yes-instance if and only if $(I',k')$ is, and the size $|I'|+k'$
is bounded by a function in $k$. A \emph{polynomial kernelization} is
a kernelization where the size bound is a polynomial in $k$.
Perhaps the simplest example of this kind of combinatorial kernelization is
\textsc{Vertex Cover}: Given a graph $G$ and a parameter $k$, does $G$
have a vertex cover of size at most $k$? We can kernelize
\textsc{Vertex Cover} using three simple rules. First, delete any
vertex of degree more than $k$ in $G$ (and decrease $k$ accordingly).
Second, if no such vertex exists but $G$ contains more than $k^2$
edges, reject the instance (by replacing it with a constant-size
negative output). Finally, discard any isolated vertices. The
resulting graph has $O(k^2)$ vertices and edges, hence these rules
provide a polynomial kernel for \textsc{Vertex Cover}. 
Many more examples of such combinatorial kernelization rules exist,
with a rich toolbox (LP-based reduction rules, sunflowers, graph
structural parameter decompositions, and so on);
see the book of Fomin et al.~\cite{Book_kernelization_FLSZ19} for more.

On the other hand, when studying the limits of the power of kernelization, then
the concept turns out to be closely related to abstract \emph{information representation}. 
This is particularly true for the known methods for proving
kernelization lower bounds, i.e., proving that a problem cannot be
kernelized into $O(k^{c-\varepsilon})$ bits for some constant $c$,
for any $\varepsilon > 0$, under some plausible complexity conjecture.
Informally, the methods for showing such a bound is to show than an
instance of the problem with parameter $k$ can ``encode'' the answers
of $k^c$ independent instances of an NP-hard problem, in the sense
that there is a process that takes approximately $k^c$ instances of an
NP-hard problem as input and in polynomial time produces an output
instance with parameter $k$ that is a yes-instance if and only if at
least one input instance is a yes-instance (an OR-composition). 
If such a procedure exists, then the problem has no kernel of
$O(k^{c-\varepsilon})$ bits unless the polynomial hierarchy
collapses~\cite{DellvM14Journal,Drucker15ANDcomp}.
If this holds for every constant $c \in \N$, then the problem has
no polynomial kernel at all, under the same assumption.
Much more work has been done on kernelization lower bounds, but the
above remains the core principle; see~\cite{BodlaenderJK14crosscomp,Book_kernelization_FLSZ19}.\footnote{More
  recently, some works have investigated such
  ``representation-theoretical'' aspects more directly, such as the
  recent research on \emph{CSP
    non-redundancy}~\cite{BrakensiekG25stoc,abs-2507-07942,Carbonnel22redundancy} (see also~\cite{KhannaP025stoc,KhannaPS24soda}), which is
  essentially a non-constructive notion of kernelization parameterized
  by the number of variables; and the work on \emph{boundaried
    kernelization}~\cite{AntipovK25boundaried,AntipovK25repset}.}

But interestingly, the same connection to information representation
appears also in some of the more advanced kernelization algorithms. 
Some powerful methods in kernelization rely upon \emph{encoding}
relevant information from a problem into an algebraic structure -- for
example, into systems of linear vectors or bounded-degree multivariate
polynomials --  and then using purely algebraic methods on the
encoding structure (such as arguments of bounded rank) to 
derive simplifications of the input instance.

A particularly interesting outlier here is the problem
\textsc{$T$-Cycle}: Given a graph $G=(V,E)$ and a set of vertices
$T \subseteq V$, is there a simple cycle in $G$ that visits every
vertex in $T$? This problem is FPT under parameter $k=|T|$~\cite{BjorklundHT12soda},
but more unexpectedly there is a polynomial-time \emph{compression}
of $(G,T)$: There is a process that takes $(G,k)$ as input
and in polynomial time produces an annotated matrix $A$ of $\tilde O(k^3)$ bits as output,
such that $(G,T)$ is positive if and only if $\det A$ contains a particular
term~\cite{Wahlstrom13}. Note that $|V(G)|$ and the length of the
cycle can be unbounded in terms of $k$, so this is a significant
apparent compression of information. Furthermore, the compression
works via pure algebra, not taking the combinatorial problem structure
into account at all. (We still do not know whether \textsc{$T$-Cycle}
has a proper polynomial kernel, which produces an instance of 
\textsc{$T$-Cycle} rather than an annotated matrix as output;
but the existence of the compression rules out any attempt at
excluding the existence of a polynomial kernel with any of our
existing lower-bounds methods.)

But this is an outlier. There are more examples where the algebraic
encoding is used to derive purely combinatorial consequences.
Perhaps the chief example is the \emph{cut-covering set lemma}
of Kratsch and Wahlström~\cite{KratschW20JACM}. Given a (possibly
directed) graph $G$ and a set $X$ of terminals in $G$ with total capacity $k$, 
in randomized polynomial time we can compute a set $Z \subseteq V(G)$
of $O(k^3)$ vertices such that for any bipartition $X=S \cup T$, 
$Z$ contains a minimum $(S,T)$-vertex cut in $G$. 
The key algebraic ingredients in this result are twofold. First, 
the result relies upon the existence of a class of linear matroids
known as \emph{gammoids}, where, informally, combinatorial
information about linkages in $G$ is encoded into the linear
independence of a collection of $k$-dimensional linear vectors.
Second, given a relevant linear matroid
a powerful result from matroid theory, the
\emph{representative sets} (or \emph{two families}) theorem,
allows us to draw important combinatorial consequences for the
original graph, such as \emph{irrelevant vertex} rules 
identifying vertices in the original graph that can be safely
bypassed without affecting any solution. 
The representative sets theorem is originally due to Lov\'asz~\cite{Lovasz1977},
with algorithmic improvements by Marx~\cite{Marx09-matroid} and Fomin et al.~\cite{FominLPS16JACM}.
This method gives polynomial kernels for a range of problems, including
\textsc{Odd Cycle Transversal}, \textsc{Almost 2-SAT}, and
\textsc{Multiway Cut} with $s=O(1)$ terminals, where direct combinatorial
attacks appear (so far) completely powerless~\cite{KratschW20JACM}.

Another benefit of the algebraic methods is that instead
of merely preserving a ``single bit'' of information (such as whether
an instance has a solution of size $k$ or not), as would be required
by the definition of a kernel, algebraic methods frequently preserve 
\emph{all relevant behaviours} of the system they encode, such as,
here, preserving a min-cut for every partition of $X$.

Analogous results have been achieved using methods of bounded-degree
multivariate polynomials. In particular, Jansen and Pieterse~\cite{JansenP19toct} 
studied the question of sparsification of Boolean CSPs -- i.e.,
the limits of kernelization of Boolean CSPs parameterized by the
number of variables $n$ -- and employed a highly successful method
of encoding the constraints of the instance by bounded-degree polynomials.
Here, a constraint $R(x_1,\ldots,x_r)$ is encoded by a polynomial
$p(x_1,\ldots,x_r)$ if, for every tuple $(b_1,\ldots,b_r) \in 2^r$,
$p(b_1,\ldots,b_r)=0$ if and only if the constraint accepts $(b_1,\ldots,b_r)$. Rank bounds on the
space spanned by bounded-degree polynomials then imply that the CSP
can be sparsified to a ``basis'' of $O(n^d)$ constraints, where $d$ is
the degree bound, while preserving the satisfiability status for every
assignment to the variables. For many Boolean CSPs, this yields a
polynomial kernelization of a size that is both non-trivial and
optimal up to standard conjectures; see also Chen et al.~\cite{ChenJP20}.

These two methods, of linear matroid encodings and bounded-degree
polynomials, are also similar on a technical level, as the rank bound
for a system of bounded-degree polynomials is just a rank bound on a
derived linear space. %

We consider this information-encoding aspect of kernelization to be
both one of its more interesting aspects, and a promising venue for
further improvement. In particular, there is a good number of
questions in polynomial kernelization where progress has essentially
stalled, such as the existence of a polynomial kernel for
\textsc{Directed Feedback Vertex Set} or for undirected
\textsc{Multiway Cut} with an unbounded number of terminals,
where it appears that any further progress requires significantly new
methods. 

Finally, both representative sets and polynomial methods (and, indeed,
more advanced algebraic attacks) have seen significant use in FPT and
exact algorithms~\cite{FominLPS16JACM,Bjorklund14tsp,CyganKN18hamiltonrank,Nederlof20,Brand22discr,Pratt19waring},
including the recent \emph{determinantal sieving} method that 
combines the two worlds~\cite{theoretics:14026}.
However, we focus on the kernelization aspect.

\paragraph{Our results.}
We extend the algebraic toolbox for kernelization in three ways.
First, we show how the matroid representative sets method can be
captured and extended using a method we refer to as 
\emph{bounded-degree sieving polynomials}.
This allows us to generalize the matroid manipulations in applications
of the former method by possibly more familiar arguments on
bounded-degree polynomials. It also unifies the two approaches
mentioned above.

Second, we use this method to show representative sets-like theorems
over linear \emph{delta-matroids}. Delta-matroids are a generalization
of matroids, where the notion of \emph{independence} has been replaced 
by a more general notion of \emph{feasible sets}. In particular,
where linear matroids are represented by the column space of a matrix, 
linear delta-matroids are represented by principal induced submatrices
of a skew-symmetric matrix. For example, there is a delta-matroid 
whose feasible sets are precisely the vertex sets of (not necessarily
maximum) matchings in a graph, and it is represented by the well-known
\emph{Tutte matrix}.  

Third, we show a new representable class of delta-matroid, the
\emph{Mader delta-matroid}, which generalizes the matching
delta-matroid to the setting of Mader's $\cS$-path packing theorem
(see below).
As an application, we extend the cut-covering set lemma to a setting
where the terminal set $T$ may be partitioned into more than two parts.
That is, we show the existence of an $O(|T|^3)$-vertex \emph{Mader-mimicking network},
and as a corollary, we show the existence of an $O(|T|^3)$-vertex graph 
that preserves the maximum cardinality of a vertex-disjoint path packing
for any partition of the terminal set $T$. Cut-covering sets
(and \emph{cut sparsifiers}~\cite{LeightonM10spars,KrauthgamerR20planar}) 
correspond to the case where only partitions with two blocks are allowed. 

We hope that this will inspire further follow-up work, and in
particular that some progress can be made on the hard questions of
polynomial kernelization.

For the rest of the introduction, we give a less technical overview of
these methods; the full technical details follow in the rest of the
paper. 

\paragraph{Recent work and improvements.} 
Since the publication of the extended abstract of this paper~\cite{Wahlstrom24repset},
Koana and the author~\cite{KoanaW25stacs} have investigated representations
of linear delta-matroids and related algorithms, and proposed 
\emph{contraction representations} as an alternative representation
(see Section~\ref{ssec:rep-new}).
Using this notion, some of the results in the paper could be generalized.
The majority of results and constructions are intact from~\cite{Wahlstrom24repset},
but Section~\ref{ssec:rep-new} shows a new, more general
result using contraction representations. 

The results of~\cite{KoanaW25stacs} also provide new constructions for
compositions of linear delta-matroids, such as the \emph{linear delta-sum},
which is a very powerful way of constructing new linear delta-matroids.
In addition, the method of determinantal sieving~\cite{theoretics:14026}
has been extended to delta-matroids~\cite{abs-2502-13654}.
Altogether, in our opinion, these results provide a very interesting
collection of algebraic methods that should be further investigated,
both for kernelization and algorithmic purposes. 

\subsection{Representative sets and sieving polynomials}

To illustrate the method of bounded-degree sieving polynomials, let us
first recall the matroid formulation of the representative sets lemma.
We here use a presentation with as little matroid terminology as
possible; a more standard, technically complete presentation is found
in Section~\ref{sec:l-m-appl}. 

Let $M=(V,\cI)$ be a linear matroid and let $A$ be a matrix
representing $M$; i.e., $\cI \subseteq 2^V$, $A$ has columns labelled
by elements of $V$, and a set $S \subseteq V$ is contained in $\cI$
if and only if the columns labelled by $S$ are linearly independent.
In other words, in shorthand, for $S \subseteq V$ we may say $S \in \cI$
if and only if the matrix $A[\cdot,S]$ is non-singular.

Let $k$ be the rank of $A$; w.l.o.g.\ we can assume that $A$ has $k$ rows.
Let $X, Y \subseteq V$ be independent sets with $|X|=k-q$ and $|Y|=q$ for some $q=O(1)$. 
We say that \emph{$Y$ extends $X$} if the columns $X \cup Y$ have rank $|X|+|Y|=k$
in $A$; i.e., if $X \cup Y$ is linearly independent and $X \cap Y = \emptyset$.
Now let $\cY \subseteq \binom{V}{q}$ be a collection of $q$-sets over $V$.
We say that a set $\rep{\cY} \subseteq \cY$ \emph{represents $\cY$ in $M$}
if the following holds: For every $X \subseteq V$, $|X|=k-q$,
if there is an element $Y \in \cY$ that extends $X$
then such an element is retained in $\rep{\cY}$.

\begin{theorem}[Matroid representative sets, simplified~\cite{Lovasz1977,Marx09-matroid,FominLPS16JACM}]
  \label{ithm:matroid-rep-set}
  Let $N=(V,\cI)$ be a linear matroid of rank $k$, and let $\cY$ 
  be a collection of $q$-sets in $M$. In polynomial time, we can
  compute a set $\rep{\cY} \subseteq \cY$ such that
  $|\rep{\cY}| \leq \binom{k}{q}$ and $\rep{\cY}$ represents $\cY$ in $M$. 
\end{theorem}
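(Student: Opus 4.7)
The plan is the classical exterior-algebra (compound-matrix) argument. For every $Y \in \cY$ define the wedge product
\[
  v_Y \;=\; \bigwedge_{y \in Y} A[\cdot,y] \;\in\; \textstyle\bigwedge^q \F^k,
\]
where $\F$ is the field over which $A$ is represented; concretely, $v_Y$ is just the vector whose $\binom{k}{q}$ coordinates are the $q \times q$ minors of the submatrix $A[\cdot, Y]$, indexed by $q$-subsets of the rows. The representative set $\rep{\cY}$ is then produced greedily: collect the $v_Y$'s as rows of a matrix and run Gaussian elimination, retaining one $Y$ per pivot. Since the ambient space has dimension $\binom{k}{q}$, this automatically yields $|\rep{\cY}| \le \binom{k}{q}$.

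For correctness, fix an independent set $X$ with $|X|=k-q$ and suppose some $Y^\star \in \cY$ extends $X$; I must exhibit $Y' \in \rep{\cY}$ that also extends $X$. The key identity is that, up to sign,
\[
  v_X \wedge v_Z \;=\; \det\bigl(A[\cdot,\, X \cup Z]\bigr)
\]
for every $Z \in \binom{V}{q}$, viewed as a scalar in $\bigwedge^k \F^k \cong \F$, and this is nonzero precisely when $X \cup Z$ is a basis of $M$, i.e., exactly when $Z$ extends $X$. Hence $\phi_X(v) := v_X \wedge v$ is a linear functional on $\bigwedge^q \F^k$ whose kernel consists of the wedge vectors coming from $q$-sets that do not extend $X$. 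Since $\{v_{Y'} : Y' \in \rep{\cY}\}$ spans the same subspace as $\{v_Y : Y \in \cY\}$, we may write $v_{Y^\star} = \sum_{Y' \in \rep{\cY}} c_{Y'}\, v_{Y'}$; applying the linear map $\phi_X$ and using $\phi_X(v_{Y^\star}) \neq 0$ forces $\phi_X(v_{Y'}) \neq 0$ for at least one $Y' \in \rep{\cY}$, as required.

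The main concern is not conceptual but algorithmic: over $\Q$, naive Gaussian elimination can blow up in bit length, so I would either work over a sufficiently large finite field and appeal to Schwartz--Zippel-type arguments for correctness, or use a standard fraction-free elimination scheme. A secondary subtlety is that the claim of ``polynomial time'' tacitly requires $\binom{k}{q}$ to be polynomial in the input size; in the targeted regime $q = O(1)$ this is automatic, each coordinate of $v_Y$ is then a determinant of a constant-sized submatrix, and the overall routine runs in time polynomial in $|\cY|$, $k$, and the bit length of $A$.
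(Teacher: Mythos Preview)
Your proof is correct; it is the classical exterior-algebra argument of Lov\'asz, Marx, and Fomin et al.\ that the theorem statement cites. The paper, however, deliberately takes a different route: it introduces $kq$ formal variables $Z=\{z_{i,j}\}$, defines the map $\psi(Y)=(A[1,y_1],\ldots,A[k,y_q])$, and treats $P_X(Z)=\det\bigl(A[\cdot,X]\mid Z\bigr)$ as a degree-$q$ sieving polynomial in those variables. A generic rank argument on the space of degree-$q$ polynomials in $kq$ variables then yields a representative set of size $O((kq)^q)=O(k^q)$ for constant $q$.

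The trade-off is this. Your wedge-product argument gives the sharp bound $\binom{k}{q}$ immediately, because $\bigwedge^q\F^k$ has exactly that dimension; the paper's sieving-polynomial route only gets $O(k^q)$ in its basic form, and the paper explicitly remarks that recovering the exact $\binom{k}{q}$ bound requires replacing the monomial basis by the generalized Laplace expansion---which is essentially your argument in different clothing, since the Laplace cofactors along the last $q$ columns are precisely the coordinates of $v_X$ and $v_Y$. What the paper's approach buys is generality: once the statement is phrased as ``sieving polynomial family of degree $q$'', the same machinery applies verbatim to Pfaffians of skew-symmetric matrices and hence to linear delta-matroids, which is the paper's real goal. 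Your approach is tighter but more special-purpose.
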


In particular, for the setting $q=O(1)$ the size of $\rep{\cY}$ is
polynomial in $k$. 

The power of this result lies in that the collection of sets $X$ does
not have to be supplied as input. Think of this theorem as
being applied at ``compile time'', while a set $X$ is supplied at
``run time'' -- or more accurately, the computation of a
representative set $\rep{\cY} \subseteq \cY$ of size $O(k^q)$
is a polynomial-time \emph{preprocessing} stage.

\paragraph{Property families and the setup.}
To generalize this situation, we introduce an abstraction.
Let $S$ be a ground set. By a \emph{property} of $S$, we mean simply a
set $P \subseteq S$, and a \emph{property family} $\cP \subseteq 2^S$
is a collection of properties. For example, in the above setting we
would have $S=\cY$ as the set of $q$-sets $Y$ supplied to the
preprocessing stage, and for every set $X \subseteq V$ with $|X|=k-q$ 
the property $P_X \subseteq S$ is simply ``$Y$ extends $X$''
(i.e., $P_X$ contains those sets $Y \in S$ that extend $X$).
Then, a \emph{representative set with respect to $\cP$}
is a set $\rep{S} \subseteq S$ such that for every non-empty property
$P \in \cP$ we have $P \cap \rep{S} \neq \emptyset$.

We then capture computation of a representative set in terms of
\emph{sieving polynomials} as follows. Let $S$ be the ground set. 
Let $\F$ be a field and let $\psi \colon S \to \F^r$ be a vector map.
For example, in the case of $q$-sets over linear matroids as above, for
a set $Y=\{v_1,\ldots,v_q\}$ we
simply have $\psi(Y)=(A[1,v_1],\ldots,A[k,v_q])$, i.e., the matrix $A[\cdot,Y]$
rolled out as a vector of length $r=kq$.
Then for any property $P \subseteq S$, a \emph{sieving polynomial for $P$}
is an $r$-variate polynomial $p$ over $\F$ such that
for any $x \in S$, $p(\psi(x)) \neq 0$ if and only if $x \in P$.
In other words, $p$ vanishes on $\psi(x)$ precisely when $x \notin S$.
We say that a property family $\cP \subseteq 2^S$ admits a
\emph{sieving polynomial family of degree $d$ (over $\F$ and $\psi$)}
if there exists a vector map $\psi \colon S \to \F^r$ 
such that for every property $P \in \cP$ there is
a sieving polynomial for $P$ of degree at most $d$. 

The central result is then the following.

\begin{theorem}[Representative set for sieving polynomials]
  \label{ithm:sieve-rep-set}
  Let $S$ be a set and $\psi \colon S \to \F^r$ a polynomial-time
  computable map over a field $\F$. Let $d$ be a constant. In
  polynomial time, we can compute a set $\rep{S} \subseteq S$
  with $|\rep{S}|=O(r^d)$ such that for every property family
  $\cP$ that admits a sieving polynomial family of degree at most $d$
  over $\psi$ and $\F$, $\rep{S}$ is a representative set with respect
  to $\cP$. 
\end{theorem}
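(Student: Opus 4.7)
The plan is to linearize the condition imposed by a degree-$d$ sieving polynomial and then reduce representation to a single rank computation, performed once for $\psi$ and $d$, independently of which property family $\cP$ is later considered. The key observation is that the space $V_d$ of polynomials of degree at most $d$ in $r$ variables is a $\F$-vector space of dimension $N = \binom{r+d}{d} = O(r^d)$ (since $d$ is constant), spanned by the monomials $\{X^\alpha : |\alpha|\le d\}$. For any $u \in \F^r$, the evaluation map $p \mapsto p(u)$ is linear on $V_d$: writing $p = \sum_{\alpha} c_\alpha X^\alpha$, we have $p(u) = \sum_{\alpha} c_\alpha u^\alpha$. So to each $x \in S$ one can attach the monomial-evaluation vector $v_x = (\psi(x)^\alpha)_{|\alpha|\le d} \in \F^N$, with the property that $p(\psi(x)) = \langle v_x, c_p\rangle$ where $c_p$ denotes the coefficient vector of $p$.

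First I would form the $|S|\times N$ matrix $M$ whose row indexed by $x \in S$ is $v_x$, and use Gaussian elimination to select a maximal linearly independent subset of its rows; let $\rep{S} \subseteq S$ be the corresponding subset of row indices. Because $M$ has only $N = O(r^d)$ columns, we obtain $|\rep{S}| \le N = O(r^d)$. Computing $\psi(x)$ is polynomial by hypothesis, raising to monomials of bounded total degree is polynomial, and Gaussian elimination is polynomial in the field arithmetic, so the whole preprocessing runs in polynomial time. Crucially, this step does not refer to $\cP$ at all.

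For correctness, suppose $\cP$ admits sieving polynomials of degree at most $d$ over $\psi$ and $\F$, and let $P \in \cP$ be non-empty with sieving polynomial $p_P$. Pick any $x \in P$; then $p_P(\psi(x)) \neq 0$. By the construction of $\rep{S}$, the vector $v_x$ lies in the span of $\{v_y : y \in \rep{S}\}$, so we can write $v_x = \sum_i \lambda_i v_{y_i}$ with $y_i \in \rep{S}$. Taking the inner product with $c_{p_P}$ gives
\[
 0 \neq p_P(\psi(x)) = \langle v_x, c_{p_P}\rangle = \sum_i \lambda_i \langle v_{y_i}, c_{p_P}\rangle = \sum_i \lambda_i \, p_P(\psi(y_i)),
\]
so $p_P(\psi(y_i)) \neq 0$ for some $i$, which by the sieving property means $y_i \in P$. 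Hence $\rep{S} \cap P \neq \emptyset$ and $\rep{S}$ is a representative set with respect to $\cP$.

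There is no real obstacle once the linearization is in place; the only thing worth emphasizing is that $\rep{S}$ is computed purely from $\psi$, $\F$, and $d$, so the same set works simultaneously for every property family admitting a degree-$d$ sieving polynomial family over $\psi$. This is precisely the analogue, at the level of polynomials, of how the classical matroid representative-sets construction depends only on the representing matrix $A$ and not on the specific collection of ``extending'' sets $X$.
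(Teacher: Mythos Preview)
Your proof is correct and follows essentially the same approach as the paper: linearize by mapping each $x$ to the vector of monomial evaluations $(\psi(x)^\alpha)_{|\alpha|\le d}$, select a row basis, and use linearity of evaluation to argue that any non-vanishing sieving polynomial survives on some basis element. The paper packages this as Lemma~\ref{lm:poly-nonvanish} plus Corollary~\ref{cor:bd-nonvanish}, and phrases the correctness step via a subspace-dimension argument rather than your explicit linear-combination-and-inner-product computation, but the content is the same.
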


This result is not difficult to prove (see Section~\ref{sec:sieving-polynomials})
but we find it remarkably powerful and informative regarding proving 
new representative set-style results. 

\paragraph{Illustration: Linear matroid case.}
As an illustration, let us use Theorem~\ref{ithm:sieve-rep-set}
to derive Theorem~\ref{ithm:matroid-rep-set} for the bounded-degree case.
As mentioned, the input is $(A,\cY)$ where $A$ is a matrix of dimension $k \times V$ and of
rank $k$ over some field $\F$, and $\cY$ is a collection of $q$-sized subsets of $V$,
where we assume $q=O(1)$. We need to prove that the property family
\[
  \cP = \bigl\{\{Y \in \cY : \text{$Y$ extends $X$}\} \mid X \in \binom{V}{k-q}\bigr\}
\]
has a sieving polynomial family of degree $q$. For this, let $X \subseteq V$
be a linearly independent set of vectors with $|X|=k-q$,
say $X=\{v_1,\ldots,v_{k-q}\}$,
and introduce a set of variables $Z=\{z_{i,j} \mid i \in [k], j \in [q]\}$.
Define the polynomial $P_X(Z)$ as
\[
P_X(Z) = \det \begin{pmatrix} A[\cdot,X] & Z \end{pmatrix} = \det
  \begin{pmatrix}
    a_{1,1} & \dots & a_{k,k-q} & z_{1,1} & \dots z_{1,q} \\
    \vdots &        & \vdots    &   \vdots& \vdots \\
    a_{k,1}& \dots  & a_{k,k-q}   & z_{k,1} & \dots z_{k,q}
  \end{pmatrix},
\]
where $a_{i,j}=A[i,v_j]$ is a constant depending on $X$
and $z_{i,j}$ is a variable for each $i$ and $j$. 

Now, $P_X(Z)$ is clearly a polynomial of degree $q$ over $Z$. 
We argue that $P_X$ is also a valid sieving polynomial for $X$
with respect to the map $\psi$ defined above, i.e., the map
$\psi \colon \cY \to \F^{kq}$ defined by
$\psi(\{y_1,\ldots,y_q\})=(A[1,y_1],\ldots,A[k,y_q])$
(under a natural order on $y_i$).
Indeed, if $Y=\{y_1,\ldots,y_q\} \subseteq V$, and
we let $z_{i,j}=A[i,y_j]$ for every $i \in [k]$, $j \in [q]$,
then this evaluation produces the matrix
$\begin{pmatrix} A[\cdot,X] & A[\cdot,Y] \end{pmatrix}$,
which is non-singular if and only if $X \cup Y$ is linearly independent
and $X \cap Y = \emptyset$; i.e., if and only if $Y$ extends $X$.
Since $P_X(\psi(Y))$ is simply the determinant of this matrix,
we get $P_X(\psi(Y)) \neq 0$ if and only if $Y$ extends $X$, as required. 

Our results on representative set-theorems for delta-matroids
follow the same approach as above, except starting from
skew-symmetric matrices, which represent linear delta-matroids,
using the Pfaffian instead of the determinant.
See Section~\ref{sec:delta-matroid-repsets}.

In addition, having the framework defined in terms of bounded-degree
sieving polynomials instead of explicitly depending on a matrix
representation opens the door to defining new applications, e.g., by
combining polynomial families from different sources and using 
other familiar sieving-operations over polynomials. 
We leave this avenue unexplored. 

\subsection{Delta-matroid representative sets} 

Let us now sketch the main new application of the method of sieving
polynomial families, by extending Theorem~\ref{ithm:matroid-rep-set}
to the setting of \emph{delta-matroids}.

Delta-matroids are a generalization of matroids with many applications
across discrete mathematics and computer science. They were originally
defined by Bouchet~\cite{Bouchet87DMone}, although similar
constructions had also occurred under other names.
See Moffatt~\cite{Moffatt19deltamatroids} for an overview.

A delta-matroid is a pair $D=(V,\cF)$ where $\cF \subseteq 2^V$
is a set of \emph{feasible sets}, subject to certain axioms.
We defer the full definition to Section~\ref{sec:delta-matroids} and
narrow our focus here on a subclass. Let $A \in \F^{n \times n}$ be a
square matrix. $A$ is \emph{skew-symmetric} if $A^T=-A$. 
Analogous to linear matroids being represented by the column
space of a matrix, a \emph{linear delta-matroid} is a delta-matroid
represented through a skew-symmetric matrix. 
As a subclass of linear delta-matroids, we define a
\emph{directly represented} delta-matroid to be a delta-matroid $D=(V,\cF)$
represented by a skew-symmetric matrix $A \in \F^{V \times V}$
such that for any $S \subseteq V$ we have $S \in \cF$ if and only if
$A[S]$ is non-singular. (Here, $A[S]=A[S,S]$ is the principal
submatrix induced by $S$.)

Let $A \in \F^{V \times V}$ be a skew-symmetric matrix. The
\emph{support graph} of $A$ is the graph $G=(V,E)$ where
$uv \in E$ for $u, v \in V$ if and only if $A[u,v] \neq 0$; note that $G$ is an
undirected graph. We note two algebraic properties of skew-symmetric matrices.
First, the determinant of $A$ is the square of the \emph{Pfaffian} of
$A$, an efficiently computable polynomial whose terms range over
perfect matchings of the support graph of $A$.
Our delta-matroid representative set statements follow
by extracting bounded-degree sieving polynomials from the Pfaffian,
similarly to the sketch given above for the matroid representative set
statement. Second, skew-symmetric matrices allow an algebraic
\emph{pivoting} operation, which is an important tool in this paper.
Pivoting provides the following operation: Given a delta-matroid $D=(V,\cF)$
directly represented by a matrix $A$, and a set $S \subseteq V$ that is feasible
in $D$, we can in polynomial time construct a matrix $A'=A*S$ such
that the delta-matroid $D'=(V,\cF')$ directly represented by $A'$
has the following property: For any set $T \subseteq V$,
$T$ is feasible in $D'$ if and only if $S \Delta T$ is feasible in $D$
(where $\Delta$ denotes symmetric difference). 

As a key example of a delta-matroid, let $G=(V,E)$ be a graph. Recall
that the \emph{Tutte matrix} for $G$ is a skew-symmetric matrix $A$
with rows and columns labelled by $V$, which is non-singular if and
only if $G$ has a perfect matching. By symmetry of the construction,
it is straight-forward to observe a stronger property of $A$: For
every $S \subseteq V$, $A[S]$ is non-singular if and only if $G[S]$
has a perfect matching. Thus, $A$ directly represents a linear
delta-matroid $D=(V,\cF)$ where a set $S$ is feasible if and only if
it is the endpoints of a matching in $G$.

We also provide a new extension of this construction to delta-matroids
related to Mader's path-packing theorem; see below.

\paragraph{Delta-matroid representative sets.} We now present the
representative sets statement for directly represented delta-matroids. 
For technical reasons we need to make it a little bit more involved
than the corresponding statement for linear matroids to retain its power.
Let $A \in \F^{V \times V}$ be a skew-symmetric matrix, directly
representing a delta-matroid $D=(V,\cF)$, and let $T \subseteq V$
be a set of terminals. The \emph{rank} of $T$ is the maximum
intersection $F \cap T$ for any feasible set $F$ of $D$ (where $F$ can
be assumed to be simply a basis of the column space of $A$; see
Section~\ref{sec:delta-matroid-repsets} for more precise definitions).
For a set $X \subseteq T$ and $Y \subseteq V$, we say that
\emph{$Y$ extends $X$ in $D$} if $X \cap Y = \emptyset$
and $X \cup Y$ is feasible in $D$. 
Given a collection $S \subseteq \binom{V}{q}$ of $q$-sets in $D$,
as with matroids, we say that $\rep{S} \subseteq S$
\emph{represents $S$ in $(D,T)$} if, for every $X \subseteq T$
such that there exists $Y \in S$ such that $Y$ extends $X$ in $D$, 
there exists $Y \in \rep{S}$ such that $Y$ extends $X$ in $D$.

\begin{theorem}
  \label{ithm:dm-rep-set}
  Let $D=(V,\cF)$ be a directly represented delta-matroid.
  Let $T \subseteq V$ and $S \subseteq \binom{V}{q}$ be given,
  where the rank of $T$ is $k$ and $q=O(1)$. 
  In polynomial time, we can compute a set $\rep{S} \subseteq S$
  with $|\rep{S}|=O(k^q)$ such that $\rep{S}$ represents $S$ in
  $(D,T)$. 
\end{theorem}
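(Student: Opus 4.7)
The plan is to prove Theorem~\ref{ithm:dm-rep-set} by instantiating Theorem~\ref{ithm:sieve-rep-set}, using a sieving polynomial family drawn from the Pfaffian of the matrix $A$ that directly represents $D$, in exact analogy with the determinant-based derivation of Theorem~\ref{ithm:matroid-rep-set} given in the introduction. Take the ground set for the sieving theorem to be $S$, and for each $X \subseteq T$ let $P_X = \{Y \in S : Y \text{ extends } X \text{ in } D\}$; the goal is to exhibit, for every such $X$, a sieving polynomial of degree at most $q$ with respect to a single vector map $\psi\colon S \to \F^r$ whose dimension $r$ is $O(k)$. Theorem~\ref{ithm:sieve-rep-set} then immediately yields $|\rep{S}| = O(r^q) = O(k^q)$.

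For the sieving polynomial itself, introduce variables $z_{i,v}$ (for $i \in [q]$ and $v$ ranging over a compressed image of $T$) and $w_{ij}$ (for $1 \le i < j \le q$). For $X \subseteq T$, let $M(X,Z)$ be the skew-symmetric $(|X|+q) \times (|X|+q)$ symbolic matrix whose $X \times X$ block is the constant $A[X,X]$ and whose remaining blocks are populated by the $z$- and $w$-variables (with skew-symmetry enforced), and set $P_X(Z) = \Pf M(X,Z)$. Let $\psi(Y)$ be the evaluation $z_{i,v} \gets A[v,y_i]$, $w_{ij} \gets A[y_i,y_j]$. Two properties need verification. \emph{Degree}: each monomial of $\Pf M(X,Z)$ corresponds to a perfect matching of $X \cup \{y_1,\dots,y_q\}$ and contributes one $Z$-variable for every matching edge incident to some $y_i$; since there are only $q$ such vertices, $\deg_Z P_X \le q$. \emph{Correctness}: if $X \cap Y = \emptyset$ then $P_X(\psi(Y))$ is literally $\Pf(A[X \cup Y])$, which is nonzero iff $A[X \cup Y]$ is nonsingular iff $X \cup Y \in \cF$; and if $v = y_j \in X \cap Y$ then under $\psi(Y)$ the rows of $M$ indexed by $v \in X$ and by $y_j \in Y$ become identical, so the evaluated matrix is singular and $P_X(\psi(Y)) = 0$ ``for free''.

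The main technical obstacle is making the dimension of $\psi$ equal to $O(k)$ rather than $O(|T|)$: the naive choice above uses one $z$-variable per pair $(i,v)$ with $v \in T$, which only yields $|\rep{S}| = O(|T|^q)$. I would address this by first compressing the representation using the rank-$k$ hypothesis on $T$: compute a basis $B^*$ of $D$ realising the rank, i.e.\ with $|B^* \cap T| = k$, pivot to obtain $A^* = A * B^*$, and exploit the pivot identity ($S$ feasible in the pivoted delta-matroid iff $B^* \Delta S$ is feasible in $D$) together with the structure of skew-symmetric matrices to replace $A$ by an $O(k) \times O(k)$ skew-symmetric matrix that preserves the feasibility of every $X \cup Y$ with $X \subseteq T$ and $Y \in S$. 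After this delta-matroid analogue of the standard matroid truncation to rank $k$, the $z$-variables are indexed by a set of size $O(k)$, the Pfaffian analysis of the previous paragraph applies verbatim, and Theorem~\ref{ithm:sieve-rep-set} with $d = q$ and $r = O(k)$ delivers the desired $\rep{S}$ in polynomial time. This compression step --- showing that a directly represented delta-matroid with a rank-$k$ terminal set admits a size-$O(k)$ equivalent representation for the purpose of extending subsets of $T$ --- is where I expect the genuine work to lie.
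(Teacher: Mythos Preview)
Your proposal is correct and follows essentially the same route as the paper: Pfaffian-based sieving polynomials of degree $q$ applied through Theorem~\ref{ithm:sieve-rep-set}, with the key difficulty correctly isolated as reducing the arity of $\psi$ from $O(|T|)$ to $O(k)$ via a pivoting step.

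One clarification on that last step, since you flag it as the crux. The paper does not compress $A$ to an $O(k)\times O(k)$ matrix (indeed it cannot, since the sets $Y$ range over all of $V$). Instead it pivots by a \emph{minimal} feasible set $B$ subject to $|B\cap T|=k$; minimality forces $|B|\le 2k$, and maximality of $|B\cap T|$ forces $(A*B)[T\setminus B,\,V\setminus B]=0$. Working in the twisted delta-matroid, one asks whether $Y$ extends $X\Delta B\subseteq T\cup B$, and the zero block means the only nonconstant entries of $A_X$ that matter are those in rows indexed by $B$. Thus $\psi(Y)$ records only $A'[b,y_i]$ for $b\in B$ and $A'[y_i,y_j]$, giving arity $O(kq+q^2)=O(k)$, while the ambient matrix stays full size. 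Your instinct to pivot by a feasible set realising the rank is exactly right; the refinement is the choice of a minimal such set and the resulting block-sparsity, rather than any literal dimension reduction.
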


Note that the corresponding statement for linear matroids is the case
that $T=V$ and the rank of the whole matroid is $k$. In matroids, this
is not much of a restriction since matroids support a \emph{truncation}
operation that limits their rank (and which can be performed
efficiently over linear matroids). In delta-matroids, no such
operation is possible, and the addition of a low-rank terminal set $T$
adds significant expressive power to the statement.
See Section~\ref{sec:rep-lb} for more.

The proof of Theorem~\ref{ithm:dm-rep-set} follows from two simple
steps. First, we note that there is a pivoting operation on $D$ such
that the resulting matrix $A'=A*S$ (where $A$ represents $D$ and
$S \subseteq T$) has significant structural sparsity properties.
Second, given such a structurally sparse matrix $A'$, we prove
Theorem~\ref{ithm:dm-rep-set} by a bounded-degree sieving polynomial
argument over the Pfaffian of $A'$. 

\paragraph{Mader delta-matroids.}
To support applications of Theorem~\ref{ithm:dm-rep-set}, we define a
delta-matroid based on Mader's $\cS$-path-packing theorem, and show that
it is directly representable. Let $G=(V,E)$ be an undirected graph and
$T \subseteq E$ a set of terminals. Let $\cS$ be a partition of $T$.
An \emph{$\cS$-path} is a path $P$ in $G$ whose endpoints lie in
different blocks of $\cS$ and which is internally disjoint from $T$.
An \emph{$\cS$-path packing} is a set $\cP$ of pairwise
vertex-disjoint $\cS$-paths. By a theorem of Mader~\cite{Mader78Hpath}
(see Schrijver~\cite{SchrijverBook} and~\cite{SeboS04pathpacking,Pap06Madermatroids}), a
maximum-cardinality $\cS$-path packing can be efficiently computed,
and is characterized by a min-max theorem generalizing the Tutte-Berge
formula for matching.
An $\cS$-path packing is \emph{perfect} if it has cardinality $|T|/2$. 

Graph matchings correspond to the case that $T=V$
and every block in $\cS$ has cardinality one, and at the other end, the
case that $|\cS|=2$ corresponds to undirected vertex-disjoint max-flow.

We refer to $\cS$-paths as \emph{Mader paths}, and to $\cS$-path
packings as \emph{Mader path-packings}, for clarity and in order not
to have the meaning of the term depend on a typographical detail.
We show the following.

\begin{theorem}
  \label{ithm:mader-delta-matroid}
  Let $G=(V,E)$ be an undirected graph, $T \subseteq V$ a set of
  terminals and $\cS$ a partition of $T$. Let $D=(T,\cF)$ with
  $\cF \subseteq 2^T$ be defined so that for $S \subseteq T$,
  $S \in \cF$ if and only if there is a Mader path-packing in $(G,\cS)$
  whose endpoints is precisely the set $S$. Then $D$ is a
  delta-matroid, and furthermore a direct representation of $D$ can
  be computed in randomized polynomial time. 
\end{theorem}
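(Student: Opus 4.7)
The plan is to reduce the representation problem to the matching delta-matroid via a suitable graph construction, then apply the pivoting operation on skew-symmetric matrices to restrict to $T$. Since any skew-symmetric matrix directly defines a linear delta-matroid via its non-singular principal submatrices, it suffices to produce such a representing matrix on $T$ in randomized polynomial time.

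First I would construct an auxiliary graph $G^* = (V^*, E^*)$ containing $V(G)$ by attaching, for each block $S_i \in \cS$, a gadget $H_i$ whose role is to enforce the cross-block endpoint constraint. The target property of the construction is that for every $S \subseteq T$, there is a Mader path packing in $(G, \cS)$ with endpoint set exactly $S$ if and only if the induced subgraph $G^*[V^* \setminus (T \setminus S)]$ has a perfect matching. Depending on the path parities arising in Mader packings, I would additionally preprocess the non-terminal vertices of $G$ (for example by a vertex-doubling operation that splits each $u \in V(G) \setminus T$ into two copies joined by an edge, so that traversing $u$ in a path uses one matching edge), both to capture Mader paths of either parity uniformly and to ensure that the set $U := V^* \setminus T$ itself has a perfect matching in $G^*[U]$.

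Given such a $G^*$, let $M^*$ be the Tutte matrix of $G^*$ over the field of rational functions in the edge variables; by the construction discussed in the excerpt, $M^*$ directly represents the matching delta-matroid $D^*$ of $G^*$. Since $U$ is feasible in $D^*$, pivoting yields $A^* := M^* * U$, and the pivoting identity implies that for every $S \subseteq T$, the principal submatrix $A^*[S]$ is non-singular if and only if $S \Delta U = S \cup U$ is feasible in $D^*$, if and only if $G^*[S \cup U]$ has a perfect matching, if and only if (by the correspondence above) $S$ is the endpoint set of some Mader packing. The restricted matrix $A := A^*[T]$ is then the desired direct representation of $D$. Substituting uniformly random field elements of polynomial size for the Tutte variables and applying Schwartz--Zippel yields an explicit matrix over a finite field, correct with high probability.

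The hard part will be constructing the gadget in the first step. I need a \emph{per-subset} correspondence between Mader packings and perfect matchings, rather than the usual correspondence only for the maximum packing cardinality that one finds in classical reductions from Mader's theorem to matroid matching. The gadget must simultaneously (i) absorb any possible subset of unused terminals $T \setminus S$ from each block via an internal matching, (ii) forbid any perfect matching that would implicitly pair two terminals of the same block through a path in $G$, and (iii) ensure that $G^*[U]$ admits a perfect matching so that pivoting on $U$ is applicable. I expect that the gadget for $S_i$ will consist of $O(|S_i|)$ auxiliary vertices with internal edges chosen so that a parity or Tutte-type obstruction rules out same-block pairings, while being freely matchable for every admissible pattern of cross-block removals; combining this with the vertex-doubling preprocessing should yield the required correspondence uniformly across all $S \subseteq T$.
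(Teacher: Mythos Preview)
Your high-level strategy---build a skew-symmetric matrix on $V^*$, pivot on the non-terminals, and restrict to $T$---matches the paper. But the construction you propose for the matrix is where the real content lies, and here there is a genuine gap.

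You want a purely combinatorial reduction: an auxiliary graph $G^*$ with per-block gadgets $H_i$ such that $S$ is Mader matchable in $(G,\cS)$ if and only if $G^*[V^*\setminus(T\setminus S)]$ has a perfect matching, i.e., the Mader delta-matroid is a contraction of a \emph{matching} delta-matroid. You do not construct these gadgets, you only ``expect'' them to exist, and this expectation is precisely the hard part. A gadget attached at the terminals of a block has to forbid every matching that implicitly pairs two terminals of that block via a path through $G$, while still permitting every pattern of cross-block pairings---but from the gadget's side, a path leaving $t_1$ into $G$ looks the same regardless of where it re-emerges. It is far from clear that any graph gadget can enforce this for all $S$ simultaneously; the classical reduction of Mader's problem goes to linear \emph{matroid} matching (Lov\'asz), not to graph matching, which is strictly weaker.

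The paper does \emph{not} produce such a gadget. Instead it works algebraically: it takes the split graph $G^*$ (your vertex-doubling), forms its Tutte matrix over a field of characteristic~2, and then imposes specific variable identifications. Setting $x_{uv'}=x_{u'v}$ for non-terminal edges causes non-trivial cycles to cancel in the Pfaffian (so only padded oriented $T$-path packings survive); then introducing one fresh variable $z_i$ per block and setting $x_{tv'}=x_{tv}z_i$ for $t\in T_i$ causes the two orientations of any same-block path to contribute identical monomials and cancel, while cross-block paths survive because the $z$-signature of an acyclic orientation is unique. The resulting matrix is a non-generic evaluation of a Tutte matrix, not the Tutte matrix of any graph, so the Mader delta-matroid is not exhibited as a matching delta-matroid. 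Your pivot-and-restrict step then applies verbatim to this matrix. If you want to salvage your approach, you would need to actually exhibit the gadgets $H_i$; otherwise, the characteristic-2 cancellation argument is the missing idea.
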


We refer to $D$ defined by $G$ and $\cS$ as a \emph{Mader delta-matroid}. 
Taking $\cB$ as the set of all maximum-cardinality feasible sets in $D$
defines a matroid, known as a \emph{Mader matroid}. Mader matroids are
representable~\cite{SchrijverBook}, and in fact it has been shown that
every Mader matroid is an example of a \emph{gammoid}, i.e.,
that it can be equivalently represented in terms of linkages in
directed graphs~\cite{Pap06Madermatroids}. However, the structure of
Mader delta-matroids appears more intricate, and we are not aware of
any similar way to capture them by graph flows or matchings.\footnote{As an
  excursion, consider the analogous situation with
  \emph{matching matroids} and \emph{matching delta-matroids}.
  The set of endpoints of all \emph{maximum-cardinality} matchings
  forms a matching matroid, and has a simple combinatorial
  description using the Tutte-Berge formula. However, this is frequently
  not very informative, e.g., if the graph has a perfect matching then
  it is just the free matroid on the vertex set. 
  The matching delta-matroid, whose feasible sets are the endpoints of matchings in $G$,
  is a more complex object which captures a lot more of the structure
  of $G$; for example, the feasible sets of cardinality 2 are
  precisely the edges of $G$. Similarly, the Mader delta-matroid
  retains more information about $(G,\cS)$ than the Mader matroid does.
}

While Mader matroids are well established in the literature (see
Schrijver~\cite{SchrijverBook}), we have not been able to locate any
previous mention of Mader delta-matroids in the literature. However, the
slides from a presentation by Gyula Pap~\cite{PapSlides} mentions the
name and asks whether they are representable (to which we therefore
present a positive answer). 

\subsection{Mader-mimicking networks}

As the final contribution of the paper, we present a kernelization
application of delta-matroid representative sets. As mentioned,
one of the chief results of the kernelization applications of matroid
representative sets is of an \emph{exact cut sparsifier},
a.k.a.\ \emph{mimicking network} for terminal cuts.
More properly, let $G=(V,E)$ be a graph (possibly directed)
and let $T \subseteq V$ be a set of terminals, $|T|=k$.
A \emph{mimicking network} for $(G,T)$ is a graph $H=(V_H,E_H)$
with $T \subseteq V_H$ such that for every bipartition
$T=A \cup B$, the size of an $(A,B)$-mincut is the same in $H$ and in $G$.
Consider the special case that every vertex of $T$ has capacity~1
(i.e., the case of \emph{deletable terminals}). Using matroid
representative sets over gammoids, Kratsch and Wahlstr\"om were able to show
that in randomized polynomial time we can compute
a mimicking network $H=(V_H,E_H)$ for $(G,T)$ with $|V_H|=O(k^3)$.
Furthermore, $H$ can be constructed by computing a set $Z \subseteq V(G)$
with $|Z|=O(k^3)$ and projecting $G$ down to the vertex set $T \cup Z$. 

We show a similar statement for the more general Mader path packing
problem. Let $G=(V,E)$ be a graph, $T \subseteq V$ a set of terminals
and $\cS$ a partition of $T$. 
Say that a set $S \subseteq T$ is \emph{Mader matchable} 
if there exists a Mader path packing $\cP$ over $\cS$ such that
$V(\cP) \cap T = S$. We show that given $(G,\cS)$ with $|T|=k$
there is a $O(k^3)$-vertex \emph{Mader-mimicking network} in the
following sense. 

\begin{theorem}
  \label{ithm:mader-mimicking-network}
  Let an undirected graph $G=(V,E)$, a set $T \subseteq V$ and a
  partition $\cS$ of $T$ be given. Let $|T|=k$. In randomized
  polynomial time, we can compute a graph $H=(V_H,E_H)$ with
  $T \subseteq V_H$ such that for every $S \subseteq T$,
  $S$ is Mader matchable in $(G,\cS)$ if and only if it is Mader
  matchable in $(H,\cS)$.
\end{theorem}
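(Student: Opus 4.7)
The plan is to mirror the approach used by Kratsch and Wahlström for their cut-covering set lemma, but with the delta-matroid tools developed in this paper in place of their gammoid-based matroid arguments. The two ingredients are Theorem~\ref{ithm:mader-delta-matroid}, which supplies a direct representation of the Mader delta-matroid, and Theorem~\ref{ithm:dm-rep-set}, which supplies representative sets on directly represented delta-matroids with a low-rank terminal set.

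First, I would upgrade the representation from $T$ to $V$. Theorem~\ref{ithm:mader-delta-matroid} directly represents the delta-matroid $D=(T,\cF)$ on the terminal set only, but internal vertices of $V$ must play an explicit role if we are to select a useful subset $Z \subseteq V \setminus T$. Inspection of the construction proving Theorem~\ref{ithm:mader-delta-matroid}, which by analogy with the Tutte matrix should be a skew-symmetric matrix indexed by all of $V$ (possibly with a few auxiliary elements), ought to yield a skew-symmetric $\tilde M$ directly representing a delta-matroid $\tilde D$ on some ground set $V^+ \supseteq V$ with the property that for $S \subseteq T$, $S$ is Mader matchable in $(G,\cS)$ if and only if there is a $W \subseteq V^+ \setminus T$ with $S \cup W$ feasible in $\tilde D$. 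Applying Theorem~\ref{ithm:dm-rep-set} to $\tilde D$ with $T$ as the terminal set (of rank at most $k$) and with $\cY \subseteq \binom{V^+ \setminus T}{q}$ the collection of $q$-subsets of non-terminal elements, for a small constant $q$ (expected $q \leq 3$ to match the stated $O(k^3)$ bound), produces a representative $\rep{\cY}$ of size $O(k^q)$. I would then let $Z$ be the set of non-terminal vertices appearing in some element of $\rep{\cY}$, and obtain $H$ from $G$ by retaining $G[T \cup Z]$ and suppressing every internal path through $V \setminus (T \cup Z)$ into a single edge or short gadget, following the standard treatment of irrelevant vertices in mimicking-network constructions. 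This yields $|V_H| = O(k^q) = O(k^3)$.

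The main obstacle is bridging the gap between the constant-size witnesses $\rep{\cY}$ supplied by Theorem~\ref{ithm:dm-rep-set} and the a priori arbitrarily large internal-vertex sets $W$ used by actual Mader path packings: nothing forces the \emph{full} witness $W$ to have size $O(1)$. To close this gap I would exploit the pivoting operation on $\tilde M$ emphasized throughout the paper. Given any Mader-matchable $S \subseteq T$ with full internal witness $W$, the plan is to iteratively pivot $\tilde M$ against elements of $S \cup W$ in order to reduce $W$ to a bounded-size residual core $W_0 \subseteq W$ that remains feasible over a pivoted matrix still directly representing $(\tilde D, T)$ (up to a shift of the feasibility test by symmetric difference); the representative-set property for $q$-subsets then guarantees a replacement core $W_0' \subseteq Z$, which can be unpivoted back into a full Mader packing on $T \cup Z$ witnessing $S$. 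Verifying that this pivot-and-substitute procedure is valid simultaneously for every $S \subseteq T$, and pinning down the exact $q$ that yields the $O(k^3)$ bound, is where I expect most of the technical work to lie; the structural properties of pivots for directly represented delta-matroids developed in the paper should be the key tool.
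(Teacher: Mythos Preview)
Your proposal has the right high-level shape (extend the representation to all of $V$, apply Theorem~\ref{ithm:dm-rep-set} with terminal set $T$ to get $Z$ of size $O(k^3)$, then project to $T\cup Z$), but the core technical idea is missing and your ``main obstacle'' paragraph is aimed at the wrong difficulty.

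The direction you worry about---that a feasible $S$ has an arbitrarily large internal witness $W$, so a bounded-$q$ representative set cannot directly supply a small replacement---is not the issue at all. The reductions used to shrink $G$ (deleting simplicial non-terminals, adding edges) can only \emph{increase} Mader matchability, so feasibility of $S$ is preserved for free. The entire danger is the reverse: that a modification makes some \emph{infeasible} $S\subseteq T$ become feasible. Your pivot-and-substitute plan is therefore solving a non-problem, and you have no mechanism for the real one.

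The paper's key device, which you are missing, is to work in a supergraph where every non-terminal $v$ has a \emph{clone} $v^+$ (a false twin), and to let the $q$-sets be $Y(v)=\{v^+,(v^+)'\}$ rather than arbitrary $q$-subsets of non-terminals. Then $Y(v)$ extends $S$ in the pivoted delta-matroid precisely when cloning $v$ repairs $S$, which in turn (Lemma~\ref{lm:bypass-equals-clone}) happens precisely when adding an edge across some induced $P_3$ centred at $v$ repairs $S$. A structural lemma extracted from Mader's min-max formula (Lemma~\ref{claim:few-dangerous-clones}) shows that for any fixed repairable $S$, at most $3k$ vertices $v$ can repair it by cloning. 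Hence iterating the $q=2$ representative-set computation $3k$ times marks every dangerous vertex in a set $Z$ of size $O(k^3)$; any unmarked non-simplicial $v$ can safely have an edge added across one of its induced $P_3$'s. Interleaving this with deletion of simplicial non-terminals terminates in a graph on $T\cup Z$.

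Finally, your suggestion to ``suppress internal paths'' (i.e., bypass unmarked vertices wholesale) does not work: the paper gives an explicit example where cloning $v$ twice repairs $S$ but cloning once does not, so $v$ may be unmarked yet still unsafe to bypass in one step. The incremental approach---add a single edge, then restart---is essential.
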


Clearly, this also means that for every $S \subseteq T$, the maximum
cardinality of a Mader path packing $\cP$ with $V(\cP) \cap T \subseteq S$
is the same in $G$ as in $H$. Note that this covers undirected vertex
cuts for the case $|\cS|=2$ via Menger's theorem. Via a closer
inspection of Theorem~\ref{ithm:mader-mimicking-network} we 
can also show the following.

\begin{theorem}
  Let $G=(V,E)$ be an undirected graph and $T \subseteq V$ a set of
  terminal vertices, $|T|=k$. In randomized polynomial time, we can
  compute a graph $G'=(V',E')$, $T \subseteq V'$,
  such that $|V(G')|=O(k^3)$ and for every partition $\cT$ of $T$,
  the Mader path-packing numbers in $(G,\cT)$ and $(G',\cT)$
  are identical.
\end{theorem}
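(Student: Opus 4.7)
The plan is to apply Theorem~\ref{ithm:mader-mimicking-network} with the discrete partition $\cT_d$ of $T$ (in which every terminal forms its own block) and verify, by a closer inspection of the representative-sets construction inside that proof, that the resulting $O(k^3)$-vertex graph preserves the Mader path-packing number for every partition of $T$ simultaneously.

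The algebraic setup is central. By Theorem~\ref{ithm:mader-delta-matroid}, the discrete Mader delta-matroid $D_{G,\cT_d}$ admits a direct skew-symmetric representation $A^* \in \F^{T\times T}$, so that $S \in \cF_{G,\cT_d}$ iff $\Pf(A^*[S]) \neq 0$. For any partition $\cT$ of $T$, the matrix $A^*_\cT$ obtained from $A^*$ by zeroing out every entry $A^*[u,v]$ with $u,v$ in the same block of $\cT$ is itself skew-symmetric and directly represents the delta-matroid $D_{G,\cT}$: the Pfaffian of a principal submatrix expands as a signed sum over perfect matchings of $S$, and zeroing within-block entries kills precisely the terms contributed by non-$\cT$-compatible matchings. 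Consequently $\nu_{\mathrm{Mader}}(G,\cT) = \tfrac{1}{2} \max\{|S| : \Pf(A^*_\cT[S]) \neq 0\}$, so the entire family of Mader path-packing numbers is controlled by the single matrix $A^*$.

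Next I would invoke Theorem~\ref{ithm:mader-mimicking-network} applied to $(G,T,\cT_d)$. Its proof selects a vertex set $Z \subseteq V \setminus T$ with $|Z| = O(k^3)$ via delta-matroid representative sets (Theorem~\ref{ithm:dm-rep-set}) and returns $G' = G[T \cup Z]$. Recast through the sieving polynomial framework of Theorem~\ref{ithm:sieve-rep-set}, the selection of $Z$ is governed by a family of bounded-degree polynomials in a fixed vector map $\psi$, with degree bound $d$ controlled by the parameter $q$. The crucial observation is that the enlarged polynomial family $\{\Pf(A^*_\cT[S])\}_{\cT,S}$ can be handled with the same $\psi$ and the same degree bound: zeroing entries of $A^*$ amounts to suppressing certain monomials in the Pfaffian and neither alters the vector map nor increases the degree. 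Since Theorem~\ref{ithm:sieve-rep-set} bounds the representative-set size by $O(r^d)$ independently of the cardinality of the property family, the same $Z$ of size $O(k^3)$ simultaneously preserves the non-vanishing of $\Pf(A^*_\cT[S])$ for every partition $\cT$ and every $S \subseteq T$. Therefore $G'$ is a universal Mader-mimicking network.

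The main obstacle is verifying that the representative-sets/sieving-polynomials machinery of Theorems~\ref{ithm:dm-rep-set} and~\ref{ithm:sieve-rep-set}, as used in the proof of Theorem~\ref{ithm:mader-mimicking-network}, genuinely admits this simultaneous inclusion of the partition-restricted Pfaffians into a single sieving polynomial family; absent such a verification, one might fear that an arbitrary $Z$ preserving $\cF_{G,\cT_d}$ could produce a graph whose $\cT$-zeroed Pfaffian ranks disagree with those of $G$. Resolving this requires opening up the representative-sets construction to exhibit the underlying vector map $\psi$ and degree bound explicitly, and confirming that they depend only on the entries of $A^*$ rather than on any specific partition. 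This is the ``closer inspection'' to which the paper alludes, and it is what upgrades the preservation of $\cF_{G,\cT_d}$ into a preservation of $\nu_{\mathrm{Mader}}(G,\cT)$ for all $\cT$.
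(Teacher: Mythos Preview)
Your central algebraic claim---that zeroing the within-block entries of the discrete Mader delta-matroid representation $A^*$ yields a representation of $D_{G,\cT}$---is false. The entries of the contracted matrix $A^*$ are not indexed by $T$-paths; after pivoting, a perfect matching of $S$ in the support graph of $A^*$ has no direct correspondence with a pairing of path endpoints in a $T$-path packing, so zeroing an entry does not ``kill precisely the non-$\cT$-compatible'' packings. Concretely, let $t_1,t_2,t_3,t_4$ be terminals each adjacent only to a single hub $v$. Every pair $\{t_i,t_j\}$ is $\cT_d$-feasible, so all off-diagonal entries of $A^*$ are nonzero; the full set $T$ is infeasible, so $\Pf A^*=0$ and, in characteristic~2,
\[
A^*[t_1,t_2]A^*[t_3,t_4]+A^*[t_1,t_4]A^*[t_2,t_3]=A^*[t_1,t_3]A^*[t_2,t_4].
\]
For $\cT=\{\{t_1,t_3\},\{t_2,t_4\}\}$ your zeroed matrix has $\Pf A^*_\cT[T]=A^*[t_1,t_2]A^*[t_3,t_4]+A^*[t_1,t_4]A^*[t_2,t_3]=A^*[t_1,t_3]A^*[t_2,t_4]\neq 0$, wrongly declaring $T$ to be $\cT$-Mader-matchable when in fact only one path can traverse $v$. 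So the passage from the discrete partition to all partitions via a single sieving family does not go through. (A smaller inaccuracy: the proof of Theorem~\ref{ithm:mader-mimicking-network} does not output $G[T\cup Z]$ for some computed $Z$; it iteratively adds edges along induced $P_3$'s and deletes simplicial non-terminals, so the object you propose to ``closely inspect'' is not what you describe.)

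The paper's route is entirely different and combinatorial rather than algebraic. It attaches $k$ pendants $t^1,\dots,t^k$ to every terminal $t$, takes the single partition $\cT'$ whose $i$-th block is $\{t^i:t\in T\}$, and runs the Mader-mimicking construction on this one instance. Although $|T'|=k^2$, the terminal set $T'$ has rank only $O(k)$ in the underlying delta-matroid (its neighbourhood in the auxiliary split graph has size $O(k)$), so the \emph{rank} version of the delta-matroid representative-sets theorem (Theorem~\ref{thm:dm-extend-rank}) still gives a marking set of size $O(k^3)$. One then checks that any partition $\cT$ of $T$ can be simulated inside $(G',\cT')$ by selecting suitable pendants, so preserving $\cT'$-matchability suffices to preserve $\nu(\cdot,\cT)$ for all $\cT$.
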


This is the path-packing dual to the recently studied notion of a
\emph{multicut-mimicking network}~\cite{Wahlstrom22talg},
i.e., a graph $H=(V_H,E_H)$ with $T \subseteq V_H$ such that the
minimum cardinality of a multicut is the same in $G$ and $H$ for every
partition of $T$. As noted in~\cite{Wahlstrom22talg}, the
cut-covering sets lemma suffices to prove a 2-approximate multicut-mimicking network 
of size $O(k^3)$ for edge multicuts (i.e., a 2-approximate multicut sparsifier). 
The above result implies the same for vertex multicuts over $T$.

\begin{corollary} \label{cor:two-approx-vmc}
  Let $G=(V,E)$ be an undirected graph and $T \subseteq V$ a set of
  terminal vertices. In polynomial time we can compute an undirected
  graph $G'=(V',E')$ such that $T \subseteq V'$, $|V'|=O(|T|^3)$
  and for any collection of cut requests $R \subseteq \binom{T}{2}$
  the size of a minimum multicut with respect to $R$ differs by
  at most a factor 2 in $G$ and $G'$. 
\end{corollary}

\paragraph{Structure of the paper.} 
Section~\ref{sec:prel} reviews further preliminaries.
Section~\ref{sec:sieving-polynomials} provides basic results on
sieving polynomial families.
Section~\ref{sec:delta-matroid-repsets} gives representative sets
statements for delta-matroids and matching lower bounds.
Section~\ref{sec:mader-delta} derives the representation for Mader
delta-matroids. 
Section~\ref{sec:mader-mimic} derives the results on Mader-mimicking
networks. 
Finally, Section~\ref{sec:conc} concludes the paper and reviews some
open questions. 

\section{Preliminaries}
\label{sec:prel}

\subsection{Matroids}

We review standard definition and notation on matroids. For more, see
Oxley~\cite{OxleyBook2}. 

A \emph{matroid} is a pair~$M=(U,\cI)$, where~$U$ is the \emph{ground set} 
and~$\cI \subseteq 2^U$ a collection of subsets of $U$, 
called \emph{independent sets}, subject to the following axioms:
\begin{enumerate}
\item $\emptyset \in \cI$
\item If $I_1 \subseteq I_2$ and $I_2 \in \cI$ then $I_1 \in \cI$
\item If $I_1, I_2 \in \cI$ and $|I_2|>|I_1|$, then there
  exists an element $x \in (I_2 \setminus I_1)$ such that $I_1+x \in \cI$
\end{enumerate}
Here and in the rest of the paper, $I+x$ and $I-x$ for a set $I$ and element $x$ 
denote $I \cup \{x\}$ and $I \setminus \{x\}$, respectively.
A set~$I \subseteq U$ is \emph{independent} if~$I \in \cI$. A
set~$B \in \cI$ is a \emph{basis} of~$M$ if it is a maximal
independent set. Note that every basis has the same cardinality.  For
a set~$X \subseteq U$, the \emph{rank}~$r(X)$ of~$X$ is the largest
cardinality of an independent set~$I \subseteq X$.
The rank of~$M$ is~$r(M):=r(U)$.
We say that a set $S \subseteq U$ \emph{spans} $M$ if $r(S)=r(M)$. 
We will be concerned with the class of \emph{linear} matroids, defined
as follows.

\begin{definition} 
  Let~$A$ be a matrix over a field~$\F$ and let~$U$ be the set of
  columns of~$A$. Let~$\cI$ be the set of all sets~$X \subseteq U$
  such that the set of column vectors indexed by $X$ is linearly
  independent over $\F$. Then~$(U,\cI)$ defines a matroid~$M$, and the
  matrix~$A$ \emph{represents}~$M$. A matroid is \emph{representable}
  (over a field $\F$) if there is a matrix (over $\F$) that
  represents it. A matroid representable over some field is called
  \emph{linear}. 
\end{definition}

It is well known that row operations on a matrix $A$ do not affect the
sets of columns of $A$ that are linearly independent. Thus, any
representation $A$ for a matroid $M$ of rank $r=r(M)$ can be reduced
to one where $A$ has only $r$ rows.
For a matroid $M=(U,\cI)$, the \emph{dual} of $M$ is the matroid $M^*=(U,\cI^*)$
where as set $S \subseteq U$ is independent if and only if $U \setminus S$ spans $M$
For a set $S \subseteq U$, the \emph{deletion} $M \setminus S$ of $S$ in $M$ 
defines the matroid on ground set $U \setminus S$, where independent
sets are sets $I \subseteq U \setminus S$ that are independent in $M$.
The \emph{contraction} $M/S$ is defined as $(M^* \setminus S)^*$.
If $S$ is independent in $M$, then the independent sets of $M/S$
are precisely the sets $T \subseteq (U \setminus S)$ such that
$S \cup T$ is independent in $M$. All these operations (dual,
deletion, contraction) can be performed efficiently on linear matroids.
For more on matroids, see Oxley~\cite{OxleyBook2}.  For more specifically on matroid
applications in parameterized complexity, see the paper of
Marx~\cite{Marx09-matroid} that introduced them to the area, 
as well as more recent textbooks of Cygan et al.~\cite{CyganFKLMPPS15PCbook} 
and Fomin et al.~\cite{Book_kernelization_FLSZ19}. 

We need one more operation on matroids. Let $M=(U,\cI)$ be a matroid and
$k \leq r(M)$ an integer. The \emph{$k$-truncation} of $M$ is the
matroid $M'$ where a set $I$ is independent if and only if $I \in \cI$
and $|I| \leq k$. Given a linear representation of $M$, a matrix
representing the $k$-truncation of $M$ (possibly over an extension
field of the original field) can be constructed in
polynomial time; see Marx~\cite{Marx09-matroid}
and Lokshtanov et al.~\cite{LokshtanovMPS18TALG}.

We review a few basic classes of useful matroids.
\begin{itemize}
\item Let $U$ be a ground set and $q \leq |U|$.  The \emph{uniform
    matroid} of rank $q$ over $U$ is the matroid where a set
  $I \subseteq U$ is independent if and only if $|I| \leq q$.
\item Let $G=(V,E)$ be a graph. The \emph{graphic matroid} of $G$ has
  ground set $E$, where a set $F \subseteq E$ is independent if and
  only if $F$ contains no cycles. Its dual, the \emph{co-graphic matroid},
  is the matroid where a set $F \subseteq E$ is independent if and
  only if $G$ and $G-F$ has the same number of connected components. 
\item Let $D=(V,E)$ be a digraph and $S \subseteq V$ a set of source terminals.
  A set $T \subseteq V$ is \emph{linked to $S$} if there are $|T|$
  pairwise vertex-disjoint paths from $S$ to $T$, where a path is
  allowed to have length zero (i.e., we allow $s \in S \cap T$
  and $S$ is linked to itself). Then the sets of vertices linked to $S$ 
  form the independent set of a matroid, called a \emph{gammoid}. 
\end{itemize}
All the above matroid types are representable over any sufficiently
large field, and representations for them can be efficiently computed,
although in the case of gammoids only randomized constructions are
known. Gammoids have been particularly important in previous
applications of matroids to polynomial
kernelization~\cite{KratschW20JACM}.

\subsection{Skew-symmetric matrices}

A matrix $A \in \F^{n \times n}$ over some field $\F$ is \emph{skew symmetric} if $A^T=-A$.
If $\F$ has characteristic 2, then we furthermore assume that
$A(i,i)=0$ for every $i \in [n]$. 
We will need skew-symmetric matrices for their connections to
delta-matroids (see below), but we review some facts about them
separately. 

For $X, Y \subseteq [n]$, we let $A[X,Y]$ denote the submatrix of $A$
formed of rows indexed by $X$ and columns indexed by $Y$, and
$A[X]=A[X,X]$ is the \emph{principal submatrix} indexed by $X$. 
For $i, j \in [n]$ we let $A(i,j)$ denote the value in row $i$,
column $j$ of $A$.  The \emph{support graph} of $A$ is the graph $G$
with vertex set $[n]$ and $ij \in E(G)$ if and only if $A(i,j) \neq 0$.
Note that $G$ is an undirected, simple graph.

Every skew-symmetric matrix of odd dimension is singular.  If $A$ is a
skew-symmetric matrix of even dimension, then its determinant can be
expressed through the \emph{Pfaffian}:
\[
  \det A = \paren{\Pf A}^2.
\]
The Pfaffian of $A$ can be defined as
\[
  \Pf A = \sum_M \sigma_M \prod_{ij \in M} A(i,j),
\]
where $M$ ranges over all perfect matchings of $[n]$ and
$\sigma_M \in \{1,-1\}$ is a sign term of the matching.  It is clearly
sufficient to consider terms $M$ that form perfect matchings in the
support graph of $A$. The Pfaffian of $A$ can be computed efficiently
in polynomial time.

For a skew-symmetric matrix $A$, every basis of the row or column
space of $A$ is realized by a principal submatrix, i.e., if $A$ is a
skew-symmetric matrix with rows and columns indexed by a set $V$, 
and if $B \subseteq V$ indexes a basis of the column space of $A$ then
$A[B]$ is non-singular.

An important operation on skew-symmetric matrices is \emph{pivoting}.
Let $A \in \F^{n \times n}$ be skew-symmetric and let $S \subseteq [n]$ 
be such that $A[S]$ is non-singular.  Order the rows and columns of
$A$ so that
\[
  A =
  \begin{pmatrix}
    B & C \\
    -C^T& D
  \end{pmatrix},
\]
where $A[S]=B$.  Then the \emph{pivoting} of $A$ by $S$ is
\[
  A * S =
  \begin{pmatrix}
    B^{-1} & B^{-1}C \\
    CB^{-1}& D + C^T B^{-1} C
  \end{pmatrix}.
\]
Note that this is a well-defined, skew-symmetric matrix.
It then holds, for any $X \subseteq [n]$, that
\[
  \det (A*S)[X] = \frac{\det A[X \Delta S]}{\det A[S]},
\]
where $\Delta$ denotes symmetric set difference.  In particular,
$(A*S)[X]$ is non-singular if and only if $A[X \Delta S]$ is
non-singular.

The following observation will be useful.

\begin{lemma}
  Let $A$ be a skew-symmetric matrix with rows and columns indexed
  from a set $V$ and let $B$ be a basis of $A$.  Then for all
  $u, v \in V \setminus B$, we have $(A*B)[u,v]=0$.
\end{lemma}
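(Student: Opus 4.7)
The plan is to exploit the determinantal pivoting identity $\det(A*S)[X] = \det A[X \Delta S] / \det A[S]$ from the preliminaries, applied to the smallest possible sets $X$ outside $B$. The diagonal case $u=v$ is immediate: $(A*B)[u,u]=0$ since $A*B$ is skew-symmetric with zero diagonal (including in characteristic $2$, by the convention adopted here). So the interesting case is $u \neq v$.

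For $u \neq v$ in $V \setminus B$, I would consider the $2 \times 2$ principal submatrix $(A*B)[\{u,v\}]$ and evaluate its determinant via the pivoting identity:
$$\det (A*B)[\{u,v\}] = \frac{\det A[\{u,v\} \Delta B]}{\det A[B]} = \frac{\det A[B \cup \{u,v\}]}{\det A[B]},$$
where the second equality uses $u,v \notin B$. The denominator is nonzero because $B$ is a basis, so $A[B]$ is non-singular (this is exactly the fact recalled in the preliminaries, that bases of the column space of a skew-symmetric matrix are realised by principal submatrices). The numerator vanishes for a plain rank reason: since $B$ indexes a basis, the columns of $A$ indexed by $B$ span the full column space of $A$, so the columns indexed by $u$ and $v$ are linear combinations of them. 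Hence $A[V, B \cup \{u,v\}]$ has rank at most $|B|$, and so does its $(|B|+2)\times(|B|+2)$ principal submatrix $A[B \cup \{u,v\}]$; in particular $\det A[B \cup \{u,v\}] = 0$.

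Finally I would extract the off-diagonal entry from the resulting $2 \times 2$ skew-symmetric block. The matrix $(A*B)[\{u,v\}]$ is itself skew-symmetric with zero diagonal, so it has the shape $\bigl(\begin{smallmatrix}0 & a\\ -a & 0\end{smallmatrix}\bigr)$ with $a = (A*B)[u,v]$, and its determinant is $a^2$. Combined with $\det(A*B)[\{u,v\}] = 0$ this forces $a = 0$, as required. I do not foresee any real obstacle: the only mildly delicate point is the characteristic-$2$ case, but since we retain the zero-diagonal convention and the $2\times 2$ determinant is still $a^2$, the conclusion $a=0$ is unaffected.
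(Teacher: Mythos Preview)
Your proof is correct and follows essentially the same approach as the paper: both apply the pivoting identity to the $2\times 2$ principal submatrix $(A*B)[\{u,v\}]$, use that $A[B\cup\{u,v\}]$ is singular because $B$ is already a basis, and read off $(A*B)[u,v]=0$ from the resulting $2\times 2$ skew-symmetric determinant. The only cosmetic difference is that the paper phrases it as a proof by contradiction, whereas you argue directly.
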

\begin{proof}
  Since $B$ is a basis, $A[B]$ is non-singular so the operation $A*B$ is well-defined.
  Let $(A*B)[u,v]=x$ for some  $u, v \in V \setminus B$ and assume
  towards a contradiction that $x \neq 0$. Since the diagonal is zero by
  assumption, we have $u \neq v$.  Then
  \[
    \det (A*B)[\{u,v\}] = -x^2,
  \]
  hence $\det A[B \cup \{u,v\}] = -x^2 \det A[B] \neq 0$.
  But $B \cup \{u,v\}$ is singular by assumption since $B$ is a
  basis. Hence $x=0$. 
\end{proof}

\subsection{Delta-matroids}
\label{sec:delta-matroids}
Delta-matroids are generalizations of matroids, defined by Bouchet~\cite{Bouchet87DMone}
but also independently considered under several other names; see the
survey of Moffatt~\cite{Moffatt19deltamatroids}.
A delta-matroid is a pair $(V, \cF)$ where $\cF \subseteq 2^V$ is a
non-empty collection of \emph{feasible sets},
subject to the following axiom, known as the \emph{symmetric exchange axiom}.
\begin{equation}
  \forall A, B \in \cF\, \forall x \in A \Delta B\, \exists y \in A \Delta B \colon A \Delta \{x,y\} \in \cF.\label{eq:sym-exch}
\end{equation}
Note that there is no requirement that $x \neq y$. 
Both the independent sets and the bases of a matroid satisfy the symmetric exchange axiom.
In particular, the feasible sets of a delta-matroid define the bases of a matroid if and only if
all feasible sets have the same cardinality, and the feasible sets of
maximum cardinality always form the set of bases of a matroid.

For a delta-matroid $D=(V,\cF)$ and $S \subseteq D$, the
\emph{twisting} of $D$ by $S$ is the delta-matroid
$D \Delta S = (V, \cF \Delta S)$ where
\[
  \cF \Delta S = \{F \Delta S \mid F \in \cF\}.
\]
This generalizes some common operations from matroid theory. 
The \emph{dual} delta-matroid of $D$ is $D \Delta V(D)$.
For a set $S \subseteq V(D)$, the \emph{deletion} of $S$ from $D$ 
refers to the set system
\[
  D \setminus S = (V \setminus S, \{F \in \cF \mid F \subseteq V \setminus S\}),
\]
and the \emph{contraction} of $S$ refers to
\[
  D/S = (D \Delta S) \setminus S = (V \setminus S, \{F \setminus S \mid F \in \cF, S \subseteq F\}).
\]
Both $D \setminus S$ and $D/S$ are delta-matroids if the
resulting feasible sets are non-empty.

A delta-matroid $D=(V,\cF)$ is \emph{even} if every feasible set has
the same parity (i.e., $|A| \equiv |B| \pmod 2$ for all $A, B \in \cF$;
note that there is no requirement that the sets themselves are even).
It is \emph{normal} if $\emptyset \in \cF$. 
Let $A \in \F^{n \times n}$ be a skew-symmetric matrix with rows and
columns indexed by $V$, and let
\[
  D(A)=(V, \{F \subseteq V \mid A[F] \text{ is non-singular}\}).
\]
Then $D(A)$ defines a delta-matroid. Furthermore,
$D(A)$ will be even and normal.  A delta-matroid $D$ is \emph{linear}
if $D=D(A) \Delta S$ for some $S \subseteq V(D)$.  We say that $A$
\emph{directly represents} $D$ if $D=D(A)$. Note that if $F$ is
feasible in $D(A)$, then the pivoting $A * F$ directly represents the
twisting $D(A) \Delta F$ of $D(A)$.
Representations of $D/S$ and $D \setminus S$ can be
easily produced from a representation of $D$ by combinations of
twisting/pivoting and deleting rows and columns (in the former case
assuming there is a feasible set $F$ such that $S \subseteq F$, as
otherwise $D/S$ is not a delta-matroid).
It is easy to see that a linear delta-matroid can be directly
represented if and only if it is normal. In this paper, we deal
purely with normal, linear, directly represented delta-matroids
(although an extension to the more general case is given in Section~\ref{ssec:rep-new}).

We note that every linear delta-matroid induces a (different) linear
matroid.  Specifically, let $A$ be a skew-symmetric matrix directly
representing a delta-matroid $D=D(A)$.  Let $B$ be a basis of the row
space of $A$.  Then as previously noted, $A[B]$ is non-singular, i.e.,
$B$ is feasible in $D(A)$.  Thus the matroid represented by $A$ is
precisely the matroid whose bases are the maximum-cardinality feasible
sets in $D(A)$.

As an alternative to the above representation, which may be called the
\emph{twist representation} of a linear delta-matroid, recent
work~\cite{KoanaW25stacs} has proposed a \emph{contraction representation},
which is easier to work with algorithmically. In addition, there is a
notion of \emph{projected linear delta-matroids}, which are somewhat
more general than linear delta-matroids. However, since these notions
will not be used for the majority of the paper, we defer details to
Section~\ref{ssec:rep-new}. 

\paragraph{Standard delta-matroid constructions.}
\label{sec:dm-constructions}
We review some standard constructions of linear delta-matroids.
First, let $M=(V,\cI)$ be a linear matroid represented by a matrix
$A$, and let $B$ be a basis of $M$.  Using row operations and
reordering columns, we can assume that $A$ has the form
\[
  A = (I\, N),
\]
where $B$ indexes $I$ and $V \setminus B$ indexes $N$.
Define the skew-symmetric matrix
\[
  A_B =
  \begin{pmatrix}
    O & N \\
    -N^T & O
  \end{pmatrix},
\]
where $B$ indexes the top-left part. Then $D(A_B) \Delta B$ is the
delta-matroid whose feasible sets are the bases of $M$, that is, every
linear matroid yields a linear delta-matroid.  Note that a matroid
necessarily cannot be directly represented as a delta-matroid since it
is not normal. 
The delta-matroid directly represented by $A_B$ is sometimes referred
to as a \emph{twisted matroid}.

Another important class is \emph{matching delta-matroids}.  Let
$G=(V,E)$ be a graph, and let $D(G)=(V, \cF)$ where
\[
\cF=\{S \subseteq V \mid G[S] \text{ has a perfect matching}\}.  
\]
Then $D(G)$ is a delta-matroid.  Furthermore, it is directly
representable via the Tutte matrix.  Let $V=[n]$ (w.l.o.g.)
and define a matrix $A_G$ as
\[
  A_G(i,j) =
  \begin{cases}
    x_{ij} & ij \in E(G), i<j \\
    -x_{ji} & ij \in E(G), i>j \\
    0 & ij \notin E(G),
  \end{cases}
\]
for $i, j \in [n]$, where the $x_{ij}$ are distinct formal indeterminates.
Then $D(A_G)$ is the matching delta-matroid of $G$, and it can be
represented over any sufficiently large field via the Schwartz-Zippel lemma~\cite{Schwartz80,Zippel79}.
Moreover, the Pfaffian of $A_G$, taken as a polynomial over the
variables $x_{ij}$, enumerates perfect matchings of $G$. Let $G=(V,E)$
be a graph and let $\cM$ be the set of all perfect matchings of $G$.
Let $V$ index the rows and columns of $A_G$ in the natural way.
Then
\[
\Pf A_G = \sum_{M \in \cM} \sigma_M \prod_{e \in M} x_e.
\]
Similarly, for $S \subseteq V$, $\Pf A_G[S]$ enumerates perfect
matchings of $G[S]$.

Other interesting constructions include the following. Let $D_1=(V_1,\cF_1)$ and
$D_2=(V_2,\cF_2)$ be linear delta-matroids represented over a common field, over
ground sets that are not necessarily disjoint. The \emph{delta-matroid union}
$D=D_1 \cup D_2$ is the delta-matroid $D=(V_1 \cup V_2, \cF)$
with feasible sets $\cF=\{F_1 \cup F_2 \mid F_1 \in \cF_1, F_2 \in
\cF_2, F_1 \cap F_2 = \emptyset\}$. 
The \emph{delta-sum} $D=D_1 \Delta D_2$ is the delta-matroid
$D=V_1 \cup V_2, \cF)$ with feasible sets
with feasible sets $\cF=\{F_1 \Delta F_2 \mid F_1 \in \cF_1, F_2 \in \cF_2\}$.
Both the union and delta-sum of $D_1$ and $D_2$ are linear, and a
representation can be produced in randomized linear time~\cite{KoanaW25stacs}.

For example, let $G=(V, E_R \cup E_B)$ be an edge-coloured graph where
$E_R$ contains red edges and $E_B$ contains blue edges. Let $D_1$ and $D_2$
be the matching delta-matroids of $G_R=(V, E_R)$ and $G_B=(V,E_B)$, respectively.
Then $D=D_1 \Delta D_2$ is a delta-matroid where the feasible sets
correspond to endpoints in alternating path packings in $G$. 
The delta-matroid union and delta-sum operations were introduced by Bouchet~\cite{Bouchet89dam}
and Bouchet and Schwärzler~\cite{BouchetS98dm}, although they were not
shown to retain linear representations until recently~\cite{KoanaW25stacs}.

\section{Sieving polynomial families}
\label{sec:sieving-polynomials}

We now present the framework in which the rest of the results of the
paper are produced. This is the first contribution of the paper.
While it is not ground-breaking as a reinterpretation of the linear
matroid representative sets theorem, the new perspective is essential
to the application to linear delta-matroids in Section~\ref{sec:delta-matroid-repsets}.
The method builds on basic results about non-vanishing multivariate
polynomials; we need some definitions.

Let $\F$ be a field. Let $x \in \F^q$ be a vector over $\F$ and let
$p \colon \F^q \to \F$ be a polynomial over $\F$. We use the shorthand
\[
  p(x) := p(x_1,\ldots,x_q)
\]
for the evaluation of $p$ over $x$, where $x_i$, $i \in [q]$ denote
the elements of $x$. We say that \emph{$p$ vanishes on $x$} if
$p(x)=0$, and more generally for a set of vectors $S \subseteq \F^q$,
that \emph{$p$ vanishes on $S$} if $p(x)=0$ for every $x \in S$. 
Furthermore, let $B=\{p_1,\ldots,p_r\}$ be a set of polynomials over
variables $X=\{x_1,\ldots,x_q\}$.  We say that $B$ \emph{spans} $p$ if
\[
  p(x_1,\ldots,x_q)=\sum_{i=1}^r \alpha_i p_i(x_1,\ldots,x_q)
\]
for some $\alpha_i \in \F$.  We note a standard bound on a basis for
bounded-degree multivariate polynomials. 

\begin{lemma}
  \label{lm:degree-basis}
  Let $\F$ be a field. The set of all $q$-ary multivariate polynomials
  over $\F$ of total degree at most $d$ is spanned by a set of
  cardinality at most $\binom{q+d}{d}$ (i.e., $O(q^d)$ when $d$ is a
  constant). 
\end{lemma}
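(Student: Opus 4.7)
The plan is to exhibit an explicit spanning set, namely the monomials of total degree at most $d$, and to count them. Since every polynomial is by definition a linear combination of monomials, and since any monomial appearing in a polynomial of total degree at most $d$ must itself have total degree at most $d$, the collection
\[
\cB = \{x_1^{a_1} x_2^{a_2} \cdots x_q^{a_q} \mid a_1,\ldots,a_q \in \N,\ a_1 + \cdots + a_q \leq d\}
\]
spans the space of $q$-ary polynomials over $\F$ of total degree at most $d$.

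It then remains to bound $|\cB|$. First I would rewrite the defining inequality as an equality by introducing a slack variable $a_0 \geq 0$ with $a_0 + a_1 + \cdots + a_q = d$. The elements of $\cB$ are therefore in bijection with the nonnegative integer solutions of this equation in $q+1$ variables. A standard stars-and-bars argument then yields
\[
|\cB| = \binom{d + q}{q} = \binom{q+d}{d},
\]
as claimed. For constant $d$, this is $O(q^d)$ by expanding the binomial coefficient as a degree-$d$ polynomial in $q$.

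There is essentially no obstacle here: the result is a textbook fact about the dimension of the space of polynomials of bounded total degree, and the only thing to verify is the monomial count. The only mild subtlety worth noting is that $\cB$ spans the full polynomial space as a set of formal expressions (hence certainly as functions on $\F^q$), which is all that is needed for later applications; whether $\cB$ is linearly independent as a set of functions depends on $|\F|$, but this is irrelevant for the spanning claim.
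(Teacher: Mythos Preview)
Your proof is correct and follows essentially the same approach as the paper: exhibit the monomials of total degree at most $d$ as a spanning set and count them. The paper simply cites a reference for the count $\binom{q+d}{d}$, whereas you supply the stars-and-bars argument explicitly, but the content is the same.
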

\begin{proof}
  The number of monomials of total degree at most $d$ over a set of
  $q$ variables is bounded by $\binom{q+d}{d}$~\cite{LokshtanovPTWY17SODA}. 
  Clearly, the set of all such monomials over $X$ spans every
  polynomial over $X$ of degree at most $d$.  
\end{proof}

Furthermore, let $S \subseteq \F^q$ be a set of vectors and
$\rep S \subseteq S$.  We say that $\rep S$ is \emph{representative
  for $S$ with respect to $B$} if, for every polynomial $p$ spanned by
$B$ such that $p$ does not vanish on $S$, there exists a vector
$x \in \rep S$ such that $p$ does not vanish on $x$.

The following is then our basic technical tool. 

\begin{lemma}
  \label{lm:poly-nonvanish}
  Let $\F$ be a field, let $S \subseteq \F^q$ be a finite set of
  vectors for some $q$, and let $B$ be a set of $q$-ary polynomials
  over $\F$. In polynomial time in $||S||$ (i.e., the coding length of $S$),
  we can compute a set $\rep S \subseteq S$ with $|\rep S| \leq |B|$
  such that $\rep S$ is representative for $S$ with respect to $B$.
\end{lemma}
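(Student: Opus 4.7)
The plan is to reduce the statement to a standard row-reduction argument on an evaluation matrix. Each vector $x \in S$ induces a linear functional on the $\F$-span of $B$ by the evaluation map $p \mapsto p(x)$. Equivalently, enumerate $B = \{p_1,\ldots,p_m\}$ with $m = |B|$ and form the matrix $M \in \F^{S \times B}$ whose entries are $M[x, p_j] = p_j(x)$. A polynomial $p = \sum_{j=1}^m \alpha_j p_j$ spanned by $B$ then satisfies $p(x) = \langle \alpha, M[x,\cdot]\rangle$ for each $x \in S$, so $p$ vanishes on $x$ if and only if $\alpha$ is orthogonal to the row of $M$ labelled by $x$.

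First, I would compute $M$ (each entry is a polynomial evaluation) and extract a maximal linearly independent subset of its rows using Gaussian elimination; call the corresponding subset of ground set vectors $\rep{S}$. By construction $|\rep{S}| \leq \mathrm{rank}(M) \leq m = |B|$, and the rows of $\rep{S}$ span the same row space in $\F^B$ as all of the rows of $M$.

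Next, I would verify the representative property by contrapositive. Suppose $p = \sum_j \alpha_j p_j$ vanishes on every $x \in \rep{S}$; then $\alpha$ is orthogonal to every row of $M$ indexed by $\rep{S}$, and hence by the spanning property to every row of $M$, so $p$ vanishes on all of $S$. Equivalently, if $p$ does not vanish on $S$, it cannot vanish on $\rep{S}$, which is what is required.

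Finally I would address efficiency. Since $S$ is explicitly given as part of the input, $|S|$ is polynomial in $\|S\|$, and evaluating any $p_j \in B$ at a point $x \in \F^q$ takes polynomial time in the standard encoding (the relevant applications take $B$ to be the monomial basis of degree-$\leq d$ polynomials, for which evaluation is immediate). Gaussian elimination on the $|S| \times |B|$ matrix $M$ then runs in polynomial time and directly returns a maximal independent set of rows, giving the desired $\rep{S}$. There is no serious obstacle here; the only substantive point is recognising that the algebra of vanishing-on-$S$ is linear in the coefficients $\alpha$, after which the result is essentially a rank bound plus row reduction.
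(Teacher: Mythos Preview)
Your proof is correct and essentially identical to the paper's: the paper also maps each $x \in S$ to the evaluation vector $\psi(x)=(p_1(x),\ldots,p_r(x))$, takes a basis of these vectors, and argues that the kernel of the linear functional $\alpha \mapsto \alpha \cdot \psi(x)$ cannot contain a basis unless it contains all of $Y$. The only cosmetic difference is that you phrase correctness via contrapositive and explicitly invoke Gaussian elimination on the evaluation matrix, while the paper speaks of a ``lower-dimensional subspace'' argument on the set $Y=\{\psi(x)\mid x\in S\}$.
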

\begin{proof}
  Let $B=\{p_1,\ldots,p_r\}$ according to an arbitrary enumeration of $B$.
  For every $x \in S$, define the vector $\psi(x)=(p_i(x))_{i=1}^r$.  
  Let $Y=\{\psi(x) \mid x \in S\}$, and let
  $\rep Y \subseteq Y$ be a basis for (the space spanned by) $Y$.  Finally, let
  $\rep S \subseteq S$ contain one element $x \in S$ with $\psi(x)=y$
  for every $y \in \rep Y$.  Then clearly $|\rep S| \leq |B|$.
  Furthermore, let $p=\sum_i \alpha_i p_i$ be a polynomial spanned by $B$, 
  and let $x \in S$ be such that $p(x) \neq 0$.  
  We claim that $p$ does not vanish on $\rep S$.
  Indeed, elements $x' \in S$ such that $p(x')=0$ correspond to
  vectors $\psi(x') \in Y$ such that $\alpha \cdot \psi(x')=0$ (where $\cdot$
  denotes the dot product), and these vectors lie in a subspace of the
  space spanned by $Y$. Given that there exists a vector $\psi(x) \in Y$
  such that $\alpha \cdot \psi(x) \neq 0$, this subspace
  has a lower dimension than the whole space spanned by $Y$,
  and no basis of $Y$ can be contained in a lower-dimensional subspace.
\end{proof}

Naturally, if $B$ is not linearly independent then the bound improves
to $|\rep S| \leq \mathrm{dim}(\mathrm{span}(B))$, i.e., the size of a
basis of $B$. However, we keep the above form for simplicity. 

By Lemma~\ref{lm:degree-basis}, given $S$ and a degree
bound $d=O(1)$, we can compute a set $\rep S \subseteq S$ with
$|\rep S|=O(q^d)$ that is representative for $S$ with respect
to polynomials of degree at most $d$.

\begin{corollary}[Representative set for bounded-degree sieving polynomials]
  \label{cor:bd-nonvanish}
  Let $\F$ be a field and $S \subseteq \F^q$ a set of vectors.
  Let $d \in \N$ be given.  In polynomial time, we can compute a set
  $\rep S \subseteq S$ with $|\rep S| = O(q^d)$ such that $\rep S$ is
  representative for $S$ with respect to degree-$d$ polynomials.
\end{corollary}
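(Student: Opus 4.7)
The plan is to derive Corollary~\ref{cor:bd-nonvanish} as an immediate combination of Lemma~\ref{lm:degree-basis} and Lemma~\ref{lm:poly-nonvanish}. First I would invoke Lemma~\ref{lm:degree-basis} to fix an explicit spanning set $B$ for the space of all $q$-ary polynomials over $\F$ of total degree at most $d$; the natural choice is to let $B$ consist of all monomials $x_1^{e_1}\cdots x_q^{e_q}$ with $\sum_i e_i \leq d$, which has size $\binom{q+d}{d} = O(q^d)$ when $d$ is constant. Crucially, this basis can be enumerated explicitly in time polynomial in $q^d$.

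Next I would feed $S$ and this $B$ into Lemma~\ref{lm:poly-nonvanish}, which in polynomial time in $\|S\|$ produces a subset $\rep{S} \subseteq S$ with $|\rep{S}| \leq |B| = O(q^d)$ that is representative for $S$ with respect to $B$. Unpacking the definition, this means that for every polynomial $p$ spanned by $B$ that does not vanish on $S$, there is some $x \in \rep{S}$ with $p(x) \neq 0$.

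Finally, I would observe that being spanned by $B$ is precisely equivalent to being a polynomial of total degree at most $d$, since $B$ was chosen to span that entire space. Hence $\rep{S}$ is representative for $S$ with respect to the class of degree-$d$ polynomials, as required, and the construction runs in time polynomial in $\|S\|$ and $q^d$, which is polynomial for constant $d$.

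There is essentially no obstacle here beyond bookkeeping: the heavy lifting (identifying a representative subset via the vectors $\psi(x) = (p_i(x))_{i=1}^r$ and taking a basis of their span) is already done in Lemma~\ref{lm:poly-nonvanish}, and the counting of monomials is done in Lemma~\ref{lm:degree-basis}. The only thing to be a little careful about is the dependence on $d$ in the polynomial-time claim, but since $d$ is treated as a fixed constant (or at worst a parameter), the size $O(q^d)$ of the monomial basis and the cost of evaluating each $p_i$ at each $x \in S$ both remain polynomial.
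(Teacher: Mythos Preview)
Your proposal is correct and matches the paper's approach exactly: the paper derives the corollary directly from Lemma~\ref{lm:degree-basis} (the monomial basis of size $\binom{q+d}{d}$) plugged into Lemma~\ref{lm:poly-nonvanish}, precisely as you outline.
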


In applications, assume a set $S$ of vectors is given, 
and let $\cP=\{P_1, \ldots, P_m\}$ be a collection of \emph{properties},
$P_i \subseteq S$ for each $i \in [m]$.  We say that $\cP$
admits a \emph{sieving polynomial family of degree $d$} (respectively
\emph{with basis $B$}) if for every $P_i \in \cP$ there is a polynomial
$p_i$ such that $p_i(x)\neq 0$ for $x \in S$ if and only if $x \in P_i$,
and where $p_i$ is of degree at most $d$ (respectively, $p_i$ is
spanned by $B$). Note that the collection $\cP$ does not need to be
provided for Lemma~\ref{lm:poly-nonvanish} to apply; it is sufficient
that the degree bound, respectively the basis $B$, are provided,
together with the vectors $S$. In particular, $\cP$ could
be exponentially large.

\paragraph{Weighted version.}
There is also a \emph{weighted} (or \emph{ordered}) version as
follows. Assume a weight $\omega \colon S \to \Q$ on $S$. 
Then the set $\rep S$ will contain a min-weight non-vanishing element
for every polynomial $p$ spanned by $B$. Note that this phrasing is
technically equivalent to imposing a total order $\prec$ on $S$, and
insisting that $\rep S$ contains the minimum element $x$ such that $p$
does not vanish on $x$. On the one hand, given $\omega$ we may
select any order $\prec$ on $S$ such that $\omega$ is non-decreasing in
$\prec$, and on the other hand, given $\prec$ we can fix any weights $\omega$
on $S$ increasing in the order of $\prec$. 

\begin{lemma}
  \label{lm:poly-nonvanish-order}
  Let $\F$ be a field, let $S \subseteq \F^q$, and let $B$ be a set of
  $q$-ary polynomials over $\F$.  Furthermore, assume an order $\prec$ on
  $S$.  There is a set $\rep S \subseteq S$ with $|\rep S| \leq |B|$
  such that for every polynomial $p$ spanned by $B$ either $p$
  vanishes on $S$ or there is some $x \in \rep S$ such that
  $p(x) \neq 0$ and for every $x' \in S$ with $p(x') \neq 0$
  we have $x \preceq x'$. The set $\rep S$ can be computed in polynomial
  time given $S$, $B$, and the order. 
\end{lemma}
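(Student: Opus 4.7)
The plan is to imitate the proof of Lemma~\ref{lm:poly-nonvanish} but to replace the arbitrary choice of a basis of the vector image with a greedy, order-respecting basis construction. First I would fix an enumeration $B=\{p_1,\ldots,p_r\}$ and build the same feature vectors $\psi(x)=(p_1(x),\ldots,p_r(x))\in\F^r$ for each $x\in S$. Then, processing the elements of $S$ in increasing $\prec$-order, I would maintain a set $\rep S$, inserting each new $x$ exactly when $\psi(x)$ is linearly independent of $\{\psi(x'):x'\in\rep S\}$ so far. This produces $|\rep S|\le r=|B|$ automatically, and the whole thing is polynomial time given $S$, $B$, and access to $\prec$.

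The correctness claim to verify is that, for any $p=\sum_i\alpha_i p_i$ that does not vanish on $S$, the $\prec$-minimum element $x^*$ of $\{x\in S:p(x)\neq 0\}$ actually ends up in $\rep S$. I would argue this by contradiction: if the greedy rejected $x^*$, then at that moment $\psi(x^*)$ was dependent on the vectors already kept, i.e., there is a relation $\psi(x^*)=\sum_{x'\in\rep S,\, x'\prec x^*}\beta_{x'}\psi(x')$. Taking the inner product of this identity with $\alpha=(\alpha_1,\ldots,\alpha_r)$ expresses $p(x^*)$ as a linear combination of values $p(x')$ with $x'\prec x^*$; by the choice of $x^*$ every such $p(x')$ equals zero, forcing $p(x^*)=0$, which is the desired contradiction.

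I do not expect a genuine obstacle here, since the construction is a standard matroid-style greedy algorithm on the vectors $\psi(x)$. The one point worth stating carefully is the invariant that when the greedy examines $x^*$, every element currently in $\rep S$ strictly precedes $x^*$ in $\prec$; this is what guarantees that any linear relation witnessing the rejection of $x^*$ is supported entirely on strictly $\prec$-smaller elements, which is precisely the hook that lets the minimality of $x^*$ close the contradiction. The bound $|\rep S|\le|B|$ and polynomial-time computability are immediate from the construction, so no further work is needed beyond the greedy loop and the linear-algebraic rejection test.
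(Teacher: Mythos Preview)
Your proposal is correct and follows essentially the same approach as the paper: construct the feature map $\psi$, run the greedy min-basis over $\{\psi(x):x\in S\}$ in $\prec$-order, and use the linear dependence witnessing a rejection to derive a contradiction. Your correctness argument is in fact spelled out more explicitly than the paper's, which appeals to the unique-circuit property of the greedy basis and then asserts the conclusion; your direct inner-product computation is exactly the missing detail.
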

\begin{proof}
  We reuse notation from Lemma~\ref{lm:poly-nonvanish}.
  Let $S=\{x_1,\ldots,x_n\}$ where $x_i \prec x_j$ for every $i<j$.  For
  $i \in [n]$, let $y_i=\psi(x_i)=(p_j(x_i))_{j=1}^r$ and impose the same order on
  $Y=\{y_1,\ldots,y_n\}$.  Finally, let $\rep Y \subseteq Y$ be the
  greedily computed min-basis for $Y$; i.e., $Y$ is constructed from
  $Y_0=\emptyset$, for each $i \in [n]$ letting $Y_i=Y_{i-1}+y_i$ if this set is
  independent, otherwise $Y_i=Y_{i-1}$.  Let $y \in Y \setminus \rep Y$.
  Then $Y+y$ contains a unique circuit (i.e., minimal dependent set)
  $C \subseteq Y+y$, and furthermore, $y$ is the largest element in
  $C$ according to $\prec$. As argued in Lemma~\ref{lm:poly-nonvanish},
  if $p(x) \neq 0$ for some $x \in S$ then there is also at least one
  element $x' \in \rep S$ such that $p(x')\neq 0$.  It then also
  follows that $x' \preceq x$. 
\end{proof}

We say that $\rep S$ \emph{min-represents} $S$ if $\rep S$ is the set
defined in the above lemma, with respect to some order $\prec$ on $S$. 
This is analogous to the notion defined over matroids by Fomin et al.~\cite{FominLPS16JACM}.

\subsection{Matroid representative sets via sieving polynomials}
\label{sec:l-m-appl}

We illustrate the methods by rederiving the famous results on
representative sets for linear matroids.  Let us recall the
definitions.  Let $M=(U,\cI)$ be a matroid and let $X$ be an
independent set in $M$.  We say that a set $Y \subseteq U$
\emph{extends} $X$ in $M$ if $X \cap Y = \emptyset$ 
and $X \cup Y \in \cI$, or equivalently $r(X \cup Y)=|X|+|Y|$ where
$r$ is the rank function of $M$.  

Furthermore, let $S \subseteq \binom{U}{q}$ be a set of $q$-sets of
$U$. We say that a set $\rep S \subseteq S$ is \emph{representative
  for $S$ in $M$} if, for every set $X \subseteq U$, there exists a
set $Y \in S$ extending $X$ if and only if there exists a set
$Y' \in \rep S$ extending $X$.

More generally, qualified by a bound $r \leq r(M)$, the set $\rep S$
is \emph{$r$-representative for $S$ in $M$} if the above holds only
for sets $X$ with $|X| \leq r$.  We may reduce this to the above case
by taking the $r+q$-truncation of $M$. Recall that, given a linear
representation of $M$, a linear representation of the $r+q$-truncation
of $M$ can be computed in polynomial time~\cite{Marx09-matroid,LokshtanovMPS18TALG}.

The following result is originally due to Lov\'asz~\cite{Lovasz1977},
with an algorithmic adaptation by Marx~\cite{Marx09-matroid}.  The
improved running time is due to Fomin et al.~\cite{FominLPS16JACM}.

\begin{lemma}[representative sets lemma~\cite{Lovasz1977,Marx09-matroid,FominLPS16JACM}]
  \label{lemma:repr-alg}
  Let $M=(U,\cI)$ be a linear matroid of rank $k$, and let $S \subseteq \binom{U}{q}$
  be a collection of independent sets of $M$, each of size $q$.
  Let $r \in \N$ be such that $r+q \leq k$.
  There exists a set $\rep S \subseteq S$ of size at most $\binom{r+q}{q}$
  that is $r$-representative for $S$ in $M$.  Furthermore, given a
  representation $A$ of $M$, we can find such a set $\rep S$ in
  time~$(m+||A||)^{O(1)}$, where $m=|S|$ and $||A||$ denotes the
  encoding size of $A$.
\end{lemma}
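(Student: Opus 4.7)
The plan is to combine the introductory sketch with the truncation trick and to choose the polynomial basis carefully so as to reach the sharp bound $\binom{r+q}{q}$ rather than the weaker $O(k^q)$ that falls out of Corollary~\ref{cor:bd-nonvanish}. First, I would reduce to the case $k = r+q$. Given the representation $A$ of $M$ of rank $k$, invoke the linear $(r+q)$-truncation of Marx~\cite{Marx09-matroid} / Lokshtanov et al.~\cite{LokshtanovMPS18TALG} to obtain a representation $A'$ of the $(r+q)$-truncation $M'$ of $M$. A $q$-set $Y$ extends a set $X$ with $|X| \le r$ in $M$ if and only if it does so in $M'$, so $r$-representativeness in $M'$ is equivalent to $r$-representativeness in $M$. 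From now on I may assume $M$ has rank exactly $k = r+q$ and that $A$ has exactly $k$ rows.

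Following the matroid illustration in Section~1.1, define $\psi\colon \binom{U}{q} \to \F^{kq}$ by listing the entries of $A[\cdot, Y]$ in any fixed order on each $Y$. For each $X \subseteq U$ with $|X| = r$, introduce a $k \times q$ block of fresh variables $Z = (z_{ij})$ and set
\[
P_X(Z) \;=\; \det\!\begin{pmatrix} A[\cdot, X] & Z \end{pmatrix}.
\]
Substituting $Z = A[\cdot, Y]$ produces the determinant of the $k\times k$ matrix with columns indexed by $X \cup Y$, which is non-zero exactly when $X \cup Y$ is linearly independent, i.e.\ exactly when $Y$ extends $X$. So $P_X$ is a valid sieving polynomial for the property ``$Y$ extends $X$'' (dependent $X$ yield $P_X \equiv 0$, matching the fact that nothing extends a dependent set).

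To obtain the tight bound I would avoid the generic degree-$q$ monomial basis of Lemma~\ref{lm:degree-basis} and instead use a smaller basis adapted to the structure of $P_X$. Laplace expansion along the $q$ variable columns of $Z$ yields
\[
P_X(Z) \;=\; \sum_{I \in \binom{[k]}{q}} \varepsilon_I \, \det A[[k]\setminus I, X] \cdot \det Z[I, \cdot],
\]
for signs $\varepsilon_I \in \{\pm 1\}$. Hence every $P_X$ lies in the $\F$-span of the fixed set $B = \{\det Z[I, \cdot] : I \in \binom{[k]}{q}\}$ of $\binom{k}{q}$ polynomials, independently of $X$. Applying Lemma~\ref{lm:poly-nonvanish} to the vectors $\psi(Y)$ with this basis $B$ returns a set $\rep{S} \subseteq S$ of size at most $|B| = \binom{r+q}{q}$ that is representative with respect to all the properties ``$Y$ extends $X$'' simultaneously, which is the desired $r$-representative set. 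The running time follows from the polynomial-time guarantees of truncation, of computing $\psi$, and of Lemma~\ref{lm:poly-nonvanish}.

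The only non-routine step is the choice of basis in the third paragraph: the monomial basis of Lemma~\ref{lm:degree-basis} would produce a bound of $\binom{kq+q}{q}$, which is far too large. Recognising that the polynomials $\{P_X\}_X$ all lie in the $\binom{k}{q}$-dimensional space spanned by the $q \times q$ minors of $Z$ is precisely what delivers Lov\'asz's bound $\binom{r+q}{q}$; beyond that observation everything is a direct application of the sieving polynomial framework.
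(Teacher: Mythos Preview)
Your proposal is correct and follows essentially the same route as the paper's own derivation in Section~\ref{sec:l-m-appl}: truncate to rank $r+q$, encode each $Y$ via $\psi(Y)=A[\cdot,Y]$, and take $P_X(Z)=\det(A[\cdot,X]\ Z)$ as the sieving polynomial. The one place you go further is that the paper stops at the $O(k^q)$ bound from Corollary~\ref{cor:bd-nonvanish} and merely remarks that the tight bound $\binom{k}{q}$ ``is also possible to derive \ldots\ using a linear basis for the determinant'' via the generalized Laplace expansion, omitting the details; you actually write out that expansion and feed the explicit basis $B=\{\det Z[I,\cdot]:I\in\binom{[k]}{q}\}$ into Lemma~\ref{lm:poly-nonvanish}, which is exactly the step the paper alludes to.
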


We will show a simpler version of the above result, for the
constant-degree setting, derived via the method of sieving polynomial families.
Let $M=(U,\cI)$ be a linear matroid of rank $k$, with $|U|=n$,
and let $A$ be a $k \times n$-matrix that represents $M$. 
Assume $q=O(1)$, and let a collection $S \subseteq \binom{U}{q}$ be given.
We show how to use sieving polynomial families to compute a set
$\rep{S} \subseteq S$ with $|\rep{S}|=O(k^q)$
that is representative for $S$ in $M$.

Slightly abusing notation, we treat each $Y \in S$ as a single
vector of dimension $kq$ rather than $q$ vectors of dimension $k$;
i.e., we treat the set $Y=\{u_1,\ldots,u_q\}$ 
as the vector $(u_1(1), u_1(2), \ldots, u_q(k))$. 
Introduce a set of variables $Z=\{z_{i,j} \mid i \in [k], j \in [q]\}$.
We claim that a set $\rep S \subseteq S$ which is representative for $S$ with
respect to polynomials over $Z$ of degree at most $q$ will also be
representative for $S$ in the matroid $M$. 
Concretely, for each $X \in \binom{U}{k-q}$
let $S_X=\{Y \in S \mid \text{$Y$ extends $X$}\}$.
In the terminology just established, we say that the collection of properties
\[
\cP=  \{S_X \mid X \in \binom{U}{k-q}\}
\]
admits a sieving polynomial family of degree $q$; that is,
for every $S_X \in \cP$ there is a polynomial $p$ over $Z$
of degree $q$ such that for any $Y \in S$, $p$ vanishes on $Y$
if and only if $Y \notin S_X$. 

To see this, let $X \subseteq U$ with $|X| \leq k-q$, and assume that
there exists some $Y \in S$ such that $Y$ extends $X$.  By padding
$X$, we may assume that $|X|=k-q$.  Assume
$X=\{u_1',\ldots,u_{k-q}'\}$. Now define the matrix
\[
  A_X = (X\,Z) = 
  \begin{pmatrix}
    u_1'(1) & \dots & u_{k-q}'(1) & z_{1,1} & \dots z_{1,q} \\
    \vdots &        & \vdots  &   \vdots& \vdots \\
    u_1'(k)& \dots  & u_{k-q}'(k) & z_{k,1} & \dots z_{k,q}
  \end{pmatrix}.
\]
Note that the first $k-q$ columns are constants, and the last $q$
columns contain variables,
hence $p_X(Z)=\det A_X$ defines a polynomial in $Z$. Consider some $Y \in S$,
say $Y=\{u_1,\ldots,u_q\}$, and let $p_X(Y)$ denote the evaluation of
$p_X$ where $z_{i,j} = u_j(i)$ for every $i \in [k]$, $j \in [q]$. 
Then $p_X(Y)=\det (X\, Y)$ and $Y$ extends $X$ if and only if $p_X \neq 0$.
Since clearly $\det A_X$ has degree at most $q$ in $Z$, the conclusion
follows.

It is also possible to derive the more general, fine-grained bound
$|\rep{S}| \leq \binom{k}{q}$ using a linear basis for the determinant. 
This is essentially the generalized Laplace expansion formula used
in the proof of Fomin et al.~\cite{FominLPS16JACM}; we omit the details.

\subsection{Combining sieving polynomial families}

Let us note how bounded-degree sieving polynomial families can be
combined in natural ways to form a more complex family.
The first is the simplest and most useful.

\begin{lemma}
  Let $\cP_1$ and $\cP_2$ be property families over a ground set $S$,
  and assume that they have sieving polynomials of degree $d_1$
  respectively $d_2$ and arity $r_1$ respectively $r_2$
  over a common field $\F$. Then their intersection
  \[
    \cP_1 \land \cP_2 := \{P \cap P' \mid P \in \cP_1, P' \in \cP_2\}
  \]    
  has sieving polynomials of degree $d_1+d_2$  and arity $r_1+r_2$
  over the same field.  
\end{lemma}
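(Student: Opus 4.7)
The plan is to build the sieving polynomial for the intersection by taking products of the two given polynomials on concatenated variable blocks. Concretely, suppose $\cP_1$ is witnessed by a vector map $\psi_1 \colon S \to \F^{r_1}$ together with polynomials $\{p_P\}_{P \in \cP_1}$ of degree at most $d_1$ in $r_1$ variables, and similarly $\cP_2$ is witnessed by $\psi_2 \colon S \to \F^{r_2}$ with polynomials $\{q_{P'}\}_{P' \in \cP_2}$ of degree at most $d_2$. Define the combined vector map $\psi \colon S \to \F^{r_1+r_2}$ by $\psi(x) = (\psi_1(x), \psi_2(x))$, i.e., the concatenation of the two coordinate blocks.

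Next, for each pair $(P, P') \in \cP_1 \times \cP_2$, define the $(r_1+r_2)$-ary polynomial
\[
  h_{P,P'}(u, v) = p_P(u) \cdot q_{P'}(v),
\]
where $u$ ranges over the first $r_1$ coordinates and $v$ over the last $r_2$ coordinates. Since the variable blocks are disjoint, the total degree of $h_{P,P'}$ is at most $d_1 + d_2$, and its arity is $r_1 + r_2$, as required.

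It remains to verify the sieving property. For any $x \in S$,
\[
  h_{P,P'}(\psi(x)) = p_P(\psi_1(x)) \cdot q_{P'}(\psi_2(x)).
\]
Since $\F$ is a field and thus has no zero divisors, this product is nonzero if and only if both factors are nonzero, which by the sieving property of the original families happens exactly when $x \in P$ and $x \in P'$, i.e., $x \in P \cap P'$. Hence $h_{P,P'}$ sieves for $P \cap P'$ over the map $\psi$, and ranging $(P, P')$ over $\cP_1 \times \cP_2$ exhibits a sieving polynomial family for $\cP_1 \land \cP_2$ of the claimed degree and arity.

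There is essentially no obstacle here: the argument is a direct product construction, relying only on the absence of zero divisors in $\F$. The one small subtlety to flag is that the two polynomials must live on disjoint variable blocks (guaranteed by concatenating rather than overlaying the vector maps) so that degrees add rather than combining in some uncontrolled way; after that, the verification is immediate.
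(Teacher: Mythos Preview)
Your proof is correct and follows essentially the same approach as the paper: concatenate the two vector maps and take the product of the two sieving polynomials on disjoint variable blocks, then use that a field has no zero divisors. The paper's proof is the same argument in fewer words.
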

\begin{proof}
  Immediate. Let $P_1 \in \cP_1$ and $P_2 \in \cP_2$ be properties and
  $p_1$, $p_2$ the corresponding sieving polynomials,
  Let $\psi_1 \colon S \to \F^{r_1}$ and $\psi_2 \colon S \to \F^{r_2}$
  be vectors such that $p_i(\psi_i(x))=0$ for $x \in S$ iff $x \notin P_i$, $i=1,2$. 
  Let $\psi(x)$ be the concatenation of $\psi_1(x)$ and $\psi_2(x)$.
  Then $p(\psi(x))=p_1(\psi_1(x)) \cdot p_2(\psi_2(x))$ sieves for $P_1 \cap P_2$
  and has degree $d_1+d_2$ and arity $r_1+r_2$. 
\end{proof}

The corresponding union operation is possible without increasing the
degree. 

\begin{lemma}
  Let $\cP_1$ and $\cP_2$ be property families over a ground set $S$,
  and assume that they have sieving polynomials of degree at most $d$
  and arity $r_1$ respectively $r_2$ over a common field $\F$.
  Then their union
  \[
    \cP_1 \lor \cP_2 := \{P \cup P' \mid P \in \cP_1, P' \in \cP_2\}
  \]
  has sieving polynomials of degree $d$ and arity $r_1+r_2$,
  possibly by moving to an extension field of $\F$. 
\end{lemma}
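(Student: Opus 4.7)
The plan is to realize the sieving polynomial for $P \cup P'$ as a judiciously weighted sum of the sieving polynomials for $P$ and $P'$ on disjoint variable blocks. Concretely, let $\psi_1 \colon S \to \F^{r_1}$ and $\psi_2 \colon S \to \F^{r_2}$ be the vector maps for $\cP_1$ and $\cP_2$, respectively, and set $\psi(x) = (\psi_1(x), \psi_2(x)) \in \F^{r_1+r_2}$. For each pair $P \in \cP_1$, $P' \in \cP_2$ with sieving polynomials $p_1 \in \F[y_1,\ldots,y_{r_1}]$ and $p_2 \in \F[w_1,\ldots,w_{r_2}]$ of degree at most $d$, I would propose the candidate
\[
  p(y, w) = p_1(y) + \lambda\, p_2(w),
\]
where $\lambda$ is a scalar (in $\F$ or an extension) to be chosen. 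By construction $p$ has arity $r_1+r_2$ and degree at most $\max(d, d) = d$, since the $y$- and $w$-variables are disjoint and $\lambda$ is treated as a coefficient.

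Next I would verify correctness. For $x \in S$, evaluation gives $p(\psi(x)) = p_1(\psi_1(x)) + \lambda\, p_2(\psi_2(x))$. If $x \notin P \cup P'$ then both terms vanish by the sieving property of $p_1$ and $p_2$, so $p(\psi(x))=0$. Conversely, if $x \in P \cup P'$ then at least one of $p_1(\psi_1(x))$, $p_2(\psi_2(x))$ is nonzero, and I want to conclude $p(\psi(x)) \neq 0$.

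The main obstacle is the possibility of accidental cancellation when exactly one of the two terms is nonzero: choosing $\lambda$ wrongly could force $p_1(\psi_1(x)) + \lambda\, p_2(\psi_2(x)) = 0$ for some bad $x$. This is resolved by moving to an extension field: work in $\F(\lambda)$ with $\lambda$ a formal transcendental. Then $p_1(\psi_1(x)) + \lambda\, p_2(\psi_2(x))$ is a polynomial in $\lambda$ of degree at most one over $\F$, and such a polynomial is identically zero in $\F(\lambda)$ if and only if both of its coefficients vanish, i.e.\ if and only if $x \notin P$ and $x \notin P'$. Equivalently, when $S$ is finite one can stay within a finite algebraic extension: there are at most $|S|$ ratios $-p_1(\psi_1(x))/p_2(\psi_2(x))$ to avoid, so any $\lambda$ in a sufficiently large extension of $\F$ works. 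This is why the statement permits passage to an extension field.

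With this $\lambda$ fixed, $p$ is a valid sieving polynomial for $P \cup P'$, and the same $\lambda$ (and the same vector map $\psi$) serves uniformly for all pairs in $\cP_1 \lor \cP_2$, establishing that the union admits a sieving polynomial family of degree $d$ and arity $r_1+r_2$ over the extension field. The argument is essentially routine once the $\lambda$-trick is in hand; no properties of $\psi_1, \psi_2$ beyond the sieving guarantee are used.
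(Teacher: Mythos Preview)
Your proposal is correct and follows essentially the same approach as the paper: form $p = p_1 + \lambda p_2$ on the concatenated vector map and choose $\lambda$ generically (the paper uses a union bound over $|S|$ to pick $\lambda$ in a large enough field, which matches your finite-extension variant). One small slip in wording: the cancellation risk arises when \emph{both} terms are nonzero, not when exactly one is; your subsequent linear-in-$\lambda$ argument handles this correctly regardless.
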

\begin{proof}
  Let $P_1 \in \cP_1$ and $P_2 \in \cP_2$ be properties, and let
  $p_1(X_1)$ and $p_2(X_2)$ be the respective sieving polynomials.
  Let $\psi_1$ and $\psi_2$ be the vector maps.
  Let $z$ be an indeterminate value, and consider the polynomial
  $p(X_1,X_2)=p_1(X_1)+zp_2(X_2)$. Clearly if $x \notin P_1 \cup P_2$,
  then $p(\psi_1(x),\psi_2(x))=0$ for every value of $z$. On the other
  hand, for every $x \in P_1 \cup P_2$ the function $p(\psi_1(x),\psi_2(x))$
  is linear in $z$, hence there is at most one value for $z$ for
  which $p(\psi(x))=0$. By the union bound, assuming $|\F| > |S|$
  (which can be assumed by moving to an extension field) there will be
  at least one value of $z$ for every pair $(P_1,P_2)$ for which
  the resulting polynomial $p_z=p_1+zp_2$ is a valid sieving polynomial.
  Finally, since $z$ is a constant, $p_z$ has degree at most $d$.
\end{proof}

We note that here, as in the regular applications of the
representative sets lemma, we do not need to provide
the sieving polynomial families explicitly ahead of time when applying the
framework. It is enough to know the set of elements $S$, the vector map $\psi$,
the field $\F$ and the degree bound (alternatively, an explicit basis $B$).

\section{Delta-matroid representative set statements}
\label{sec:delta-matroid-repsets}

In this section, we present the main new result of the paper.  We
present two representative set statements for linear
delta-matroids, defined in terms of a set of \emph{terminals}
$T \subseteq V$. We note that these statements are somewhat more
complex and restricted in their phrasing than the matroid
representative sets statement (although they are direct generalizations
of the matroid version, up to constant factors and in the case when $q=1$).
This difference is necessary, since a direct translation of the
matroid representative sets statement to delta-matroids is not possible;
see Section~\ref{sec:rep-lb} for details.
We begin with the basic notions.

\begin{definition}[extends, repairs (delta-matroid)] \label{def:dm-extends}
  Let $D=(V,\cF)$ be a delta-matroid and $X, Y \subseteq V$.  Then we
  say that $Y$ \emph{extends} $X$ in $D$ if $X \cap Y=\emptyset$ and
  $X \cup Y \in \cF$.  If furthermore $X$ is infeasible and $X \cup Y$
  is a minimal feasible superset of $X$, then we say that $Y$
  \emph{repairs} $X$.
\end{definition}

Our basic notion of representative sets for delta-matroids is the
following.

\begin{definition}[representative set (delta-matroid)]
  Let $D=(V,\cF)$ be a delta-matroid and $T \subseteq V$ a set of
  terminals.  Let $S \subseteq \binom{V}{q}$ for some $q$.
  Then a subset $\rep S \subseteq S$ \emph{represents $S$ in $(D,T)$}
  if, for every $X \subseteq T$ such that there exists some $Y \in S$
  such that $Y$ extends $X$, there exists some $Y' \in \rep S$ such
  that $Y'$ extends $X$.
\end{definition}

With this setup, we show the following results. The first applies when
the parameter is $|T|$.

\begin{theorem}[Delta-matroid representative sets (cardinality version)]
  \label{thm:dm-extend}
  Let $D=(V,\cF)$ be a delta-matroid directly represented by a
  skew-symmetric matrix $A$.  Let $T \subseteq V$ and
  $S \subseteq \binom{V}{q}$ be given, with $k=|T|$ and $q=O(1)$.
  Then in polynomial time we can compute a set $\rep S \subseteq S$
  that represents $S$ in $(D,T)$, with $|\rep S|=O(k^q)$.
  Additionally, given an order $\prec$ on $S$, we can furthermore ensure
  that $\rep S$ min-represents $S$ in $(D,T)$.
\end{theorem}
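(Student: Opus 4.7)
The plan is to apply the sieving polynomial framework of Section~\ref{sec:sieving-polynomials} with the Pfaffian playing the role that the determinant played in the matroid case of Section~\ref{sec:l-m-appl}. Since $D$ is directly represented by a skew-symmetric matrix, every feasible set has even cardinality, so only subsets $X \subseteq T$ with $|X|+q$ even can possibly be extended by a $q$-set $Y$; other $X$'s have $P_X = \emptyset$ and need no sieving polynomial. Fix an arbitrary ordering on $V$, so that each $Y \in S$ is uniquely written as $(y_1,\ldots,y_q)$. Define $\psi \colon S \to \F^r$ with $r = qk + \binom{q}{2} = O(k)$ by letting $\psi(Y)$ record the $qk$ entries $A[y_j,t]$ for $j \in [q]$, $t \in T$, together with the $\binom{q}{2}$ entries $A[y_i,y_j]$ for $1 \leq i < j \leq q$.

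For each relevant $X$, build a formal skew-symmetric matrix $M_X(Z)$ of dimension $(|X|+q) \times (|X|+q)$ whose rows and columns are indexed by $X$ together with $q$ fresh slots $s_1,\ldots,s_q$. The $X \times X$ block is the constant $A[X]$; the entry at $(x, s_j)$ is a variable $u_{x,j}$ identified with the $\psi$-coordinate holding $A[x,y_j]$ (using $A[x,y_j] = -A[y_j,x]$ to get the sign right); the entry at $(s_i, s_j)$ for $i<j$ is a variable $v_{i,j}$ identified with the $\psi$-coordinate $A[y_i,y_j]$; the rest is filled in by skew-symmetry with zero diagonal. Set $p_X(Z) := \Pf M_X(Z)$. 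When $X \cap Y = \emptyset$, substituting $\psi(Y)$ into $M_X$ yields the principal submatrix $A[X \cup Y]$ under the induced row/column order, so $p_X(\psi(Y)) = \Pf A[X \cup Y]$ is nonzero exactly when $X \cup Y$ is feasible, i.e., when $Y$ extends $X$.

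The main technical obstacle is the overlap case $X \cap Y \neq \emptyset$: the sieving polynomial must vanish here, since $Y$ does not then extend $X$. Suppose $y_j = x$ for some $x \in X$. A coordinatewise check shows that after substitution, the row indexed by $x$ and the row indexed by $s_j$ of $M_X(\psi(Y))$ are identical: for each $x' \in X \setminus \{x\}$ both rows record $A[x,x']$ (using $-A[x',x] = A[x,x']$ to resolve the skew-symmetric sign at the $s_j$-row), for each $j' \neq j$ both rows record $A[x,y_{j'}]$, and the diagonal/cross entries at positions $x$ and $s_j$ are all zero because $A[x,x]=0$. A skew-symmetric matrix with a repeated row is singular, so its Pfaffian vanishes and $p_X(\psi(Y)) = 0$, as required.

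For the degree bound, expand $\Pf M_X(Z)$ as a signed sum over perfect matchings of the $|X|+q$ indices. A matching with $a$ pairs inside $X$, $b$ cross pairs, and $c$ pairs inside slots contributes a term of $Z$-degree $b+c$; from $2a+b=|X|$ and $b+2c=q$ one obtains $b+c = q-c \leq q$. Hence $\deg p_X \leq q$. Applying Corollary~\ref{cor:bd-nonvanish} with arity $r = O(k)$ and degree bound $q$ yields in polynomial time a set $\rep S \subseteq S$ with $|\rep S| = O(r^q) = O(k^q)$ that is representative with respect to every $p_X$; by the verification above, $\rep S$ represents $S$ in $(D,T)$. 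For the min-represents strengthening, replace Corollary~\ref{cor:bd-nonvanish} by Lemma~\ref{lm:poly-nonvanish-order} applied to the given order on $S$.
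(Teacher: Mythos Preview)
Your proposal is correct and follows essentially the same approach as the paper: define a variable set of size $O(k)$ encoding the entries $A[t,y_j]$ and $A[y_i,y_j]$, for each $X\subseteq T$ take the Pfaffian of the corresponding partially-evaluated skew-symmetric matrix as the sieving polynomial, verify the overlap case by exhibiting a repeated row/column, bound the degree by $q$ via the matching expansion of the Pfaffian, and invoke Corollary~\ref{cor:bd-nonvanish} (respectively Lemma~\ref{lm:poly-nonvanish-order} for the ordered version). The only differences from the paper's proof are cosmetic (your $\psi$ stores $A[y_j,t]$ rather than $A[t,y_j]$, and you argue identical rows where the paper argues identical columns).
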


This is a special case of a more general rank-based bound.  Let
$D=(V,\cF)$ be directly represented by $A$, and let $T \subseteq A$.
The following holds. (The polyhedral rank function is a basic rank
function of delta-matroids and is related to bisubmodular polyhedrea,
but is not important in this paper; see Bouchet and
Schwärzler~\cite{BouchetS98dm} with references.)

\begin{proposition}
  Let $T \subseteq V$.  Then the rank of the columns $T$ in $A$ equals
  \[
    r(A[\cdot, T]) = \max_{F \in \cF} |F \cap T| = \rho(T,\emptyset),
  \]
  where $\rho(P,Q)=\max_{F \in \cF} |F \cap P|-|F \cap Q|$ is the
  polyhedral rank function for $D$.
\end{proposition}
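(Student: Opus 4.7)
The second equality is immediate from the definition of the polyhedral rank function (set $Q=\emptyset$), so the plan is to establish the first equality by proving both inequalities, using the structural fact about skew-symmetric matrices recalled in the preliminaries: every basis of the column space of a skew-symmetric matrix $A$ is indexed by a set $B$ for which $A[B]$ is non-singular.

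For the inequality $r(A[\cdot, T]) \geq \max_{F \in \cF}|F \cap T|$, I would take any feasible $F \in \cF$ maximizing $|F\cap T|$. Since $A[F]=A[F,F]$ is non-singular, its $|F|$ columns are linearly independent as vectors in $\F^{F}$, and therefore the columns $A[\cdot, F]$ are linearly independent as vectors in $\F^{V}$ (independence is preserved when passing to a larger set of rows). In particular, the sub-collection of columns $A[\cdot, F \cap T]$ is linearly independent, which gives $r(A[\cdot, T]) \geq |F \cap T|$.

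For the reverse inequality, I would start from a maximum linearly independent subset of columns in $A[\cdot, T]$; call the index set $T' \subseteq T$, so $|T'| = r(A[\cdot, T])$. Extend $T'$ to a basis $B \subseteq V$ of the full column space of $A$; this is possible since linearly independent column sets extend to bases. By the skew-symmetric property recalled above, $A[B]$ is non-singular, hence $B \in \cF$. Because $T' \subseteq B \cap T$, we conclude $\max_{F \in \cF} |F \cap T| \geq |B \cap T| \geq |T'| = r(A[\cdot, T])$.

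The only delicate point is the use of the principal-basis property of skew-symmetric matrices in the second step; everything else is routine linear algebra together with the definition of direct representation. I do not anticipate any further obstacle.
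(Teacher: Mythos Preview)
Your proof is correct. The paper actually states this proposition without proof (it is treated as a known or easy fact, with a pointer to Bouchet and Schw\"arzler for the polyhedral rank function), so there is no ``paper's own proof'' to compare against. Your argument is exactly the natural one: the second equality is definitional, and for the first you prove both inequalities using (i) that non-singularity of $A[F]$ forces the columns indexed by $F$ to be independent, and (ii) the principal-basis property of skew-symmetric matrices stated in the preliminaries, namely that any column basis $B$ of $A$ satisfies $\det A[B]\neq 0$ and hence $B\in\cF$. Both steps are sound and there is no gap.
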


\begin{theorem}[Delta-matroid representative sets (rank version)]
  \label{thm:dm-extend-rank}
  Let $D=(V,\cF)$ be a delta-matroid directly represented by a
  skew-symmetric matrix $A$.  Let $T \subseteq V$ and
  $S \subseteq \binom{V}{q}$ be given, with $k=\rho(T,\emptyset)$ and $q=O(1)$.
  Then in polynomial time we can compute a set $\rep S \subseteq S$
  that represents $S$ in $(D,T)$, with $|\rep S|=O(k^q)$.
  Additionally, given an order $\prec$ on $S$, we can furthermore ensure
  that $\rep S$ min-represents $S$ in $(D,T)$.
\end{theorem}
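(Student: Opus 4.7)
The plan is to express ``$Y$ extends $X$ in $D$'' as the nonvanishing of a Pfaffian-based polynomial of degree $q$ over $O(kq)$ variables, and then invoke Corollary~\ref{cor:bd-nonvanish} (together with Lemma~\ref{lm:poly-nonvanish-order} for the min-representative version). By the preceding proposition $k = \rho(T,\emptyset) = r(A[\cdot,T])$, so in polynomial time one can compute a set $B \subseteq T$ with $|B|=k$ such that $\{A[\cdot,b] : b \in B\}$ is a basis of the column space of $A[\cdot,T]$, together with coefficients $\alpha_{t,b} \in \F$ satisfying $A[\cdot,t] = \sum_{b \in B} \alpha_{t,b} A[\cdot,b]$ for every $t \in T$. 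Skew-symmetry yields $A[t,y] = \sum_b \alpha_{t,b} A[b,y]$ for all $y \in V$. Fix a linear order on $V$, write each $Y \in \binom{V}{q}$ as $Y = \{y_1,\ldots,y_q\}$ in that order, and define
\[
  \psi(Y) \;=\; \bigl((A[b,y_i])_{b \in B,\, i \in [q]},\; (A[y_i,y_j])_{1 \leq i < j \leq q}\bigr) \;\in\; \F^{kq + \binom{q}{2}}.
\]

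For each $X \subseteq T$, I would construct a sieving polynomial $p_X$ as follows. Introduce fresh symbols $R = \{r_1,\ldots,r_q\}$ and let $M$ be the skew-symmetric matrix on index set $X \cup R$ whose $X \times X$ block is the constant matrix $A[X]$ and whose remaining entries $u_{x,i} := M[x,r_i]$ and $w_{i,j} := M[r_i,r_j]$ (with $w_{i,j} = -w_{j,i}$) are formal variables. Set $p_X^0 := \Pf M$. In any perfect matching of $X \cup R$, each slot $r_i$ lies in exactly one edge, so every Pfaffian term uses at most $q$ factors from among the $u$'s and $w$'s; hence $p_X^0$ has degree at most $q$ in these variables. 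Substituting $u_{x,i} \mapsto \sum_{b \in B} \alpha_{x,b}\, \psi(Y)[b,i]$ and $w_{i,j} \mapsto \psi(Y)[i,j]$ then yields a degree-$q$ polynomial $p_X$ over the coordinates of $\psi$, whose value $p_X(\psi(Y))$ is exactly $\Pf M$ evaluated at $u_{x,i} = A[x,y_i]$ and $w_{i,j} = A[y_i,y_j]$.

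Correctness splits into two cases. If $X \cap Y = \emptyset$, then relabeling $r_i \leftrightarrow y_i$ identifies the substituted $M$ with $A[X \cup Y]$ up to permutation of rows and columns, so $p_X(\psi(Y)) = \pm\,\Pf A[X \cup Y]$, which is nonzero iff $X \cup Y \in \cF$, iff $Y$ extends $X$. If instead some $y_i$ equals some $x \in X$, then the substituted columns of $M$ indexed by $x$ and by $r_i$ are both the restriction of $A[\cdot, y_i]$ to $X \cup R$, so $M$ is singular and $p_X(\psi(Y)) = 0$; this matches the fact that $Y$ does not extend $X$ in that case. Hence $\{p_X : X \subseteq T\}$ is a sieving polynomial family of degree $q$ and arity $kq + \binom{q}{2}$ for $\cP = \bigl\{\{Y \in S : Y \text{ extends } X\text{ in } D\} : X \subseteq T\bigr\}$, and Corollary~\ref{cor:bd-nonvanish} (respectively Lemma~\ref{lm:poly-nonvanish-order}) yields $\rep S \subseteq S$ with $|\rep S| = O((kq)^q) = O(k^q)$ that represents (respectively min-represents) $S$ in $(D,T)$.

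The main obstacle has two intertwined components. First, the arity must come out to $O(kq)$ and not the naive $O(|T| q)$; the column-dependence substitution is what absorbs the possibly much larger set $T \setminus B$ into variables indexed only by $B$, using the rank bound directly rather than via pivoting. Second, the polynomial must detect disjointness, not merely feasibility of $X \cup Y$ as a set; the fresh slots $R$ are precisely what turn an overlap $y_i \in X$ into a duplicated column, forcing $\Pf M = 0$. Neither ingredient is individually deep, but combining them with the Pfaffian-matching degree count in the correct order is the crux of the argument.
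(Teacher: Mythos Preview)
Your argument is correct, and it takes a genuinely different route from the paper's own proof. The paper proceeds by first duplicating elements (Lemma~\ref{lm:ignore-overlaps}) to handle disjointness, then pivoting by a carefully chosen feasible set $B$ with $|B \cap T|=k$ and $|B|\le 2k$ (Lemma~\ref{lm:pivot-to-sparse}) so that in the pivoted matrix $A*B$ the block $A'[T\setminus B,\,V\setminus B]$ vanishes identically; this structural sparsity is what forces the sieving polynomial to involve only $O(kq)$ variables. You achieve the same arity reduction without pivoting at all: you pick a column basis $B\subseteq T$ of size $k$ for $A[\cdot,T]$ and use the linear dependencies $A[x,y]=\sum_{b\in B}\alpha_{x,b}A[b,y]$ (valid by skew-symmetry) to rewrite each $u_{x,i}$ as a \emph{linear} form in the $kq$ coordinates $A[b,y_i]$, which preserves the degree-$q$ bound. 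Disjointness you then handle exactly as in the cardinality proof of Theorem~\ref{thm:dm-extend}, via the duplicate-column argument with the fresh slots $R$.

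What each approach buys: your argument is shorter and more elementary, using only a basis computation and linear substitution where the paper uses a delta-matroid pivoting step and a separate structural lemma. The paper's route, on the other hand, exposes Lemma~\ref{lm:pivot-to-sparse} as a standalone structural fact about directly represented delta-matroids (the sparse block form after pivoting by a rank-maximizing minimal feasible set), which may be useful elsewhere; it also keeps the sieving-polynomial step identical in shape to the cardinality case rather than requiring the substitution layer.
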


In addition, we provide a full generalization of this to general
linear delta-matroids in Section~\ref{ssec:rep-new}.

\subsection{Proof (cardinality version)}

We prove Theorem~\ref{thm:dm-extend} by showing an appropriate
sieving polynomial family of bounded degree.  That is, we shall show
that there is a set of variables $Z$, and a set of evaluations
$Z=Z(Y)$ for $Y \in S$, such that for every $X \subseteq T$ there is a
sieving polynomial $p_X$ of degree at most $q$ such that $p_X$
vanishes on $Z(Y)$ if and only if $Y$ fails to extend $X$.
In this section we focus on the easier cardinality version,
leaving Theorem~\ref{thm:dm-extend-rank} for next section. 

The proof of this is very similar to the sieving
polynomials-based proof of the linear matroid representative set
theorem in Section~\ref{sec:l-m-appl}.  Let $D=(V,\cF)$ be the
delta-matroid, directly represented by a skew-symmetric matrix $A$,
and let $S \subseteq \binom{V}{q}$ and $T \subseteq V$ be given as in
Theorem~\ref{thm:dm-extend}. Define a set of variables $Z$ as
\[
 Z = \{ z_{x,i} \mid x \in T, i \in [q]\}  \cup \{ z'_{i,j} \mid 1 \leq i < j \leq q\}.
\]
Note that $|Z|=O(k)$ since $q=O(1)$ is a constant. 
Let $X \subseteq T$, and consider the matrix
\[
  A_X = \kbordermatrix{
        &   X    & [q] \\
      X & A[X]   & Z_1 \\
{}    [q] & -Z_1^T & Z_2 %
  }
\]
where $Z_1[x,i]=z_{x,i}$  for $x \in X$ and $i \in [q]$, and where
$Z_2$ is the skew-symmetric matrix with zero diagonal satisfying
$Z_2[i,j]=z'_{i,j}$ for all $1 \leq i < j \leq q$. Note that
$A_X[X]=A[X]$ consists entirely of constants while all other
off-diagonal entries of $A_X$ are simply variables.
We note that for every $Y \in S$ there is an evaluation $Z=Z(Y)$
such that $A_X(Z(Y))=A[X \uplus Y]$ (where $\uplus$ denotes disjoint
union). Given that this holds, we then simply define the polynomial
\[
  p_X(Z) = \Pf A_X
\]
and observe that $A[X \uplus Y]$ is non-singular if and only if
$p_X(Y(Z)) \neq 0$.  To finish the proof, we simply observe that
$\Pf A_X$ has degree $q$ in $Z$.

We now fill in the details. 

\begin{proof}[Proof of Theorem~\ref{thm:dm-extend}]
  Let $Z$ be as above.  Let $X \subseteq T$ be an arbitrary subset, 
  let $A_X$ be the matrix defined above, and let $p_X(Z)=\Pf A_X$.
  Let $Y \in S$, and write $Y=\{y_1,\ldots,y_q\}$ using some
  arbitrary ordering on $V$.  Consider the evaluation of $Z$
  defined by
  \begin{align*}
    z_{x,i} &= A[x,y_i]  \quad \forall x \in T, i \in [q] \\
    z'_{i,j}&= A[y_i,y_j] \quad \forall 1 \leq i < j \leq q
  \end{align*}
  Denote this evaluation by $Z(Y)$. Note that this is a well-defined
  evaluation of all variables of $Z$, and that it is independent of
  the choice of $X$. We note that with $Z=Z(Y)$,
  for all $x, x' \in X$ and $i, j \in [q]$ we have
  \begin{align*}
    A_X[x,x'] &= A[x,x'] \\
    A_X[x,i]  &= A[x,y_i] \\
    A_X[i,j]  &= A[y_i,y_j].
  \end{align*}
  Hence $A_X(Z(Y))=A[X \uplus Y]$, as claimed.
  Now, first assume that there is an item $v \in X \cap Y$, say
  $v=y_i$.  Then columns $v$ and $i$ of $A_X(Z(Y))$ are identical,
  hence $A_X(Z(Y))$ is singular. This is consistent with the fact that
  $Y$ in this case fails to extend $X$.

  Next, assume $X \cap Y = \emptyset$. Then the above may be simply
  stated as $A_X(Z(Y)) = A[X \cup Y]$ up to isomorphism
  (i.e., up to the ordering on $V$).  Then clearly $p_X(Z(Y))=0$
  if and only if $A[X \cup Y]$ is singular.  Hence $p_X$ evaluated
  at a value $Z=Z(Y)$, $Y \in S$, sieves precisely for sets $Y \in S$
  that extend $X$ in $(D,T)$.

  We are now essentially done. First, we note that the degree of $p_X$
  in $Z$ is at most $q$.  Indeed, the terms of $\Pf A_X$ ranges over
  perfect matchings $M$ in the support graph of $A_X$ over vertex set
  $X \cup [q]$, and every edge of such a matching $M$ corresponds to
  an entry $A_X[u,v]$ of $A_X$. 
  Such an entry is a constant if $u, v \in X$, and has degree 1 in $Z$
  otherwise. Since at most $q$ edges of $M$ intersect the vertex set
  $[q]$, the bound follows.  

  Hence, we have shown the existence of a sieving polynomial family of
  degree $q$. Indeed, for every $X \subseteq T$ the polynomial $p_X$
  has degree at most $q$ in $Z$, and for every $Y \in S$, $Y$ extends
  $X$ if and only if $p_X(Z(Y)) \neq 0$.  Let
  $Z_S=\{Z(Y) \mid Y \in S\}$.  By Corollary~\ref{cor:bd-nonvanish}
  we can compute a set $\rep Z_S \subseteq Z_S$ with $|\rep Z_S|=O(k^q)$ 
  that is representative for $Z_S$ up to degree-$q$ polynomials.
  We then construct a set $\rep S \subseteq S$, by arbitrarily
  including one item $Y \in S$ with $Z(Y)=Z'$ for every $Z' \in \rep Z_S$.
  The set $\rep S$ is now representative for $S$ in $(D,T)$. 
\end{proof}

If an ordering $\prec$ on $S$ is given, we may simply use
Lemma~\ref{lm:poly-nonvanish-order} in place of
Lemma~\ref{lm:poly-nonvanish} for an order-sensitive version of
Cor.~\ref{cor:bd-nonvanish}, and proceed as above.

\subsection{Proof (rank version)}

We now prove the more general version of the result, i.e.,
Theorem~\ref{thm:dm-extend-rank}. 

Let $A$ be a skew-symmetric matrix directly representing a
delta-matroid $D=(V,\cF)$. Let $T \subseteq V$ be a set of terminals
and let $k=\rho(T,\emptyset)=r(A[\cdot,T])$ be the rank of the columns
$T$ in $A$.  Let $q=O(1)$ be fixed and let $S \subseteq \binom{V}{q}$
be a set of $q$-sets over $V$.  We will show that we can compute a
representative set $\rep S$ for $S$ in $(D,T)$ with
$|\rep S|=O(k^q)$, thereby proving Theorem~\ref{thm:dm-extend-rank}. 

The proof is similar to Theorem~\ref{thm:dm-extend}. However, in order
to proceed (and not having to create a set $Z$ of $\Omega(|T| \cdot q)$
different variables) we first need a better grasp of the structure of
$A$.  

We first do away with the requirement that $X \cap Y = \emptyset$.

\begin{lemma} \label{lm:ignore-overlaps}
  Let $A$ be a skew-symmetric matrix directly defining a delta-matroid
  $D(A)=(V, \cF)$. Let $V^+=\{v^+ \mid v \in V\}$ be a set of new elements,
  and for any $U \subseteq V$ denote $U^+=\{v^+ \mid v \in U\}$.  
  We can compute a skew-symmetric matrix $A'$ over columns $V \cup V^+$,
  such that for any sets $X, Y \subseteq V$, $Y$ extends $X$
  in $D(A)$ if and only if $X \cup Y^+$ is feasible in $D(A')$. 
\end{lemma}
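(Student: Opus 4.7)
The plan is to construct $A'$ by tiling four copies of $A$ into a block form indexed by $V \cup V^+$, relying on the skew-symmetry of $A$ to make the sign bookkeeping work out. Concretely I would set
\[
A'[u, v] = A'[u, v^+] = A'[u^+, v] = A'[u^+, v^+] = A[u, v]
\]
for all $u, v \in V$. A routine check shows that this is skew-symmetric with zero diagonal: in each of the four block cases, the identity $A'[b, a] = -A'[a, b]$ reduces to the skew-symmetry $A[v, u] = -A[u, v]$ of $A$.

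To verify the equivalence, I would split into two cases according to whether $X \cap Y$ is empty. If $X \cap Y = \emptyset$, then after permuting rows and columns of $A'[X \cup Y^+]$ to list the indices from $V$ first and those from $V^+$ second, the four block definitions show that $A'[X \cup Y^+]$ coincides, entry for entry, with $A[X \cup Y]$ (listed with $X$ first and $Y$ second). Hence one is non-singular iff the other is, which is precisely the condition that $Y$ extends $X$ in $D(A)$.

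If on the other hand some $v$ lies in $X \cap Y$, then both $v \in V$ and $v^+ \in V^+$ appear in $X \cup Y^+$. I would show that the columns of $A'[X \cup Y^+]$ indexed by $v$ and by $v^+$ are identical: for every row $u \in X$, both entries equal $A[u, v]$; and for every row $u^+ \in Y^+$, both entries equal $A[u, v]$ again, where for the column indexed by $v$ this uses the skew-symmetry cancellation $A'[u^+, v] = -A'[v, u^+] = -A[v, u] = A[u, v]$. Hence $A'[X \cup Y^+]$ is singular, consistent with $Y$ failing to extend $X$.

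The only real obstacle is the sign bookkeeping in the overlap case: a careless tiling would leave the columns indexed by $v$ and $v^+$ as negatives of each other (still forcing singularity, but less transparently), whereas the uniform choice above makes them literally equal. The construction of $A'$ from $A$ is obviously polynomial-time, and since $A'$ is skew-symmetric it directly represents the normal, even delta-matroid $D(A')$ by the definitions recalled in Section~\ref{sec:delta-matroids}.
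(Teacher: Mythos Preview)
Your proposal is correct and follows essentially the same approach as the paper: the same block-tiling construction $A'[u,v]=A'[u,v^+]=A'[u^+,v]=A'[u^+,v^+]=A[u,v]$, and the same two-case analysis (identical matrices when $X\cap Y=\emptyset$, duplicated columns forcing singularity otherwise). Your extra care about the sign bookkeeping is harmless but not needed --- the paper simply observes that $v$ and $v^+$ index identical columns directly from the definition.
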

\begin{proof}
  Create $A'$ from $A$ by simply duplicating every element.  In other
  words, let $u, v \in V$. Then we define $A'$ so that
  \[
    A'[u^+,v^+]=A'[u^+,v]=A'[u,v^+]=A'[u,v]=A[u,v].
  \]
  Clearly, $A'$ is still skew-symmetric. Furthermore, consider sets 
  $X, Y \subseteq V$. Assume first that $Y$ extends $X$ in $A$.  Then
  $X \cap Y = \emptyset$, hence $|X \cup Y^+|=|X \cup Y|$, and
  furthermore $A[X \cup Y] = A'[X \cup Y^+]$ (up to element ordering).
  Thus $X \cup Y^+$ is feasible in $D(A')$.  On the other hand, assume
  that $X \cup Y^+$ is feasible in $D(A')$.  Assume for a contradiction
  that there is an item $v \in X \cap Y$.  Then $v$ and $v^+$ define
  two distinct, identical columns of $A'[X \cup Y^+]$, contradicting
  that $A'[X \cup Y^+]$ is non-singular. Hence $X \cap Y = \emptyset$.
  Now as above, $A'[X \cup Y^+]=A[X \cup Y]$ up to element reordering,
  and $X \cup Y$ is feasible in $D(A)$. Thus $Y$ extends $X$.
\end{proof}

Having applied this operation, we may thus replace $S$ by a set
$S^+=\{Y^+ \mid Y \in S\}$, so that for any $X \subseteq V$ and
$Y \in S$ it holds that $Y$ extends $X$ in $D(A)$ if and only if $Y^+$
extends $X$ in $D(A')$, while we are guaranteed that
$X \cap Y^+= \emptyset$. We assume this operation has been applied, but
generally drop the superscript prime symbol to ease notation,
and use the triple $(A, S, T)$ instead of $(A', S^+, T)$. 

Next, we show a pivoting operation on $A$ that uncovers more structure.

\begin{lemma}
  \label{lm:pivot-to-sparse}
  Let $B \in \cF$ be a minimal feasible set subject to having $|B \cap T|=k$.
  Then $|B| \leq 2k$. Furthermore, the pivot of $A$ by $B$ has the structure
  \[
    A*B =
    \kbordermatrix{
      & B & T \setminus B & V \setminus (T \cup B) \\
      B & \alpha & \beta & \gamma \\
      T \setminus B  & -\beta^T&   O  & O \\
      V \setminus (T \cup B) & -\gamma^T & O & \delta 
    }
  \]
  where $\alpha$ through $\delta$ are matrices of appropriate
  dimension and $\alpha$ and $\delta$ are skew-symmetric.
\end{lemma}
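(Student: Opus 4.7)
The plan is to establish the claimed block structure first---this depends only on the rank hypothesis $r(A[\cdot,T])=k$ and not on minimality---and then to derive $|B| \le 2k$ by using minimality to pin down the structure of the $B \times B$ principal block. The central tool throughout is the pivoting determinant identity $\det (A*B)[X] = \det A[X \Delta B] / \det A[B]$, combined with the observation that for distinct $u, v$ the principal $2 \times 2$ submatrix at $\{u,v\}$ of a skew-symmetric matrix with zero diagonal has determinant equal to $(A*B)[u,v]^2$.

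I first verify the off-diagonal zero blocks of $A * B$. It suffices to show $(A*B)[u,v] = 0$ for each $u \in T \setminus B$ and each $v \in (T \setminus B) \cup (V \setminus (T \cup B))$ with $u \neq v$, since the symmetric halves follow from skew-symmetry, and the skew-symmetry of $\alpha$ and $\delta$ is automatic from $A*B$ being skew-symmetric. By the two observations above, this reduces to showing $\det A[B \cup \{u,v\}] = 0$. In both cases $|T \cap (B \cup \{u,v\})|$ is either $k+1$ or $k+2$, strictly exceeding the column rank $k$ of $A[\cdot,T]$. Hence the columns of $A[\cdot, T \cap (B \cup \{u,v\})]$ are linearly dependent; restricting this dependence to the rows indexed by $B \cup \{u,v\}$ exhibits a non-trivial dependence among a subset of the columns of the principal submatrix $A[B \cup \{u,v\}]$, forcing it to be singular.

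Second, I use minimality to bound $|B|$. For distinct $u, v \in B_N := B \setminus T$, the set $B \setminus \{u,v\}$ is a proper subset of $B$ with $(B \setminus \{u,v\}) \cap T = B \cap T$ of size $k$; by minimality of $B$ it is infeasible, i.e., $\det A[B \setminus \{u,v\}] = 0$, which via the pivoting identity forces $(A*B)[u,v] = 0$. Together with the zero diagonal this gives $(A*B)[B_N,B_N] = O$, so $(A*B)[B]$ has the block form $\begin{pmatrix} X & Y \\ -Y^T & O \end{pmatrix}$ with $X$ skew-symmetric of size $k \times k$ and $Y$ of size $k \times |B_N|$. But $(A*B)[B]$ is non-singular, since $\det(A*B)[B] = 1/\det A[B] \neq 0$. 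If $|B_N| > k$, then $Y$ has more columns than rows, so there exists $\vec c \neq 0$ with $Y \vec c = 0$, and then $(\vec 0, \vec c)$ would lie in the kernel of $(A*B)[B]$---a contradiction. Hence $|B_N| \le k$ and $|B| \le 2k$.

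The only step that genuinely requires minimality is the vanishing of $(A*B)[B_N,B_N]$; everything else is rank bookkeeping via the pivoting identity and skew-symmetry. The point to get right is precisely this correspondence between the minimality condition and the shape of the pivoted matrix, after which the cardinality bound drops out of a one-line dimension argument on the block-antidiagonal matrix.
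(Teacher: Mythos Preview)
Your proof is correct and follows essentially the same strategy as the paper's: both pivot by $B$, use the determinant identity for $2\times 2$ principal submatrices to translate between entries of $A*B$ and feasibility of sets $B \Delta \{u,v\}$, exploit minimality to get $(A*B)[B\setminus T]=O$, and use the rank bound on $T$ for the remaining zero blocks. The only cosmetic differences are that the paper derives $|B|\le 2k$ via the support graph of $(A*B)[B]$ (the Pfaffian gives a perfect matching, and $B\setminus T$ is independent in that graph, hence at most half of $B$), whereas you use an equivalent direct kernel argument on the block matrix; and the paper phrases the off-diagonal vanishing as a feasibility contradiction rather than a column-rank argument.
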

\begin{proof}
  Let $A'=A*B$.  To bound $|B|$, we first show $A'[B \setminus T]=0$.
  Indeed, assume $A'[u,v]=x \neq 0$ for some $u, v \in B \setminus T$.
  Then $u \neq v$ since the diagonal is zero, hence $A'[\{u,v\}]$ is
  non-singular. This implies that
  $B'=B \Delta \{u,v\} = B \setminus \{u,v\}$ is feasible.  But since
  $B \cap T = B' \cap T$, this contradicts the choice of $B$.
  Hence $A'[B \setminus T]=0$.  Next,
  let $G$ be the support graph of $A'[B]$.  Since $\emptyset$ is
  feasible for $A$, $A'[B]$ is non-singular.  In particular $G$ has a
  perfect matching. But by the above, $B \setminus T$ is an
  independent set in $G$. Hence $|B \setminus T| \leq |B|/2$, and we
  conclude $|B| \leq 2k$. 

  For the second part, we only need to show that
  $A'[T \setminus B, V \setminus B]=0$; the rest follows from
  skew-symmetry of $A'$. Assume to the contrary that
  $A'[u,v] \neq 0$ for some $u \in T \setminus B$ and
  $v \in V \setminus B$.  Then $A'[\{u,v\}]$ is
  non-singular, hence $B'=B \Delta \{u,v\} = B \cup \{u,v\}$ is a
  feasible set in $A$.  But $|B' \cap T| > |B \cap T|$, contradicting
  the choice of $B$. Hence $A'[T \setminus B, V \setminus B]=0$
  and $A'=A*X$ has the structure described in the lemma by skew
  symmetry. 
\end{proof}

We furthermore note that, having applied Lemma~\ref{lm:ignore-overlaps},
we may choose $B$ from the set $V$ of original elements of $D(A)$.
Thus we proceed assuming $B \cap Y = \emptyset$ for every $Y \in S$. 

Now, consider the matrix $A'=A*B$, and let $T'=T \cup B$ be a new
terminal set.  Consider sets $X \subseteq T$ and $Y \in S$.
By Lemma~\ref{lm:ignore-overlaps}, $Y$ extends $X$ in $D(A)$ if and
only if $X \cup Y$ is feasible in $D(A)$, and by the pivoting
operation this holds if and only if $(X \cup Y) \Delta B$ is feasible in $D(A')$.
Furthermore, since $B \cap Y=\emptyset$ we have
\[
(X \cup Y) \Delta B = (X \Delta B) \cup Y.
\]
Hence, $Y$ extends $X$ in $D(A)$ if and only if $Y$ extends
$X \Delta B$ in $D(A')$. Since $X \Delta B \subseteq T'$, it suffices
to find a set $\rep S \subseteq S$  that represents $S$ in
$(D(A'),T')$. We show that this is possible using degree-$q$ sieving
polynomials, due to the sparse structure of $A'$. 

\begin{lemma}
  Let $A'=A*B$ and $T'=T \cup B$ be as above.  In polynomial time, we
  can compute a set $\rep S \subseteq S$ that represents $S$ in
  $(D(A'), T')$ and has $|\rep S|=O(k^q)$. 
\end{lemma}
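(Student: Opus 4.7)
The plan is to mirror the sieving-polynomial construction from the proof of Theorem~\ref{thm:dm-extend}, but to exploit the block sparsity of $A'=A*B$ from Lemma~\ref{lm:pivot-to-sparse} so that only $O(k)$ variables are needed instead of $O(|T|)$. Introduce
\[
  Z = \{z_{b,i} \mid b \in B,\, i \in [q]\} \cup \{z'_{i,j} \mid 1 \leq i < j \leq q\},
\]
which has $|Z| \leq 2kq + \binom{q}{2} = O(k)$ since $|B| \leq 2k$ and $q=O(1)$. For each $Y = \{y_1, \ldots, y_q\} \in S$ define the evaluation $Z(Y)$ by $z_{b,i} := A'[b, y_i]$ and $z'_{i,j} := A'[y_i, y_j]$. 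Crucially, no variables are needed for entries $A'[x, y]$ with $x \in T \setminus B$ and $y \in Y$, precisely because those entries are identically zero thanks to the block $A'[T \setminus B, V \setminus B] = 0$ from Lemma~\ref{lm:pivot-to-sparse}, combined with $Y \subseteq V \setminus B$ (which holds after applying Lemma~\ref{lm:ignore-overlaps} and choosing $B$ from the original copy of $V$).

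Next, for each $X \subseteq T'$ I would decompose $X = X_B \cup X_T$ with $X_B = X \cap B$ and $X_T = X \cap (T \setminus B)$, and build a skew-symmetric matrix $A_X$ indexed by $X_B \cup X_T \cup [q]$: the block on $X_B \cup X_T$ is the constant submatrix $A'[X]$; the $X_B \times [q]$ entries are the variables $z_{b,i}$; the $X_T \times [q]$ entries are zero; and the $[q] \times [q]$ block holds the variables $z'_{i,j}$, filled in by skew-symmetry. By the sparsity of $A'$, evaluating at $Z=Z(Y)$ reproduces $A'[X \cup Y]$ up to reordering of rows and columns, so $p_X(Z) := \Pf A_X$ is a sieving polynomial for the property ``$Y$ extends $X$ in $D(A')$''.

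The crux that I expect to require the most care is the degree bound $\deg p_X \leq q$. Expand $\Pf A_X$ as the usual signed sum over perfect matchings $M$ of the support graph of $A_X$. Every vertex in $X_T$ has zero entries outside the columns indexed by $X_B$, so $M$ must match $X_T$ entirely into $X_B$; these $|X_T|$ edges all carry constant entries of $A'$. The remaining matching edges then lie inside $(X_B \setminus M(X_T)) \cup [q]$, and among those only the edges touching $[q]$ carry variables, each of degree one in $Z$; but any perfect matching uses at most $q$ edges touching $[q]$. Hence every term of $\Pf A_X$ has total degree at most $q$ in $Z$, as required.

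Finally, I would apply Corollary~\ref{cor:bd-nonvanish} to $\{Z(Y) \mid Y \in S\} \subseteq \F^{|Z|}$ at degree $d=q$, obtaining a representative subset of size $O(|Z|^q) = O(k^q)$, and pull it back to a subset $\rep S \subseteq S$ by choosing one preimage per retained vector. For the min-representative refinement, Lemma~\ref{lm:poly-nonvanish-order} is substituted for Corollary~\ref{cor:bd-nonvanish} exactly as at the end of the proof of Theorem~\ref{thm:dm-extend}. Combined with the earlier reduction $(D(A),T,X) \leadsto (D(A'),T',X \Delta B)$ established before this lemma, this completes the proof of Theorem~\ref{thm:dm-extend-rank}.
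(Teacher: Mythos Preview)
Your proposal is correct and follows essentially the same approach as the paper: define the same variable set $Z$ indexed by $B \times [q]$ plus the $\binom{q}{2}$ cross-terms, build the same block matrix $A_X$ exploiting the zero blocks of $A'$ from Lemma~\ref{lm:pivot-to-sparse}, take $p_X = \Pf A_X$, and invoke Corollary~\ref{cor:bd-nonvanish}. Your degree-bound argument is in fact spelled out in more detail than the paper's (which simply refers back to Theorem~\ref{thm:dm-extend}), but it is correct and matches the intended reasoning.
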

\begin{proof}
  Define a set of variables
  \[
    Z = \{z_{x,i} \mid x \in B, i \in [q]\} \cup \{z'_{i,j} \mid 1
    \leq i < j \leq q\},
  \]
  and note that $|Z|=O(k)$. Furthermore, for every $Y \in S$,
  write $Y=\{y_1,\ldots,y_q\}$ using some arbitrary ordering on $V$,
  and define an evaluation $Z(Y)$ of $Z$ defined by
  \begin{align*}
    z_{x,i} &= A'[x,y_i] \quad \forall x \in B, i \in [q] \\
    z'_{i,j}&= A'[y_i,y_j] \quad \forall 1 \leq i < j \leq q
  \end{align*}
  Next, consider some $X \subseteq T'$, and partition $X$ as
  $X=X_T \cup X_B$ where $X_B=X \cap B$ and $X_T=X \setminus B$. 
  Consider any $Y \in S$, and recall $X \cap Y = \emptyset$. 
  Then $A'[X \cup Y]$ takes the form
  \[
    A'[X \cup Y] =
    \kbordermatrix{
      & X_B & X_T & Y \\
      X_B &  A'[X_B] & A'[X_B,X_T] & A'[X_B,Y] \\
      X_T  & -A'[X_B,X_T]^T&   O  & O \\
      Y & -A'[X_BY]^T & O & A'[Y]
    }
  \]
  Define a matrix $A_X$ containing variable entries from $Z$
  such that
  \[
    A_X =
    \kbordermatrix{
      & X_B & X_T & [q] \\
      X_B &  A'[X_B] & A'[X_B,X_T] & Z_1 \\
      X_T  & -A'[X_B,X_T]^T&   O  & O \\
{}      [q] & -Z_1^T & O & Z_2
    }
  \]
  where for $x \in X_B$, $i \in [q]$ we have $Z_1[x,i]=z_{x,i}$ and
  for $1 \leq i < j \leq q$ we have $Z_2[i,j]=z'_{i,j}$, and the rest
  of $A_X$ is defined by the sparseness requirements and by
  skew-symmetry. Then it is evident that the evaluation $Z=Z(Y)$
  implies that $A_X = A'[X \cup Y]$. Hence defining $p_X(Z)=\Pf A_X$,
  for any $Y \in S$ we have $p_X(Z(Y)) \neq 0$ if and only if $Y$
  extends $X$ in $A'$. Furthermore, $p_X$ does not depend on the
  choice of $Y$ and the evaluation $Z(Y)$ does not depend on $X$.
  Finally, the argument that $p_X$ has degree at most $q$ is the same
  as in Theorem~\ref{thm:dm-extend}.
\end{proof}

This finishes the proof of Theorem~\ref{thm:dm-extend-rank}.

As with Theorem~\ref{thm:dm-extend}, if an ordering on $S$ is
additionally supplied, then we can compute a representative set
$\rep S \subseteq S$ with the same cardinality guarantee, which
furthermore contains the minimum extending set $Y \in S$
for every $X \subseteq T$.

\subsubsection{More general form}
\label{ssec:rep-new}

We now derive a more general version of Theorem~\ref{thm:dm-extend-rank},
using recent improvements in delta-matroid representations due to
Koana and the author~\cite{KoanaW25stacs}. We review the definitions.

Let $D=(V,\cF)$ be a delta-matroid. A \emph{contraction representation}
of $D$ is a skew-symmetric matrix $A$  with rows and columns indexed
by a set $V'=V \cup T$, where $T \cap V = \emptyset$,
such that $D=D(A)/T$. To distinguish the two, refer to a standard
linear representation $D=D(A') \Delta S$, $S \subseteq V$, as a
\emph{twist representation} of $D$. We can convert back and forth
between twist representations and contraction representations in
matrix multiplication time $O(n^{\omega})$, $n=|V|$. 
Contraction representations are generally easier to work with
algorithmically, as the results of~\cite{KoanaW25stacs} illustrate.

As an extension of this, let $V'=V \cup P$ and let $D'=(V',\cF')$
be a delta-matroid. The \emph{projection} $D'|P$
is the delta-matroid $D=(V,\cF)$ defined as
\[
  \cF = \{F \subseteq F \mid \exists F' \in \cF' \colon F=F' \setminus P\}.
\]
A \emph{projected linear delta-matroid} is the projection of a linear
delta-matroid. In general, a projected linear delta-matroid need not
be linear, since it may not be even. However, they retain many of the
positive properties of linear delta-matroids. In~\cite{KoanaW25stacs},
it is shown that every projected linear delta-matroid $D=(V,\cF)$ can
be converted in (randomized) polynomial time to an \emph{elementary projection}
$D=D(A')|P$ where $|P|=1$.

As an illustration, consider again a linear matroid represented
by a matrix $M$ with rows indexed by $V$. Let $M$ have rank $k$ and
assume that $M$ has full row rank. Let $T$ index the rows of $M$
and consider the matrix
\[
  A = \kbordermatrix{& T & V \\
    T & O & M \\
    V & -M^T & O}.  
\]
Then $D(A)/T$ is a contraction representation of the basis
delta-matroid represented by $M$, i.e., $F \subseteq V$ is feasible in
$D(A)/T$ if and only if $F$ is a basis of $M$, 
and $D(A)|T$ is a projection representation of the independent set
delta-matroid of $M$, i.e., $F \subseteq V$ is feasible in $D(A)|T$ if
and only if $F$ is independent in $M$. The more traditional
representation of a linear matroid as a linear delta-matroid, given
in Section~\ref{sec:prel}, can be derived from the above as
$D(A)/T=D(A') \Delta B$ for a basis $B \subseteq V$, 
where $A'=(A * (T \cup B))[V]$.

With these definitions in place, we are able to provide the general
form of the theorem.

\begin{corollary}[Delta-matroid representative sets (general version)]
  \label{cor:dm-extend-full}
  Let $D=(V,\cF)$ be a linear or projected linear delta-matroid given
  as a contraction representation $D=(D(A)/W_C)|W_P$ over a matrix $A$,
  for sets $W_C$ and $W_P$ where $|W_P| \leq 1$. Let $T \subseteq V$ and
  $S \subseteq \binom{V}{q}$ be given, with $k=\rho(T,\emptyset)$ and $q=O(1)$.
  Then in polynomial time we can compute a set $\rep S \subseteq S$
  that represents $S$ in $(D,T)$, with $|\rep S|=O(k^q)$.
  Additionally, given an order $\prec$ on $S$, we can furthermore ensure
  that $\rep S$ min-represents $S$ in $(D,T)$.
\end{corollary}

To show Cor.~\ref{cor:dm-extend-full}, we bound the size of the
contraction set $W_C$. 

\begin{claim}
  Under matrix multiplication-time operations, we can assume that
  $|W_C| \leq 2k$. 
\end{claim}
\begin{claimproof}
  Assume that $|W_C| > 2k$. 
  Let $F \subseteq V \cup W_C \cup W_P$ be a feasible set in $D(A)$
  with $W_C \subseteq F$, such that $|F \cap V|$ is minimum.  Such a
  set can be computed efficiently~\cite{KoanaW25stacs}. Furthermore,
  $|F \cap V| \leq 2k$, since the set $B$ of Lemma~\ref{lm:pivot-to-sparse} is a candidate.
  Since $\emptyset$ is feasible in $D(A)$, for every $x \in F \cap V$
  there exists $y \in F$ such that $F \setminus \{x,y\}$ is feasible
  in $D(A)$. This produces a set $F' \in \cF(D(A))$ such that
  $F' \cap V = \emptyset$, and $|F' \cap W_C| \geq |W_C|-2k$.
  We can now replace the representation $A$ by the representation
  $(A * F') \setminus (F' \cap W_C)$, where we retain the projection
  set $W_P$ (if any) and keep the contraction set $W_C'=W_C \setminus F'$
  which satisfies $|W_C'| \leq 2k$.
\end{claimproof}

Given a representation $D=(D(A) / W_C)|W_P$ where $|W_C \cup W_P|=O(k)$,
we can now simply replace the set $T$ by $T'=T \cup W_C \cup W_P$.
and apply Theorem~\ref{thm:dm-extend-rank} to $(A, S, T')$. 
Indeed, for any $X \subseteq T$ and $Y \in S$ such that $Y$ extends
$X$ in $D$, by definition there exists $Z \subseteq W_P$
such that $Y$ extends $X \cup Z \cup W_C$ in $D(A)$,
and conversely, if $Y$ extends $X \cup Z \cup W_C$ in $D(A)$
for some $X \subseteq T$ and $Z \subseteq W_P$,
then by definition $Y$ extends $X$ in $D$. 
Furthermore, the rank of $T'$ is $O(k)$. The bound follows.

\subsection{Lower bounds}
\label{sec:rep-lb}

We make some observations to note that the bounds in
Theorems~\ref{thm:dm-extend} and~\ref{thm:dm-extend-rank}
are optimal for $q=O(1)$, and that some natural extensions are
impossible.

We briefly note the distinction between our upper bounds, which take
the form $O(k^q)$, and our lower bounds below, which take the more
familiar form $\binom{k}{q}$ matching the matroid representative sets
statement. For the setting when $q=\Theta(k)$, this difference is
significant, and such a setting does occur in the algorithmic
applications of representative sets~\cite{FominLPS16JACM}.
However, we leave the question of developing tight bounds for 
Theorem~\ref{thm:dm-extend} and Theorem~\ref{thm:dm-extend-rank}
when $q=\omega(1)$ for future work. (However, for algorithmic purposes,
many of the applications of~\cite{FominLPS16JACM} can be replaced by
determinantal sieving~\cite{theoretics:14026}, if randomization is
tolerated, and determinantal sieving has recently been given a tight
generalization to the delta-matroid setting~\cite{abs-2502-13654}.)

\begin{lemma} \label{lemma:lb-repairs}
  For every $k \geq q \geq 1$, a representative set $\rep S$ for a
  directly represented delta-matroid $D(A)$, with respect to a set $T$
  of $|T|=k$ terminals and a collection $S$ of $q$-sets, may require
  $|\rep S| \geq \binom{k}{q}$, even if $A[T]=0$ and we only require
  an element $Y \in \rep S$ to be present for the case that there is a
  set $Y' \in S$ that repairs $X$, with $X \subseteq T$ and $|X|=q$.
\end{lemma}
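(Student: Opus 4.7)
The plan is to exhibit, for each $k \geq q \geq 1$, a single instance $(D(A),T,S)$ in which every representative set has cardinality at least $\binom{k}{q}$, using the matching delta-matroid as the workhorse. First I would build a graph $G$ on vertex set $V = T \cup \bigcup_{X} Y_X$, where $T=\{t_1,\dots,t_k\}$ is an independent set and, for each $q$-subset $X=\{t_{i_1},\dots,t_{i_q}\}$ of $T$, $Y_X=\{y_{X,1},\dots,y_{X,q\}}\}$ is a disjoint set of $q$ fresh vertices attached by the pendant edges $t_{i_j}y_{X,j}$ for $j\in[q]$. Let $D(A)$ be the matching delta-matroid of $G$, which is directly represented by the Tutte matrix $A_G$ over any sufficiently large field by Section~\ref{sec:delta-matroids}. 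Since $T$ is independent in $G$, the principal submatrix $A[T]$ is zero, and I would take $S=\{Y_X \mid X\in\binom{T}{q}\}$, so $|S|=\binom{k}{q}$.

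The key combinatorial observation is that each $y_{X,j}$ has exactly one neighbour in $G$, namely $t_{i_j}$. Consequently, for any $X'\in\binom{T}{q}$ and any $Y_{X''}\in S$, a perfect matching of $G[X'\cup Y_{X''}]$ exists only if $X'=X''$, because each $y_{X'',j}$ must be matched to its unique neighbour $t_{i_j}$ and this requires $t_{i_j}\in X'$. Hence the unique element of $S$ that extends $X\in\binom{T}{q}$ is $Y_X$ itself; moreover, $X$ is infeasible in $D(A)$ (as $A[X]=0$ for every non-empty $X\subseteq T$), and $X\cup Y_X$ is a minimal feasible superset of $X$, since removing any single $y_{X,j}$ leaves $t_{i_j}$ with no neighbour inside $X\cup(Y_X-y_{X,j})$ and so kills the perfect matching. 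Thus $Y_X$ indeed repairs $X$.

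Putting the pieces together, every $X\in\binom{T}{q}$ witnesses the repair hypothesis via $Y_X\in S$, and the only candidate in $S$ extending $X$ is $Y_X$. Consequently any representative set $\rep S\subseteq S$ meeting even the weakened requirement of the lemma must contain $Y_X$ for each such $X$, giving $|\rep S|\geq\binom{k}{q}$, which matches $|S|$ and is therefore tight.

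The only delicate step is verifying minimality, which is precisely why I would use pendant edges rather than a denser bipartite gadget: any alternative neighbourhood structure would allow a proper subset of $Y_X$ to pair off with $X$ using ``crossing'' edges, destroying the repair property and potentially collapsing several $X$'s onto a common witness in $S$. With the pendant construction, this cannot happen, and the argument goes through without any algebraic computation beyond recalling that the Tutte matrix directly represents the matching delta-matroid.
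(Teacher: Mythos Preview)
Your proof is correct and follows essentially the same approach as the paper: both use the matching delta-matroid of a graph of pendants attached to an independent terminal set $T$, and argue that each $q$-subset $X\subseteq T$ has a unique repairing $q$-set in $S$. The paper's construction is more economical, using a single set $U=\{u_1,\dots,u_k\}$ with one pendant edge $t_iu_i$ per terminal and taking $S=\binom{U}{q}$, rather than introducing a fresh block $Y_X$ for every $X$; this yields a graph on $2k$ vertices instead of $k+q\binom{k}{q}$, but the uniqueness and minimality arguments are otherwise identical.
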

\begin{proof}
  Let $T=\{t_1,\ldots,t_k\}$ and $U=\{u_1,\ldots,u_k\}$, and let $G$ be
  the graph on $T \cup U$ consisting of edges $t_iu_i$, $i \in [k]$.
  Let $A$ be the matching delta-matroid of $G$ and let $S = \binom{U}{q}$. 
  Note that the feasible sets in $A$ are precisely the sets
  \[
    F_I=\bigcup_{i \in I} \{t_i, u_i\}
  \]
  for subsets $I \subseteq [k]$. Furthermore $A[T]=0$ since $T$ is an
  independent set in $G$. Now, for every $Y \in S$, say $Y=\{u_i \mid i \in I\}$
  for some $I \in \binom{[k]}{q}$, let $X=\{t_i \mid i \in I\}$.
  Then $Y$ repairs $X$, but no other member of $S$ repairs (or even extends) $X$. 
  Hence every member of $S$ is required in a representative set. 
\end{proof}

We note a similar statement for extending sets when $X$ is feasible.

\begin{lemma} \label{lemma:lb-extends}
  For every $k \geq q \geq 1$ with $k$ and $q$ even, a representative
  set $\rep S$ for a directly represented delta-matroid
  with respect to a terminal set $T$ with $|T|=k$ and a collection
  $S$ of $q$-sets may require $|\rep S| \geq \binom{k}{q}$, even if
  every even subset of $T$ is feasible and we only require an element
  $Y \in \rep S$ to be present for the case that there is a set
  $Y' \in S$ that extends $X$ when $X$ is feasible and $|X|=q$,
  for $X \subseteq T$. 
\end{lemma}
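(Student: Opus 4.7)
The plan is to give a direct matching-delta-matroid construction in the spirit of Lemma~\ref{lemma:lb-repairs}, but with the roles essentially flipped: $T$ will become a clique so that every even subset of $T$ is automatically feasible, while each potential $q$-subset of $T$ will be singled out by a pendant ``fingerprint'' gadget whose unique perfect matching pins down the corresponding $Y\in S$.

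Concretely, I will take $T=\{t_1,\ldots,t_k\}$ and, for each $J=\{i_1<\cdots<i_q\}\in\binom{[k]}{q}$, introduce fresh vertices $Y_J=\{u_J^1,\ldots,u_J^q\}$. The graph $G$ will consist of all pairs inside $T$ together with the fingerprint edges $\{t_{i_j}u_J^j\mid j\in[q]\}$ for every $J$, and no other edges. I then let $A$ be the Tutte matrix of $G$ over a sufficiently large field, so that $A$ directly represents the matching delta-matroid $D(A)$, and I set $S:=\{Y_J\mid J\in\binom{[k]}{q}\}$.

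The verification then splits into two straightforward steps. First, since $G[T]$ is a complete graph, every even subset of $T$ admits a perfect matching and is therefore feasible in $D(A)$; in particular each $X_J:=\{t_i\mid i\in J\}\in\binom{T}{q}$ is feasible because $q$ is even. Second, I will show that $Y_J$ is the unique element of $S$ that extends $X_J$. Existence is immediate, since $\{t_{i_j}u_J^j\mid j\in[q]\}$ is a perfect matching of $G[X_J\cup Y_J]$. For uniqueness, each $u_{J'}^j$ has only one neighbor in $G$, namely $t_{i'_j}$ where $i'_j$ is the $j$-th element of $J'$; a perfect matching of $G[X_J\cup Y_{J'}]$ therefore forces $i'_j\in J$ for every $j$, giving $J'\subseteq J$ and hence $J'=J$ by cardinality. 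Consequently every representative set must contain every $Y_J$, so $|\rep{S}|\geq\binom{k}{q}$.

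The only real design choice is arranging both requirements (every even subset of $T$ feasible, and a distinct witness $Y_X\in S$ for each $X\in\binom{T}{q}$) at the same time, since Lemma~\ref{lemma:lb-repairs} exploited precisely the absence of edges inside $T$ to obtain distinctness. Overlaying a clique on $T$ with the fingerprint pendants resolves this without creating interference: the pendants $u_J^j$ only touch $T$-vertices inside $X_J$, so the clique edges on $T$ cannot be used to complete a perfect matching of $G[X_J\cup Y_{J'}]$ when $J'\neq J$. Beyond this observation I do not anticipate any serious technical obstacle; the remaining content is routine Tutte-matrix bookkeeping.
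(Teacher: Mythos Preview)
Your construction is correct and the verification goes through exactly as you describe: each $u_{J'}^j$ is a pendant with unique neighbour $t_{i'_j}$, so any perfect matching of $G[X_J\cup Y_{J'}]$ forces $J'\subseteq J$ and hence $J'=J$. The Tutte matrix of $G$ directly represents the matching delta-matroid, and the clique on $T$ guarantees that every even subset of $T$ is feasible, so all the hypotheses of the lemma are met.

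The paper takes a different route. It builds a $K_k$ on an auxiliary vertex set $Z=\{v_1,\ldots,v_k\}$ and attaches \emph{two} pendants $u_i,w_i$ to each $v_i$; it then pivots the Tutte matrix by $Z\cup T$ (where $T=\{w_1,\ldots,w_k\}$) and restricts to $U\cup T$ with $U=\{u_1,\ldots,u_k\}$. The resulting delta-matroid has the clean description that $W\subseteq U\cup T$ is feasible iff $|W|$ is even and no index $i$ satisfies $W\cap\{u_i,w_i\}=\{u_i\}$, and one takes $S=\binom{U}{q}$. Compared to your argument, the paper's construction is less elementary (it needs the pivoting machinery) but yields a delta-matroid on a ground set of size only $2k$, whereas your ground set has size $k+q\binom{k}{q}$. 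Your approach has the advantage of being a pure matching delta-matroid with no pivoting, which makes the argument shorter and more transparent; the paper's construction would be preferable if one additionally wanted to argue that the lower bound already holds for delta-matroids on a ground set polynomial in $k$.
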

\begin{proof}
  Create a graph $G$ as a complete graph $K_k$ on a vertex set $Z=\{v_1,\ldots,v_k\}$
  and with two pendants $u_i$, $w_i$ for every vertex $v_i \in Z$.
  Let $U=\{u_1,\ldots,u_k\}$, $T=\{w_1,\ldots,w_k\}$ and $S=\binom{U}{q}$.

  Let $A$ be the Tutte matrix for $G$, and consider the delta-matroid
  directly represented by the matrix $A'=(A*(Z \cup T))[U \cup T]$.
  Note that this is well defined, since $G[Z \cup T]$ has a perfect
  matching. Furthermore, a set $W \subseteq U \cup T$ is feasible in $A'$
  if and only if $|W|$ is even and there is no index $i$
  such that $W \cap \{u_i,w_i\} = \{u_i\}$. Indeed, perfect matchings
  in $G[V']$ exist for a vertex set $V'=W \Delta (T \cup Z)$
  if and only if $V'$ is even and there is no index $i$ such that
  $u_i, w_i \in V'$, which is an equivalent condition.

  Now, for every $I \in \binom{k}{q}$, consider $X=\{w_i \mid i \in I\}$.
  Then $X$ is feasible, $|X|=q$, and for any $J \in \binom{k}{q}$,
  the set $Y(J)=\{u_j \mid j \in J\}$ extends $X'$ in $D(A')$
  if and only if $I=J$.
\end{proof}

In particular, these results exclude a direct translation of the
matroid representative sets statement to delta-matroids.
That is, a set $\rep{S} \subseteq S$ that \emph{$q$-represents} $S$
in a delta-matroid $D$, in the sense that we only need to preserve
sets $Y \in \rep{S}$ extending $X$ if $|X|=q$, cannot have size
bounded as a function in $q$. Hence, the more involved statements,
of fixing a set of terminals $T$ and asking for sets $Y$ that extend
sets $X \subseteq T$, are a necessary ingredient to introduce a
meaningful ``parameter'' $k \ll n$ into the setting.

This difference fundamentally boils down to the existence of a
truncation operation for matroids, which does not exist for
delta-matroids. Indeed, this can be made precise. 
For a linear matroid $M$ represented by a matrix $A$ and a rank bound $k$
we can compute a matrix $A'$ representing $k+q$-truncation of $A$,
and working over $A'$ gives us a representative set of
size $\binom{k+q}{q}$ for $k$-representative sets over a family of
$q$-sets in a matroid. In the same way, let $D$ be a directly represented
linear delta-matroid and $q$ a constant. If there existed a directly
representable linear delta-matroid $D'$ such that for every set $S$
of cardinality at most $q$, $S$ is feasible in $D$ if
and only if $S$ is feasible in $D'$ and $D'$ is of rank at most $f(q)$
for some function of $q$, then Theorem~\ref{thm:dm-extend-rank} 
would contradict the above two lemmas and show a $q$-representative
set of some size $f(q)^{O(q)}$ not depending on $k$. 

\subsection{Examples}

Let us show a few examples to illustrate the result.

\paragraph{Linear matroid rep-sets via delta-matroids.}
First, let $M=(V,\cI)$ be a linear matroid, represented by a matrix
$A$, and let $S \subseteq \binom{V}{q}$. We will show how the
rank-based bound on a delta-matroid representative set implies the
matroid version (up to constant factors and assuming $q=O(1)$).
We follow the representation from Section~\ref{sec:dm-constructions}
with a slight simplification. 
Let $A'$ be the truncation of $A$ to rank $k+q$ and assume
$A' \in \F^{k+q \times V}$ for some field $\F$.
Introduce $k+q$ artificial elements $u_i$, $i \in [k+q]$, and let $U=\{u_i \mid i \in [k+q]\}$. 
Define the skew-symmetric matrix 
\[
  A_B =
  \kbordermatrix{
     & U & V \\
   U & O & A' \\
   V & -(A')^T & O
  }
\]
and let $D=D(A_B)$ be the delta-matroid over $U \cup V$ directly
represented by $A_B$. We note that for any $B \subseteq V$,
$B \cup U$ is feasible in $D$ if and only if $B$ is a basis of $A'$. 
Indeed, following Section~\ref{sec:dm-constructions} more closely,
consider the matrix
\[
  N=(I \, A')
\]
over columns $U \cup V$. This represents a linear matroid $M'$
of rank $k+q$ where $U$ is an artificially added basis
and $M' \setminus U$ is precisely the matroid represented by $A'$ 
(i.e., the $k+q$-truncation of $M$). By Section~\ref{sec:dm-constructions},
a set $F$ is feasible in $D$ if and only if $F \Delta U$ is a basis of $N$.
When $U \subseteq F$, this is equivalent to $U \setminus F$ being a basis of $A'$. 

The rest now follows from Theorem~\ref{thm:dm-extend-rank} applied to
matrix $A_B$, terminal set $T=U \cup V$ and the collection of $q$-sets $S$.
This produces a set $\rep{S} \subseteq S$ with $|\rep{S}| = O(k^q)$
since $q=O(1)$ and the rank of $A_B$ is at most $2(k+q)$. 
Furthermore, let $X \subseteq V$ and $Y \in S$ be such that $Y$
extends $X$ in $M$ and $|X| \leq k$. We assume $|X|=k$ by padding 
$X$ with elements from $V \setminus (X \cup Y)$ until $X \cup Y$
forms a basis of $A'$. Then for any $Y \in S$, $Y$ extends $U \cup X$
in $D$ if and only if $Y$ extends $X$ in $A'$. 

\paragraph{Matching terminals.}
Next, let us give an example based on matching delta-matroids. 
Let $G=(V,E)$ be a graph and $T \subseteq V$ a set of terminals,
$|T|=k$, such that $G-T$ has a perfect matching. Let $V'=V \setminus T$,
let $A_G$ be the Tutte matrix of $G$ and let $D$ be the delta-matroid directly
represented by $A_G * V'$; note that $A_G[V']$ is non-singular by assumption. 
Let $X \subseteq T$ be a set such that $G[V' \cup X]$ has a perfect matching.
Then for a set $Y=\{u,v\}$ with $Y \subseteq V'$ and $uv \in E$, 
$Y$ extends $X$ in $D$ if and only if there is a perfect matching of
$G[X \cup V']$ that uses the edge $uv$. Similarly, let $X=\{u,v\} \subseteq T$
be such that $G[V' \cup X]$ does not have a perfect matching.
Let $M_0$ be a perfect matching of $G[V']$.
Then a set $Y \in \binom{V'}{2}$ extends (i.e., repairs) $X$
if and only if there are $M_0$-alternating paths between $X$ and $Y$
in $G$.

\section{The Mader delta-matroid}
\label{sec:mader-delta}

Now, we present a new linear delta-matroid construction that will be
important to our main kernelization application. 

Let $G=(V,E)$ and $T \subseteq V$; let \cT be a partition of $T$.
A \emph{$T$-path} is a path with endpoints in $T$ and internal
vertices disjoint from $T$. A \emph{$\cT$-path} is a $T$-path whose
endpoints lie in different parts of $\cT$.  To avoid having to hang
a semantic difference on a detail of typography, we also refer to
a $\cT$-path as a \emph{Mader path}.  A \emph{$T$-path packing},
respectively \emph{Mader path packing}, is a collection of pairwise
vertex-disjoint $T$-paths, respectively Mader paths.
The packing is \emph{perfect} if its endpoints exhaust $T$.
In a classical result, Mader~\cite{Mader78Hpath} showed a min-max
relation characterizing the maximum cardinality of a Mader path packing.
The version below is taken from Schrijver~\cite[Chapter~73]{SchrijverBook}.
More detailed descriptions were shown by Seb\H{o} and Szeg\H{o}~\cite{SeboS04pathpacking}
and by Pap~\cite{Pap06Madermatroids}.

\begin{theorem}[Mader's $\cT$-path-packing theorem~\cite{Mader78Hpath,SchrijverBook}]
  \label{thm:mader-min-max}
  Let $G=(V,E)$ be a graph and $T \subseteq V$ a set of terminals.
  Let $\cT$ be a partition of $T$.
  The maximum size of a Mader path packing in $(G, \cT)$ equals
  \[
    \nu(G,\cT) = \min_{U_0; U_1, \ldots, U_k} |U_0| + \sum_{i=1}^k \lfloor  \frac{|B_i|}{2} \rfloor,
  \]
  where $(U_0; U_1, \ldots, U_k)$ runs over all partitions of $V$
  such that every Mader path either intersects $U_0$, or uses an edge
  spanned by some $U_i$; and where $B_i$ contains $T \cap U_i$
  as well as every vertex of $U_i$ that has a neighbor outside $U_i$
  in $G-U_0$. 
\end{theorem}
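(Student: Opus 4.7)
The plan is to prove both directions of the stated min-max equality separately. Weak duality, i.e., $\nu(G,\cT) \leq |U_0| + \sum_{i=1}^{k} \lfloor |B_i|/2 \rfloor$ for every admissible partition $(U_0; U_1, \ldots, U_k)$, should follow from a direct charging argument. Given a Mader path packing $\cP$, I would charge each path $P \in \cP$ either to a vertex of $U_0$ if $P$ intersects $U_0$, or, otherwise, to a pair of vertices of $B_i$ for some $i$ such that $P$ uses an edge spanned by $U_i$. In the latter case the subpath of $P$ inside $U_i$ must be entered and exited, contributing two distinct vertices to $B_i$: either the endpoints of $P$ itself, which lie in $T \cap U_i$, or internal vertices of $P$ in $U_i$ that have a neighbor outside $U_i$ in $G-U_0$. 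Vertex-disjointness of $\cP$ together with the floor then yields the bound.

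For the harder direction, strong duality, my plan would be to reduce the Mader packing problem to a matroid matching (matroid parity) instance and invoke Lov\'asz's min-max theorem for linearly represented matroid matching. Concretely, one subdivides each edge into a pair of half-edges, takes a partition matroid on the half-edges incident to $T$ that forbids pairs within the same class of $\cT$, and uses the graphic or direct-sum matroid on the rest to encode the ``internally disjoint from $T$'' requirement; a maximum matroid matching in this structure is in bijection with a maximum Mader path packing. The min side of Lov\'asz's theorem produces an obstruction that can be unpacked into a vertex partition of the required form. The main technical obstacle is to verify that the matroid matching deficiency translates exactly into the expression $|U_0| + \sum_{i=1}^{k} \lfloor |B_i|/2 \rfloor$, and in particular to produce the sets $U_i$ together with their boundary vertices $B_i$ from the abstract matroid-level obstruction.

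An alternative, more self-contained plan would be to imitate the Gallai--Edmonds approach: define a notion of alternating structure relative to a given Mader packing, show that either an augmenting structure exists (yielding a larger packing) or one can read off a certifying partition $(U_0; U_1, \ldots, U_k)$ from the closure of the alternating reachability relation, and iterate until the packing is maximum. The $U_0$ would collect ``even'' vertices whose removal breaks all augmenting structures, while the $U_i$'s would arise as odd components of what remains, playing the role of blossoms. I expect the main difficulty to be the analogue of Edmonds' blossom contraction: Mader's setting admits a richer family of odd substructures than ordinary matching, and their clean classification is precisely what makes Mader's original proof and the refinements by Seb\H{o}--Szeg\H{o} and Pap delicate. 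In either route the weak-duality direction is essentially routine counting, and the genuine work lies in strong duality, specifically in the combinatorial characterization of when no further augmentation is possible.
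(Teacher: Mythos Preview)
The paper does not prove this theorem. Theorem~\ref{thm:mader-min-max} is stated as a classical result of Mader, quoted in the formulation from Schrijver's book, and is used only as background for the paper's own contributions (the Mader delta-matroid and the Mader-mimicking network). There is no ``paper's own proof'' to compare against.

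That said, your proposal is a reasonable high-level roadmap for proving Mader's theorem. Weak duality does indeed go through by the charging argument you describe. For strong duality, the reduction to linear matroid matching is the route Lov\'asz used, and your second route via alternating structures and blossom-like odd components is in the spirit of Seb\H{o}--Szeg\H{o} and Pap. Both are standard approaches in the literature; neither is carried out in this paper, and the paper does not expect you to.
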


In particular, consider the graph $G'=G-U_0-\bigcup_{i=1}^k E(U_i)$.
Then $(U_0; U_1, \ldots, U_k)$ is a valid partition if and only if
every connected component of $G'$ contains terminals from at most one
part of $\cT$.

As a precursor to this result, Gallai~\cite{Gallai61} showed
the version for packing $T$-paths rather than Mader paths, i.e., the
case $\cT=\{\{t\} \mid t \in T\}$. This result is central to the
famous $O(k^2)$-vertex kernel for \textsc{Feedback Vertex Set}
by Thomassé~\cite{Thomasse10}. 
There are also multiple extensions and generalizations, such as
packing non-zero paths in group-labelled graphs~\cite{ChudnovskyGGGLS06,ChudnovskyCG08}
and the more general notion of packing \emph{non-returning} paths~\cite{Pap07,Pap08}.
Furthermore, an algorithm for packing \emph{half-integral} Mader paths 
in linear time is central to many efficient linear-time FPT algorithms~\cite{IwataYY18FOCS}.

\subsection{Mader matroids and delta-matroids}

Let $G=(V,E)$ and $\cT$ define a Mader path-packing instance, with
$T=\bigcup \cT$.  Let $\cP$ be a Mader path packing in $(G, \cT)$, and
let $S \subseteq T$ be the endpoints of paths in $\cP$. Then we say
that $\cP$ \emph{matches} $S$.  We say that $S \subseteq T$ is
\emph{Mader matchable} if there exists a Mader path packing $\cP$ that
matches $S$.  Furthermore, let
\[
\cI = \{S \subseteq T \mid \exists S' \subseteq T \colon S' \text{ is
  Mader matchable},\, S \subseteq S'\}.
\]
It can then be shown that $M=(T, \cI)$ defines a matroid, known as a
\emph{Mader matroid} (see Schrijver~\cite[Chapter~73]{SchrijverBook}).
However, a linear representation of this matroid is not immediate from
the structure description. Lovász~\cite{Lovasz80} showed that Mader
path-packings can be found via linear matroid matching, via a somewhat
intricate construction; see also Schrijver~\cite[Chapter~73]{SchrijverBook},
who asked whether every Mader matroid is a gammoid.
Subsequently, Pap~\cite{Pap06Madermatroids} studied the structure of
the min-max decomposition in Mader's theorem more deeply and settled
this in the positive.

We consider the \emph{Mader delta-matroid}, analogous to the
distinction between matching matroids and matching delta-matroids, and
show that the Mader delta-matroid is representable.
We show the following.

\begin{theorem} \label{thm:rep-delta-mader}
  Let $G=(V,E)$ be a graph, $T \subseteq V$ a set of terminals and
  $\cT$ a partition of $T$.  Then the set of Mader matchable sets
  $S \subseteq V$ in $(G, \cT)$ forms a delta-matroid $D=(T, \cF)$.
  Furthermore, a representation of $D$ can be computed in randomized
  polynomial time.
\end{theorem}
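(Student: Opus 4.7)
The plan is to exhibit a skew-symmetric matrix directly representing the Mader delta-matroid on ground set $T$, by reducing Mader path-packing to perfect matching in a suitably augmented graph and then translating through the Tutte matrix and pivoting machinery of Section~\ref{sec:delta-matroids}. Concretely, I would construct an auxiliary graph $\tilde G$ on vertex set $T \cup U$ (with $T \cap U = \emptyset$, and with $U$ containing $V \setminus T$ together with some auxiliary gadget vertices) such that $\tilde G[U]$ has a perfect matching and, for every $S \subseteq T$,
\[
  \tilde G[S \cup U] \text{ has a perfect matching} \iff S \text{ is Mader matchable in } (G,\cT).
\]
Let $A_{\tilde G}$ be the Tutte matrix of $\tilde G$, a skew-symmetric matrix directly representing the matching delta-matroid of $\tilde G$ as in Section~\ref{sec:dm-constructions}. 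Since $U$ is then feasible in $D(A_{\tilde G})$, pivoting by $U$ is legal, and by the Schur-complement identity of Section~\ref{sec:delta-matroids} we have
\[
  \det(A_{\tilde G} \ast U)[S] \;=\; \det A_{\tilde G}[S \cup U] \, / \, \det A_{\tilde G}[U],
\]
so $A := (A_{\tilde G} \ast U)[T]$ is a skew-symmetric matrix on $T$ whose principal submatrix $A[S]$ is non-singular precisely when $\tilde G[S \cup U]$ has a perfect matching. Combined with the equivalence above, $A$ directly represents $D = (T,\cF)$, which in particular shows that $\cF$ satisfies the delta-matroid axioms.

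The heart of the argument is the construction of $\tilde G$. On top of $G$, I would attach a pendant $t^{\circ}$ to each $t \in T$, and for each block $\cT_i$ of $\cT$ add a block gadget $B_i$ on $\{t^{\circ} : t \in \cT_i\}$, engineered to satisfy three properties. First, the induced subgraph $\tilde G[U]$ admits an explicit perfect matching, so that pivoting by $U$ is legal. Second, given any Mader path packing matching $S$, one can lift it to a perfect matching of $\tilde G[S \cup U]$ by using the path edges to cover $S$ together with their path-internal non-terminals, matching the remaining non-terminals via the pre-existing perfect matching of $U$, and re-routing the pendants of the terminals in $T \setminus S$ through their respective $B_i$. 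Third, and most delicately, the block gadgets together must \emph{not} allow two terminals in the same block $\cT_i$ to be paired up by any perfect matching of $\tilde G[S \cup U]$, since such a pairing would certify feasibility for an $S$ that is not Mader matchable.

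The main obstacle is precisely this third property, and more fundamentally ensuring that the stated equivalence holds at the Pfaffian level and not merely combinatorially: any spurious perfect matching of $\tilde G[S \cup U]$ that fails to correspond to a genuine Mader path packing must be cancelled inside $\Pf A_{\tilde G}[S \cup U]$, while honest packings must not collectively cancel. The standard tool here is a sign-reversing involution: one designs the gadgets together with a canonical pairing $M \leftrightarrow M'$ on ``illegal'' perfect matchings so that paired matchings contribute opposite signs to the Pfaffian and cancel, leaving only the contributions from legitimate packings. I expect that the per-block gadget can be realised as a small graph whose internal matchings pair off unless they encode a valid ``enter-$B_i$ then leave-$B_i$'' pattern, in the spirit of the non-zero-path packing constructions alluded to after Theorem~\ref{thm:mader-min-max}. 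Given such a $\tilde G$, the randomized representation is obtained by evaluating the indeterminates of $A_{\tilde G}$ at uniformly random elements of a sufficiently large field (using Schwartz--Zippel to preserve non-singularities with high probability), verifying that $A_{\tilde G}[U]$ is non-singular, and outputting $A = (A_{\tilde G} \ast U)[T]$, all in randomized polynomial time.
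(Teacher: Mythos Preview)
Your overall architecture (build a skew-symmetric matrix over an auxiliary graph, argue that principal minors detect Mader matchability, then contract) matches the paper, but the crucial step --- forcing non-Mader $T$-path packings to cancel in the Pfaffian --- is not achieved by your proposal as stated.

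The difficulty is this: you take $A_{\tilde G}$ to be the ordinary Tutte matrix of $\tilde G$, with one distinct indeterminate per edge, and then appeal to a ``sign-reversing involution on illegal matchings'' so that paired matchings ``contribute opposite signs to the Pfaffian and cancel.'' But with all edge-variables distinct, each perfect matching of $\tilde G[S\cup U]$ contributes a \emph{distinct monomial} to $\Pf A_{\tilde G}[S\cup U]$; the sign $\sigma_M$ sits in front of a monomial that no other matching produces, so nothing cancels. Consequently $D(A_{\tilde G})$ is exactly the matching delta-matroid of $\tilde G$, and after contraction you obtain ``$S\cup U$ has a perfect matching in $\tilde G$,'' which for the splitting construction is $T$-path matchability, not Mader matchability. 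A per-block combinatorial gadget on pendants cannot repair this: whether two terminals in the same block $\cT_i$ are linked by a $T$-path is determined inside the body of $G$, not at the pendants, so the gadget sees only how many of its pendants must be absorbed, not which global pairing they realise. No concrete gadget is given, and it is not known that a purely combinatorial matching reduction for Mader path packing exists.

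The paper's proof supplies exactly the missing mechanism, and it is algebraic rather than graph-theoretic. It works over a field of characteristic~2, takes the split graph $G^*$, and \emph{identifies variables} in the Tutte matrix: $x_{uv'}=x_{u'v}$ for non-terminal edges (so reversing a cycle gives the same monomial and cancels), and for each terminal edge $tv$ with $t\in T_i$ it sets $x_{tv'}=x_{tv}z_i$. The extra $z_i$-factors make reversing a non-Mader path (both endpoints in $T_i$) produce the same monomial, hence cancel in characteristic~2, while acyclically oriented Mader packings survive because each acyclic orientation yields a unique $z$-degree profile. This is Lemma~\ref{lm:dmrepr}. The point is that cancellation is engineered by choosing a \emph{specific} evaluation of the Tutte matrix (equal variables, auxiliary block-variables $z_i$, characteristic~2), not by adding gadget vertices to a graph whose Tutte matrix keeps all indeterminates distinct.
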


For an illustration of the difference between the Mader matroid and
the Mader delta-matroid, consider the more familiar matching matroids
and matching delta-matroids. Let $G=(V,E)$ be a graph. Then the
\emph{matching matroid} of $G$ is the matroid $M(G)=(V,\cI)$ whose bases are
the endpoints $V(M)$ of maximum matchings in $G$, and the \emph{matching
delta-matroid} is the delta-matroid $D(G)=(V,\cF)$ where a set $F$ is
feasible if and only if $G[F]$ has a perfect matching.
Then the matching matroid may sound like an intriguing concept,
but closer inspection of the Gallai-Edmonds decomposition,
which witnesses the maximum matchings of $G$, reveals that matching
matroids actually have a relatively simple structure (and indeed,
they are special cases of gammoids). On the other hand, the matching
delta-matroid, linearly represented by the Tutte matrix, appears to be
a more subtle object that cannot easily be replaced by a direct
combinatorial construction.

\begin{figure}
  \centering
  \begin{subfigure}{0.4\textwidth}
    \centering
    \begin{tikzpicture} 
      \tikzstyle{node}=[]%
  
      \node[node] (s) at (0,0) {$s$};
      \node[node,right=of s] (a) {$a$} edge (s);
      \node[node,below=of a] (b) {$b$} edge (a);
      \node[node,right=of a] (c) {$c$} edge (a);
      \node[node,right=of c] (t) {$t$} edge (c);
 
    \end{tikzpicture}
    \caption{A graph $G$ with terminals $s,t$}
    \label{a}
  \end{subfigure}\hfill
  \begin{subfigure}{0.4\textwidth}
    \begin{tikzpicture}
      \tikzstyle{node}=[]%
  
      \node[node] (s) at (0,0) {$s$};
      \node[node,right=of s] (a) {$a$};
      \node[node,below=of a] (ap) {$a'$} edge (a);
      \node[node,right=of a] (b) {$b$};
      \node[node,below=of b] (bp) {$b'$} edge[thick,red] (b);
      \node[node,right=of b] (c) {$c$};
      \node[node,below=of c] (cp) {$c'$} edge (c);
      \node[node,right=of c] (t) {$t$};
      \path[draw,thick,color=red] (s) -- (a) -- (cp) -- (t);
      \path[draw] (s) -- (ap) -- (c) -- (t);
      \path[draw] (a) -- (bp); \path[draw] (b) -- (ap);
    \end{tikzpicture}
    \caption{The transformation of $(G,\{s,t\})$.}
    \label{b}
  \end{subfigure}
  \caption{The matching transformation for $T$-packings. The
    highlighted perfect matching on the right corresponds to the $st$-path on
    the left.}
  \label{fig:gallai}
\end{figure}

As a preview of the proof of Theorem~\ref{thm:rep-delta-mader},
let us present a known reduction of the easier case of
$T$-path-packings to graph matching.
Let a graph $G$ and terminals $T \subseteq V(G)$ be given, and let $\cF \subseteq 2^T$
be the set of all terminal sets $S \subseteq T$ that are matchable by
$T$-paths in $G$. Construct an auxiliary graph $G^*$ as follows.
Write $U=V(G) \setminus T$ and let $U'=\{u' \mid u \in U\}$ be a set of
copies of vertices of $U$. We let $V(G^*)=V(G) \cup U'$, with
\begin{align*}
  E(G^*) =& \{tu, tu' \mid tu \in E(G), t\in T, u \in U\}
           \cup
 \{uv', u'v \mid uv \in E(G[U])\} \cup
           \{uu' \mid u \in U\}.
\end{align*}
This can be seen as applying the \emph{bipartite double cover}
operation on $G[U]$, after adding a loop on every vertex $u \in U$,
while leaving the terminal vertices $T$ unchanged.
An illustration is shown in Figure~\ref{fig:gallai}.
It is easy to verify that for every
$S \subseteq T$, $G^*[S \cup U \cup U']$ has a perfect matching if and
only if $S$ is matchable by $T$-paths in $G$. 

It follows from this construction that there is a delta-matroid for
$T$-path-packing, and that this is directly representable.  Indeed,
let $A^*$ be the Tutte matrix of $G^*$, and let
\[
  A=(A^**(U \cup U'))[T].
\]
In other words, we perform a contraction $D(A)=D(A^*)/(U \cup U')$. 
Then $A$ defines a delta-matroid $D(A)$ whose feasible sets are
precisely the $T$-matchable subsets of $T$. If we wish, we can also
use this for a representation of the corresponding matroid, 
by interpreting $A$ as representing a linear matroid $M(A)$ over $T$. 
This defines the matroid where a set is independent if it can be
``covered'' or ``saturated'' by a $T$-path packing, and where the
bases are precisely the $T$-matchable sets of maximum cardinality. 

We construct a representation for the Mader delta-matroid using a
similar construction, by starting from the Tutte matrix of a graph
whose edges are decorated with specific polynomials.  We show
Theorem~\ref{thm:rep-delta-mader} from the following result,
which also allows us to interrogate induced subgraphs of $G$.

\begin{theorem} \label{thm:rep-delta-mader-long}
  Let $G=(V,E)$ be a graph, $T \subseteq V$ a set of terminals and
  $\cT$ a partition of $T$.  Furthermore, let $U=V \setminus T$,
  $U'=\{v' \mid v \in U\}$, and $V^*=T \cup U \cup U'$.  Then there is
  a directly representable delta-matroid $D^*$ on $V^*$ as follows.
  For any $S \subseteq V$, the set $S^*:=S \cup \{v' \mid v \in S \cap U\}$
  is feasible in $D^*$ if and only if $S \cap T$ is Mader matchable in
  $G[S]$.  A representation for $D^*$ can be constructed in randomized
  polynomial time. 
\end{theorem}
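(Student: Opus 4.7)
My plan is to extend the $T$-path construction previewed just before the theorem so that the decorated Tutte matrix of the doubled graph $G^*$ on $V^* = T \cup U \cup U'$ additionally encodes Mader's constraint imposed by $\cT$. I start from $G^*$ and its plain Tutte matrix $A^*$; the preview already gives that, for every $S \subseteq V$, $A^*[S^*]$ is non-singular iff $G^*[S^*]$ has a perfect matching iff $S \cap T$ is matchable by $T$-paths in $G[S]$. My task is then to decorate $A^*$ with extra information about $\cT$ so that $\Pf A[S^*]$ is non-zero iff some such $T$-path packing can be chosen to consist entirely of Mader paths.

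The decoration I envisage attaches, to each terminal $t \in T_i$, a color marker associated to the part $T_i$: concretely, I multiply the terminal entries $A^*[t,u]$ and $A^*[t,u']$ (for $tu \in E(G)$, $u \in U$) by fresh indeterminates drawn from a carefully chosen algebra over $\F$. A perfect matching $M$ of $G^*[S^*]$ splits into $T$-paths in $G[S]$, and tracing the alternating-layer structure along each path shows that each path contributes exactly two markers (one per endpoint) to $\Pf A[S^*]$, and that every $T$-path admits exactly two matching-orientations in $G^*$. The design goal for the marker algebra is that, summed over the two orientations, the two-marker contribution of a single path vanishes when its endpoints lie in the same part of $\cT$ and is a generic non-vanishing element otherwise, while also guaranteeing that the multi-path products arising from distinct $T$-path packings do not cancel against each other. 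A first attempt using two anti-commuting generators $(\alpha_i,\beta_i)$ per part $i$ produces the correct single-path cancellation but fails when a Mader packing reuses the same part across several paths; the actual construction will therefore need a richer per-terminal or matrix-valued decoration (for instance, with markers inspired by Pap's gammoid representation of the Mader matroid, lifted so as to retain information about \emph{all} feasible sets rather than only the bases) engineered to keep those repeated-part Mader configurations alive.

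Once such an $A$ is in hand the theorem is nearly immediate: $A$ is skew-symmetric by construction, so $D(A)$ is automatically a delta-matroid, and by the design of the markers the principal submatrix $A[S^*]$ is non-singular precisely when some Mader path packing in $G[S]$ has endpoints $S \cap T$, proving the ``iff'' claimed in the statement. A final Schwartz--Zippel evaluation of every formal variable at a uniformly random element of a sufficiently large extension field turns $A$ into an explicit skew-symmetric matrix with high probability and yields the randomized polynomial-time construction. The main obstacle, where nearly all of the technical work lives, is the marker-design step: combining same-part cancellation with multi-path survivability requires a subtle interplay between the Pfaffian sign structure on $G^*$ and the algebraic relations among the markers, and this is exactly the point at which a naive decoration collapses too much of the Pfaffian and a more surgical algebraic construction must be introduced.
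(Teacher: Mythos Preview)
Your high-level plan---start from the doubled graph $G^*$, decorate the terminal edges of its Tutte matrix with color markers encoding $\cT$, and argue that the Pfaffian vanishes precisely when no Mader packing exists---matches the paper's approach. However, you have not actually found a decoration that works, and the algebraic machinery you are reaching for (anti-commuting generators, matrix-valued markers, lifting Pap's gammoid representation) is both heavier than needed and, as you yourself note, does not obviously resolve the multi-path survivability issue. This is a genuine gap.

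The paper closes it with two ideas you are missing. First, it works over a field of \emph{characteristic~2}. This removes the Pfaffian signs entirely and turns every cancellation argument into ``two terms contribute the same monomial, hence vanish''. In particular, after identifying $x_{uv'}=x_{u'v}$ for non-terminal edges, non-trivial cycles in the matching decomposition cancel by orientation reversal, leaving exactly the padded oriented $T$-path packings. Second, the decoration is far simpler than what you propose: introduce one \emph{scalar} indeterminate $z_i$ per block $T_i$, and for each terminal edge $tv$ with $t\in T_i$ set $x_{tv'}=x_{tv}z_i$ while leaving $x_{tv}$ alone. A path oriented from $T_i$ to $T_j$ then picks up a single factor $z_i$ (from its starting edge), so reversing a non-Mader path ($i=j$) gives the same monomial and the two orientations cancel; reversing a Mader path ($i\neq j$) swaps $z_i$ for $z_j$ and does not. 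The multi-path survivability you worried about is handled not by the algebra of the markers but by a combinatorial argument: among all orientations of a fixed Mader packing, the \emph{acyclic} ones (with respect to the induced digraph on the blocks of $\cT$) contribute pairwise distinct $z$-monomials and hence survive, while cyclic orientations cancel by reversing a cycle. No exotic algebra is needed---just scalar variables over $\mathrm{GF}(2^\ell)$.
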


As above, the Mader delta-matroid is then the contraction
$D=D^*/(U \cup U')$, and a representation $A$ for $D$ can be computed
from a representation $A^*$ for $D^*$ as $A=(A^* * (U \cup U'))[T]$.

\subsection{The construction}
\label{ssec:dmader-construct}
We now present the construction of the auxiliary graph. This is
essentially the same as the illustration above, but we will introduce
further algebraic cancellations in a later step. 

Let $G=(V,E)$ be a graph. To \emph{split} a set $S \subseteq V$ of
vertices in $G$ refers to the following operation, creating a new
graph $G'$.
\begin{enumerate}
\item For every $v \in S$, add a duplicate vertex $v'$ of $v$.
\item For every edge $uv \in E(G)$ with $u, v \in S$, replace $uv$
  by the two edges $u'v$, $uv'$.
  For every edge $uv \in E(G)$ with $u \notin S$ but $v \in S$,
  add the edge $uv'$.
\item Add the edge $vv'$ for every $v \in S$. 
\end{enumerate}
Let $G=(V,E)$, $T \subseteq V$ and a partition $\cT$ of $T$ define an instance of Mader
path packing,  and let $G^*$ be the graph produced by splitting
$V \setminus T$ in $G$. Write $V(G^*)=V^* \cup T$
where $V^*=V(G^*)\setminus T$.
It is well known and easy to verify that matchings in $G^*$ that saturate $V^* \cup S$
correspond to $T$-path packings in $G$ that saturate $S$ in $G$
(although not one-to-one, and not Mader path packings). 
We show something slightly sharper.  First some definitions.

\begin{definition}
  Let $G=(V,E)$, $T \subseteq V$ and a partition $\cT$ of $T$ be given.
  An \emph{oriented} $T$-path packing (respectively Mader path packing)
  is a path packing $\cP$ together with an orientation of every path
  $P \in \cP$, say with endpoints $u$ and $v$, as going either from
  $u$ to $v$ or from $v$ to $u$.  A \emph{padded} $T$-path packing
  (respectively Mader path packing) is $\cP \cup E_1 \cup E_2$,
  where $V(G) = V(\cP) \cup V(E_1) \cup V(E_2)$ partitions $V$,
  $E_1$ is a set of self-loops, and $E_2$ is a matching in $G$.
\end{definition}

An important building block of our construction is the following.
For simplicity, we assume that $G[T]$ contains no edges; clearly, this
can be assumed by subdividing edges of $G[T]$.  As another technical
simplification we also assume that all vertices of $G-T$ have
self-loops.

\begin{lemma}
  \label{lemma:char2gallai}  
  Let~$G^*$ be obtained by splitting~$V \setminus T$ in~$G$, and let~$A$ be the Tutte matrix of~$G^*$
  over a field of characteristic two. Assume that $G$ contains
  self-loops on all vertices of $V \setminus T$, and that $G[T]$ is edgeless.
  Furthermore let~$x_{uv'}=x_{u'v}$ when~$uv', u'v$ are edges in~$G^*$ not
  incident to a terminal (where $uv'$, $u'v$ are the edges in $G^*$
  corresponding to an edge $uv$ in $G$). Then
  \begin{equation}
    \label{eqn:char2gallai}
    \Pf A = \sum_{M \in \mathcal{M}} \prod_{P \in M} \prod_{v_iv_j \in P} A(i,j),
  \end{equation}
  where $M$ ranges over all padded oriented perfect~$T$-path packings in~$G$. (Here, a
  self-loop on~$v$ corresponds to the edge~$vv'$, and a matching
  edge~$uv$ in $G$ corresponds to the two edges~$uv', u'v$ in $G^*$.)
\end{lemma}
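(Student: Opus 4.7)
The plan is to expand $\Pf A$ in characteristic $2$ as $\sum_M \prod_{e \in M} A_e$, where $M$ ranges over perfect matchings of $G^*$ and all sign terms $\sigma_M$ equal $1$, and then to organize these matchings by the structure they induce back in $G$. Given a perfect matching $M$ of $G^*$, I would define its \emph{projection} $H_M$, a multigraph on $V(G)$, obtained by sending every matching edge $vv'$ to a self-loop at $v$, every edge $uv'$ or $u'v$ with $u,v \in V \setminus T$ to the $G$-edge $uv$, and every edge $tv$ or $tv'$ with $t \in T$ to the $G$-edge $tv$. Counting primed and unprimed copies, every $v \in V \setminus T$ has degree $2$ in $H_M$ and every $t \in T$ has degree $1$ (with self-loops counted as $2$), so $H_M$ is a vertex-disjoint union of $T$-paths, self-loops at $V \setminus T$-vertices, and simple cycles inside $V \setminus T$.

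Next I would classify the $G^*$-preimages of each component type. A self-loop at $v$ and a length-$2$ multi-cycle on $\{u,v\}$ each admit a unique preimage, $\{vv'\}$ and $\{uv', u'v\}$ respectively, contributing $x_{vv'}$ and $x_{uv'}\,x_{u'v} = y_{uv}^2$; these match exactly the contributions of the self-loop in $E_1$ and the matching edge $uv \in E_2$ of the padded packing. A $T$-path $t_0 v_1 \cdots v_{k-1} t_k$ admits exactly two preimage matchings, obtained by propagating a forced primed/unprimed pattern from whichever terminal is treated as the source, and these two matchings correspond exactly to the two orientations of the path in $\cP$. A cycle of length $\ell \geq 3$ inside $V \setminus T$ also admits exactly two preimage matchings, the ``all forward'' and ``all backward'' choices of primed/unprimed endpoints along the cycle; since $x_{uv'} = x_{u'v}$ on every non-terminal edge, these two matchings produce the same monomial, so their contributions cancel in characteristic $2$.

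After the cycle cancellations, the surviving perfect matchings of $G^*$ stand in explicit bijection with padded oriented perfect $T$-path packings of $G$, and the monomial $\prod_{e \in M} A_e$ of each such matching factorizes along the components of $H_M$ exactly as the right-hand side $\prod_{P \in M} \prod_{v_iv_j \in P} A(i,j)$. The main obstacle will be the cycle-cancellation step, where I must verify both that a cycle of length $\ell \geq 3$ has exactly two $G^*$-preimages — via a parity argument coupling the primed/unprimed choices along the cycle so that all local choices must agree — and that the symmetry $x_{uv'} = x_{u'v}$ makes the two resulting monomials exactly equal so they cancel modulo $2$. Once this is in place, the bijection and component-wise factorization of the monomial are routine bookkeeping.
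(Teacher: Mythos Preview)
Your proposal is correct and follows essentially the same approach as the paper. The paper phrases the projection step as assigning each perfect matching of $G^*$ an \emph{oriented} subgraph of $G$ directly (edge on $v'$ oriented into $v$, edge on $v$ oriented out of $v$) and then defines an involution reversing a canonically chosen long cycle, whereas you project to an unoriented multigraph and count preimages componentwise; these are equivalent formulations of the same bijection and the same cancellation argument.
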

\begin{proof}
  Since $A$ is over a field of characteristic two, 
  we have $\Pf A = \sum_{M \in \mathcal{M}} \prod_{uv \in M} x_{uv}$
  where $\mathcal{M}$ ranges over all perfect matchings of $G^*$.
  Furthermore, since $x_{uv'}=x_{u'v}$ for every edge $uv \in E(G)$
  not incident with a terminal, some of these perfect matchings
  contribute identical monomials and cancel. More specifically,
  map a perfect matching $M$ in $G^*$ to an oriented subgraph in $G$
  where, for every vertex $v \in V(G) \setminus T$, the edge incident
  on $v'$ is oriented into $v$ and the edge incident on $v$ is oriented out of $v$.
  Then $M$ corresponds to an oriented subgraph $H$ of $G$ that partitions
  $V(G)$ into $T$-paths and cyclically oriented cycles on $V \setminus T$.
  Here, a loop on $v \in V\setminus T$ is considered to be a cycle.
  Furthermore, this is one-to-one, i.e., any such oriented cover of $G$
  corresponds to a distinct perfect matching of $G^*$.  
  Let $C$ be a cycle in $H$ with at least three edges, if possible.
  Then reversing the orientation of the edges in $C$ results in a distinct
  perfect matching of $G^*$ which contributes precisely the same monomial
  to $\Pf A$ as $M$. Thus we may define an involution on $\mathcal{M}$,
  by reversing the arcs of a uniquely identified cycle of $H$ (e.g.,
  the cycle $C$ with more than two edges which contains the lowest-index
  vertex).  Hence all matchings $M$ such that $H$ contains non-trivial
  cycles cancel in $\Pf A$.  On the other hand, since we have (so far)
  not made any assumption on the variables $x_{tv}$ and $x_{tv'}$
  for terminals $t \in T$, no further cancellations occur
  at this point. 
\end{proof}

We now modify the construction of Lemma~\ref{lemma:char2gallai} so that the sum runs
over (padded, oriented) Mader path packings, by introducing further
cancellations so that the set of (padded, oriented)~$T$-path packings
containing at least one non-Mader path vanishes.

We say that an orientation of a Mader path packing $\cP$ is \emph{acyclic}
if the directed multigraph it induces on the blocks of $\cT$ 
is acyclic. More precisely, let $\cP$ be an oriented $T$-path packing. 
Assume $\cT=\{T_1,\ldots,T_s\}$, and define a digraph $H$
on vertex set $V(H)=\{t_1,\ldots,t_s\}$, which contains an arc
$t_it_j$ for every path $P \in \cP$ oriented from a terminal $t \in T_i$
to a terminal $t' \in T_j$. Note that $t \neq t'$ by assumption,
and that $i \neq j$ if and only if $P$ is a Mader path.
Then $\cP$ is acyclic if and only if $H$ is acyclic.
(Note that this implies that $\cP$ is a Mader path packing, since
otherwise we get a self-loop on some vertex $t_i$, $i \in [s]$.)

\begin{figure}
  \centering
  \begin{subfigure}{0.4\textwidth}
    \centering
    \begin{tikzpicture} 
      \tikzstyle{node}=[]%
      \node[node] (s) at (0,0) {$s$};
      \node[node,right=of s] (a) {$a$};
      \node[node,below=of a] (ap) {$a'$} edge (a);
      \node[node,right=of a] (t) {$t$};
      \draw[thick,color=red] (s) -- (a);
      \draw[thick,color=red] (ap) -- (t) node [midway,below,color=black] {$z_1$};
      \draw[thick,color=blue] (s) -- (ap) node [midway,below,color=black] {$z_1$};
      \draw[thick,color=blue] (a) -- (t); 
    \end{tikzpicture}
    \caption{Terminals $s$ and $t$ are in the same partition $T_1$.
      The red and blue matchings produce identical monomials and
      cancel each other.}
    \label{Mader:a}
  \end{subfigure}\hfill
  \begin{subfigure}{0.5\textwidth}
    \begin{tikzpicture}
      \tikzstyle{node}=[]%
      \node[node] (s) at (0,0) {$t_1$};
      \node[node,right=of s] (a) {$a$};
      \node[node,below=of a] (ap) {$a'$} edge (a);
      \node[node,right=of a] (t) {$t_2$};
      \draw[thick,color=red] (s) -- (a);
      \draw[thick,color=red] (ap) -- (t) node [midway,below,color=black] {$z_2$};
      \draw (s) -- (ap) node [midway,below,color=black] {$z_1$};
      \draw (a) -- (t); 
  
      \node[node,right=of t] (s2) {$t_3$};
      \node[node,right=of s2] (a2) {$b$};
      \node[node,below=of a2] (ap2) {$b'$} edge (a2);
      \node[node,right=of a2] (t2) {$t_4$};
      \draw (s2) -- (a2); \draw (ap2) -- (t2) node [midway,below,color=black] {$z_1$};
      \draw[thick,color=red] (s2) -- (ap2) node [midway,below,color=black] {$z_2$};
      \draw[thick,color=red] (a2) -- (t2); 
    \end{tikzpicture}
    \caption{With terminal partition $T_1=\{t_!,t_4\}$,
      $T_2=\{t_2,t_3\}$, the highlighted matching produces a monomial
      that cannot be cancelled by reorientation due to its coefficient $z_2^2$.}
    \label{Mader:b}
  \end{subfigure}
  \caption{Term cancellations in the Mader delta-matroid construction.}
  \label{fig:mader}
\end{figure}

The construction and the involved term cancellations are illustrated
in Figure~\ref{fig:mader}. 

\begin{lemma}
  \label{lm:dmrepr}
  Let $A$ be the Tutte matrix defined in Lemma~\ref{lemma:char2gallai}.
  Let $\cT=T_1 \cup \ldots \cup T_k$, and let $z_1, \ldots, z_k$ be a
  set of fresh variables.  Modify $A$ such that
  for every edge $tv \in E(G)$, $t \in T$, we let $x_{tv'} = x_{tv}z_i$,
  where $t \in T_i$.  Then $\Pf A$ enumerates precisely the padded,
  acyclically oriented perfect Mader path packings. 
\end{lemma}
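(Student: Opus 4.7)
The plan is to start from Lemma~\ref{lemma:char2gallai} and track the $z$-factors introduced by the modification $x_{tv'}=x_{tv}z_i$, then cancel the unwanted terms using two characteristic-$2$ involutions, one eliminating packings that contain a non-Mader path and one eliminating cyclically oriented Mader packings.

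First I would compute the $z$-contribution of a single oriented $T$-path. In the perfect matching of $G^*$ realising a path $P$ oriented from $t\in T_i$ through internal vertices $v_1,\ldots,v_k$ to $t'\in T_j$, the starting terminal $t$ is matched via the primed edge $tv_1'$ (whose variable becomes $x_{tv_1}z_i$), while the ending terminal $t'$ is matched via the unprimed edge $v_k t'$ (whose variable is unchanged). All other matching edges involve only internal vertices and are therefore unaffected by the modification. Consequently each oriented $T$-path contributes exactly one factor $z_{i(P)}$, namely the block index of its starting terminal, while padding (self-loops and internal matching edges) contributes no $z$-factors.

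For the non-Mader cancellation, fix a total order on paths. Given a padded oriented $T$-path packing $M$ that contains at least one non-Mader path, reverse the orientation of the smallest such path to obtain a new packing $M'$. Both orientations of a non-Mader path start within the same block $T_i$, so the factor $z_i$ is unchanged, and the underlying $x$-monomial is invariant under path reversal by the symmetry of the Tutte variables in characteristic~$2$. Hence $M\mapsto M'$ is a fixed-point-free involution on packings containing a non-Mader path, and all such packings cancel in pairs.

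The main obstacle is the analogous cancellation for cyclically oriented Mader packings, which requires flipping all paths along a canonically chosen directed cycle in the block digraph $H(\cP,O)$. Flipping a simple directed cycle preserves both the $x$-monomial (by char-$2$ symmetry) and the $z$-monomial, because in a simple directed cycle the multiset of starting blocks equals the multiset of ending blocks. I would fix a total order on paths, pick the smallest path $P$ of $O$ that lies on some directed cycle of $H(\cP,O)$, and let $C(O)$ be the lex-smallest simple directed cycle through $P$. The delicate step is verifying that this is genuinely involutive, i.e.\ that flipping $C(O)$ yields an orientation $O'$ with $C(O')$ equal (as a set of paths, with reversed orientations) to $C(O)$; here one must rule out that flipping $C(O)$ creates a new directed cycle passing through a path smaller than $P$. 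At the polynomial level this is equivalent to the $\mathbb F_2$-identity $\prod_{e\in H}(z_i+z_j)=\sum_{O\text{ acyclic}}\prod_e z_{\text{tail}_O(e)}$, which can alternatively be established by induction on $|E(H)|$ via deletion-contraction on acyclic orientations. With both involutions in place, the only surviving contributions to $\Pf A$ correspond exactly to the padded, acyclically oriented perfect Mader path packings, establishing the lemma.
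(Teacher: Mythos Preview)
Your approach is essentially the paper's: start from Lemma~\ref{lemma:char2gallai}, track the $z$-factor picked up at each path's starting terminal, kill non-Mader paths by reversing one, and kill cyclically oriented Mader packings by reversing a directed cycle in the block digraph. Your $z$-bookkeeping is correct, and your non-Mader involution is fine. You are also right that the cyclic involution is the delicate point; the paper is equally informal there, and your alternative via the $\mathbb F_2$-identity $\prod_{e\in H}(z_{i(e)}+z_{j(e)})=\sum_{O\text{ acyclic}}\prod_e z_{\text{tail}_O(e)}$ is a clean way to handle it.

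However, there is a genuine gap: you stop once only padded acyclically oriented Mader packings survive, but you never argue that these surviving terms contribute \emph{distinct} monomials, i.e., that no further cancellation occurs. This is essential for ``enumerates'' and for the downstream use of the lemma (non-vanishing of $\Pf A$ when a perfect Mader packing exists). The paper handles this in two pieces. First, different underlying (packing, padding) pairs yield different $x$-monomials, since the paths are vertex-disjoint and the padding is disjoint from the paths; hence any collision must come from two orientations of the same packing. Second, two acyclic orientations of the same packing yield different $z$-monomials: one can argue, as the paper does, by repeatedly peeling off a source block (whose $z$-exponent equals its degree and is therefore maximal), or equivalently observe that if two acyclic orientations had the same out-degree sequence then their symmetric difference would be an Eulerian sub-digraph and hence contain a directed cycle, contradicting acyclicity. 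Your polynomial identity alone does not give this distinctness---it only says the cyclic terms cancel---so you still need one of these arguments to close the proof.
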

\begin{proof}
  Assume, as in Lemma~\ref{lemma:char2gallai}, that $G$ contains all
  self-loops and $G[T]$ contains no edges, and let $A$ be the Tutte
  matrix of $G^*$ after the modification.  By Lemma~\ref{lemma:char2gallai},
  $\Pf A$ enumerates padded oriented perfect $T$-path packings in $G$;
  this remains true after the modification.  Now consider a $T$-path
  packing which contains at least one path $P$ that has both endpoints
  in the same set $T_i$, $i \in [k]$.  Assume that $P$ is oriented to
  start with an edge $t_1u$ and end with an edge $vt_2$ in $G$,
  i.e., the matching in $G^*$ contains edges $t_1u'$ and $vt_2$
  according to the orientation convention used in Lemma~\ref{lemma:char2gallai}
  (where we may have $u=v$).
  Consider the term that only differs in reversing $P$.
  Every internal edge $pq$ in $G$, corresponding to $pq'$ respectively
  $p'q$ in $G^*$ before and after the reversal, contributes the same
  factor $x_{pq}$.  Furthermore, $t_1u'$ contributes $x_{t_1,u}z_i$,
  $vt_2$ contributes $x_{v,t_2}$, and after the reversal $t_1u$
  contributes $x_{t_1,u}$ and $v't_2$ contributes $x_{v,t_2}z_i$.
  Therefore reversing the orientation of any non-Mader path
  contributes precisely the same monomial. Hence, we can define an
  involution over the set of non-Mader $T$-path packings in $\Pf A$ by reversing 
  the orientation of some canonically chosen non-Mader path,
  leaving only Mader path packings behind.  Note that the reversal
  yields a distinct term, since $G[T]$ is edgeless.

  By a similar argument, orientations of some Mader path packing $\cP$
  where the orientations are not acyclic cancel each other, since
  reversing the orientation of all paths along a cycle on $T$ yields
  precisely the same monomial. Again, every such reversal yields a
  distinct term since $G[T]$ is edgeless.

  Finally, consider an acyclically oriented perfect Mader path
  packing $\cP$.  Then among all orientations of $\cP$, the
  contributed monomial differs only in the powers of the variables
  $z_i$ that it contributes, and by construction every path $P \in \cP$,
  oriented from $T_i$ to $T_j$, contributes a factor of $z_i$.
  We argue that every acyclic orientation of $\cP$ yields a unique
  monomial.  Indeed, since the orientation is acyclic there
  exists some $i \in [k]$ such that all paths $P \in \cP$ incident
  with $T_i$ are oriented out of $T_i$.  Hence the power of $z_i$
  in this orientation is the largest possible, and only orientations
  where similarly all $T_i$-incident paths are oriented out of $T_i$
  could possibly contribute the same monomial.  But then, we can
  repeat the argument, picking the set $T_j$ in turn, which has only
  outgoing paths except for those paths which were previously given a
  fixed orientation.  Iterating the argument yields that the monomial
  contributed by such an orientation of $\cP$ is unique.

  Since the Mader path packing is vertex-disjoint in $G$, and since
  the padding is vertex-disjoint from the Mader path packing and
  ``orientation proof'', it is easy to see that the only terms which
  could possibly contribute identical monomials correspond to distinct
  orientations of the same underlying Mader path packing. Hence the
  above covers all possible cancellations.
  In the other direction, for any acyclically oriented, perfect Mader
  path-packing $\cP$, any way we pad $\cP$ vertex-disjointly by self-loops $vv'$ and
  edge pairs $uv'$, $u'v$ on the vertices not used by $\cP$, for a term of $\Pf A$ contributing $\cP$. 
\end{proof}

We can now finish the proof of Theorem~\ref{thm:rep-delta-mader-long}.

\begin{proof}[Proof of Theorem~\ref{thm:rep-delta-mader-long}.]
  By Lemma~\ref{lm:dmrepr}, the matrix $A$, as a matrix of formal
  variables over characteristic 2, is non-singular if and only if $G$
  has a perfect Mader path packing.  Furthermore, consider a set
  $S \subset T$, and consider the matrix generated by deleting
  the rows and columns of $A$ corresponding to $T \setminus S$. 
  It can be verified that this is precisely the same matrix (up to
  variable renaming) as would be generated for Mader paths in
  $G-(T \setminus S)$.  Hence for every $S \subseteq T$, the
  matrix induced by $V(G^*) \setminus (T \setminus S)$
  is non-singular if and only if $S$ is Mader matchable in $G$. 
  A similar statement holds for sets $S^*=S \cup \{v, v' \mid v \in R\}$
  for $R \subseteq U$; deleting both rows and columns $v$ and $v'$ for
  $v \in U \setminus S^*$ yields a matrix equivalent to first deleting
  the vertex $v$ in $G$, then proceeding with the rest of the construction.
  Hence, the delta-matroid feasible sets follow from the
  self-similarity of the construction. 
  
  Note that the total degree of a monomial in $\Pf A$ is bounded by
  $O(n)$.  Hence, consider generating a matrix over GF$(2^\ell)$
  by randomly instantiating every variable in $A$.  By
  Schwartz-Zippel, using $\ell=\Omega(n)$ yields that
  the probability that any particular set $S \subseteq V$ is \emph{not} 
  correctly represented is bounded as $2^{-\Omega(n)}$.
  Hence by the union bound, $\ell=\Omega(n)$ suffices
  to yield a correct representation for every $S \subseteq V$ with
  arbitrarily high probability. We note that computation over
  GF$(2^n)$ can be done in polynomial time in $n$.
\end{proof}

Finally, to finish the proof of Theorem~\ref{thm:rep-delta-mader}, and
provide a matrix on the smaller ground set of $T$, we simply return
$A'=(A*V^*)[T]$. This is well-defined since $A[V^*]$ is non-singular,
e.g., the term $\prod_{v \in V \setminus T} x_{vv'}$ is produced only
precisely once in $\Pf A[V^*]$. Furthermore, when only
the power of Theorem~\ref{thm:rep-delta-mader} is required and
arbitrary induced subgraphs of $G$ are not of interest, it suffices
to use $\ell=\Omega(|T|+\log n)$.  This makes a minor difference to
the running time, but a potentially major difference in the total
representation size $||A'||$ of the final matrix,
since we get $||A''||=O(k^2(k+\log n))$ for $k=|T|$.
This would be suitable for a \emph{polynomial compression} (if not a
direct kernelization routine); cf.~\cite{Wahlstrom13,Book_kernelization_FLSZ19}.

\section{Mader-mimicking networks}
\label{sec:mader-mimic}

We now give the major new application of the framework.

A \emph{terminal network} be a pair $(G,T)$ where $G$ is a graph and
$T \subseteq V(G)$ a subset of the vertices referred to as \emph{terminals}. 
There has been significant interest in reducing the size of $G$
while preserving certain properties of the network, typically
approximately or precisely preserving various \emph{cut} or
\emph{flow} properties of $T$ in $(G,T)$. Variants of this concept
have been known as \emph{sparsification} or \emph{mimicking networks}.

More precisely, a \emph{mimicking network} for $(G,T)$ is a graph
$(G',T)$ such that for every bipartition $T=A \cup B$, the size of an
$(A,B)$-min cut in $G$ and $G'$ is precisely identical, while
$|V(G')|$ is bounded in terms of $|T|$. A mimicking network for
$(G,T)$ with size bounded as a function of $|T|$ always exists, but if
no further restrictions are made on the system (e.g., if the terminals
in $T$ can have arbitrarily high capacity) then the bound on $|V(G')|$
is prohibitively large, being at least exponential in
$|T|$~\cite{HagerupKNR98JCSS,KrauthgamerR13SODA,KarpovPZ19Mimic}. 
A closely related term is \emph{cut sparsifiers}, which refers to
essentially the same notion, except that we may allow the cut-sizes to
be preserved only approximately~\cite{LeightonM10spars,KrauthgamerR20planar}.

Other, more complex notions have also been considered. One interesting
case is \emph{flow sparsifiers}, which aim to (approximately) preserve
the value of any multicommodity flow system over $(G,T)$~\cite{LeightonM10spars,Chuzhoy12STOC}.
It was long open whether such a sparsifier even exists (without
compromising on the approximation ratio); this was recently answered
in the negative~\cite{KrauthgamerM23flowspars}.
Other notions closely related to our current results are
\emph{multicut-mimicking networks}~\cite{Wahlstrom22talg}
and \emph{matching mimicking networks}~\cite{EppsteinV21}.

Significantly better results are possible if we take into account the
\emph{capacity} of $T$, not only its cardinality. In particular, if
the total capacity of $T$ is $k$ (e.g., $|T|=k$ and we allow
$(A,B)$-min cuts to overlap with $A$ and $B$), then the
\emph{cut-covering lemma} implies that a mimicking network (a.k.a.\ an
exact cut sparsifier) $(G',T)$ for $(G,T)$ can be constructed with
$|V(G')|=O(k^3)$~\cite{KratschW20JACM}, and stronger results are known
on directed acyclic graphs~\cite{HeLW21dags}. Furthermore,
Chuzhoy~\cite{Chuzhoy12STOC} shows the existence of an
$O(1)$-approximate flow sparsifier of size $k^{O(\log \log k)}$.

We show that the cut-covering lemma can be extended to a mimicking
network (i.e., exact sparsifier) for \emph{Mader path-packing systems}. 
Let us introduce the definitions. 

\begin{definition}
  A \emph{Mader network} is a pair $(G, \cT)$ where $G=(V,E)$ is a
  graph and $\cT$ is a partition of a set $T = \bigcup \cT$,
  $T \subseteq V$. We say that $(G,\cT)$ is a Mader network
  \emph{over terminals $T$}. 
  A \emph{Mader-mimicking network} for $(G,\cT)$ is a
  Mader network $(G', \cT)$ such that for every $S \subseteq T$, $S$ 
  is Mader matchable in $(G,\cT)$ if and  only if $S$ is
  Mader matchable in $(G',\cT)$.
\end{definition}

Using the power of representative sets over Mader delta-matroids, we
show the following.  

\begin{theorem} \label{thm:mader-sparsification}
  Let $(G, \cT)$ be a Mader network with terminals $T=\bigcup \cT$,
  $|T|=k$. In randomized polynomial time, we can compute a
  Mader-mimicking network $(G', \cT)$ for $(G, \cT)$ with
  $|V(G')|=O(k^3)$. 
\end{theorem}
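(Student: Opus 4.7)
The plan is to follow the template of the Kratsch–Wahlström cut-covering lemma: identify an \emph{irrelevant vertex set} $Z \subseteq V \setminus T$ of size $O(k^3)$ by applying delta-matroid representative sets to the Mader delta-matroid, and then return $G' = G[T \cup Z]$.

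First, I would invoke Theorem~\ref{thm:rep-delta-mader-long} to construct the directly representable delta-matroid $D^* = D(A^*)$ on ground set $V^* = T \cup U \cup U'$, where $U = V \setminus T$. The self-similarity clause of that theorem says that for every $R \subseteq U$ and $X \subseteq T$, the set $X \cup R \cup R'$ is feasible in $D^*$ if and only if $X$ is Mader matchable in $G[X \cup R]$. Consequently, for any $Z \subseteq U$, Mader matchability of $X$ in $G[T \cup Z]$ is equivalent to the existence of $R \subseteq Z$ with $X \cup R \cup R'$ feasible in $D^*$. The theorem therefore reduces to finding $Z \subseteq U$ of size $O(k^3)$ such that this witness survives inside $Z$ for every Mader matchable $X$.

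Second, I would apply the rank version of delta-matroid representative sets (Theorem~\ref{thm:dm-extend-rank}) to $D^*$ with terminal set $T$, whose rank in $D^*$ is at most $|T|=k$ since any feasible set intersects $T$ in an even Mader matchable subset of $T$. I would feed in a collection $S \subseteq \binom{V^* \setminus T}{q}$ of constant-arity ``markers'' on non-terminal vertices; the natural candidates are the pairs $\{v,v'\}$ for $v \in U$ (giving $q=2$ and an $O(k^2)$ representative set) and triples $\{v,v',w^{(\cdot)}\}$ that combine a non-terminal with a context element capturing how $v$ participates in a path (giving $q=3$ and $O(k^3)$). Let $Z$ be the set of non-terminal vertices underlying the resulting $\rep S$.

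The main obstacle will be bridging the gap between the constant-arity extenders permitted by Theorem~\ref{thm:dm-extend-rank} and the unbounded number of internal vertices a Mader matching can use. I expect to overcome this using the exchange structure of $D^*$ together with the sparse pivoted form of Lemma~\ref{lm:pivot-to-sparse}. Concretely, fix a reference Mader matching $\cP_0$ realising a maximum-cardinality feasible set $F_0 = X_0 \cup R_0 \cup R_0'$ and pivot $A^*$ by a minimal basis attaining $|F_0 \cap T|=k$, so that the rank-version argument applies. In the twisted delta-matroid $D^* \Delta F_0$, every other Mader matching corresponds to a feasible set reachable from $\emptyset$ by a sequence of constant-size symmetric exchanges, each of which is precisely what a $q$-sized representative marker captures. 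Collecting $R_0$ together with the representatives produced by Theorem~\ref{thm:dm-extend-rank} then gives $|Z|=O(k^3)$; the randomization is inherited from the construction of $A^*$, and $G' = G[T \cup Z]$ is the desired Mader-mimicking network.
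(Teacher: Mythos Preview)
Your approach has a genuine gap at the very last step: returning the induced subgraph $G' = G[T \cup Z]$ cannot work, no matter how $Z$ is chosen. Take $G$ to be a path $s = v_0 v_1 \cdots v_n = t$ with $T = \{s,t\}$ in distinct blocks of~$\cT$. Then $\{s,t\}$ is Mader matchable in $G$, but in $G[T \cup Z]$ for any $Z \subsetneq \{v_1,\ldots,v_{n-1}\}$ the two terminals lie in different components and $\{s,t\}$ is no longer matchable. Since $|T|=2$, no bound $|Z|=O(k^3)$ can save this. (The Kratsch--Wahlstr\"om cut-covering lemma you cite does \emph{not} output an induced subgraph either; it takes a \emph{torso}, adding clique edges over the removed parts.) Relatedly, your exchange-chain argument is not a proof: even if every symmetric exchange is of constant size, the \emph{number} of exchanges needed to pass from the reference matching $\cP_0$ to an arbitrary one is unbounded in $k$, so ``collecting the representatives'' along the way does not give an $O(k^3)$ set.

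The paper's proof is structurally different. It never restricts to an induced subgraph; instead it iteratively \emph{modifies} $G$ by (i) deleting simplicial non-terminals (always safe) and (ii) adding a single edge $uw$ for a carefully chosen induced $P_3$ $uvw$, then recursing. The representative-set machinery is used only to certify that step (ii) is safe: one works in the delta-matroid of the graph $G$ augmented with a \emph{clone} $v^+$ of every non-terminal $v$, so that ``cloning $v$ repairs an infeasible $S$'' becomes ``$\{v^+,(v^+)'\}$ extends $S$''. A combinatorial lemma (via Mader's min-max formula) bounds the number of such repairing vertices for any fixed $S$ by $3k$, so $3k$ rounds of computing $q=2$ representative sets against the terminal set $T$ mark every dangerous vertex; any unmarked $v$ has a safe induced $P_3$. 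The process terminates with $|V(G')| = O(k^3)$, but $G'$ is in general a minor-like modification of $G$, not an induced subgraph.
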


Note that by Menger's theorem, cut sparsifiers and the cut-covering
lemma correspond to a situation where $\cT$ has only two blocks.
Via closer inspection, we show the following corollary.

\begin{theorem} \label{thm:mader-sparsification-all}
  Let $(G,T)$ be a terminal network with $|T|=k$. In randomized
  polynomial time, we can compute a terminal network $(G',T)$
  such that $|V(G')|=O(k^3)$, and for every partition $\cT$ of $T$,
  the Mader path-packing numbers in $(G,\cT)$ and $(G',\cT)$
  are identical.
\end{theorem}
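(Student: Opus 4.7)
The strategy is to extend the proof of Theorem~\ref{thm:mader-sparsification} to a \emph{universal} representation that simultaneously encodes every partition of $T$. Recall from Lemma~\ref{lm:dmrepr} that the Mader delta-matroid for a fixed $\cT=\{T_1,\ldots,T_s\}$ is represented by the Tutte matrix of the split graph $G^*$ modified via $x_{tv'} \leftarrow x_{tv}z_i$ for $t \in T_i$. We instead introduce a distinct formal variable $z_t$ for each terminal $t \in T$, rather than one per block, obtaining a single skew-symmetric matrix $A^{\mathrm{univ}}$. For every partition $\cT$, the specialization $z_t \mapsto z_i$ (for $t \in T_i$) recovers the partition-specific matrix $A^G_\cT$ of Lemma~\ref{lm:dmrepr}, and $\Pf A^G_\cT[W]$ is the corresponding specialization of $\Pf A^{\mathrm{univ}}[W]$.

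Equipped with $A^{\mathrm{univ}}$, we apply the sieving polynomial framework (Theorem~\ref{ithm:sieve-rep-set}), exploiting the crucial feature that a single representative set handles \emph{every} bounded-degree property family at once. Work over the field $\F$ of rational functions in the $x$- and $z$-variables (realized algorithmically via Schwartz--Zippel, as in Theorem~\ref{thm:rep-delta-mader-long}). For candidate sets $Y$ of constant size $q$, define $\psi(Y) \in \F^{O(k)}$ as the entries $A^{\mathrm{univ}}[u,v]$ with $u \in Y$ and $v \in T \cup Y$. For each $\cT$ and each $X \subseteq T$, the sieving polynomial $p_{\cT,X}(\psi(Y)) := \Pf A^G_\cT[X \cup Y]$ has degree at most $q$ in $\psi(Y)$, since every entry of $A^G_\cT[X \cup Y]$ is an $\F$-multiple of the corresponding entry of $A^{\mathrm{univ}}$ (the scalar depending only on $\cT$ via $z_i / z_t$), and a Pfaffian with at most $q$ variable rows and columns has degree at most $q$ in those entries. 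Theorem~\ref{ithm:sieve-rep-set} then produces a single representative set $\rep S$ of size $O(k^q)$ that handles the entire family $\{p_{\cT,X}\}_{\cT,X}$ simultaneously.

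Following the candidate-construction and vertex-extraction steps of Theorem~\ref{thm:mader-sparsification}, which fix $q$ so that the final bound is $O(k^3)$, we obtain a vertex set $Z \subseteq V \setminus T$ of size $O(k^3)$ and return $G' := G[T \cup Z]$. To verify correctness, fix any partition $\cT$ and any $S \subseteq T$ Mader matchable in $(G,\cT)$. By Theorem~\ref{thm:rep-delta-mader-long} this lifts to a feasible set of $D(A^G_\cT)$ in $V^*$, and the representative-set property applied to $p_{\cT,S}$ produces a witness contained in $T \cup Z^*$, where $Z^* := Z \cup \{v' : v \in Z\}$. Specializing back to $\cT$ turns this witness into a Mader path packing in $(G',\cT)$ matching $S$; the reverse direction is immediate since $G'$ is an induced subgraph of $G$. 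Hence Mader matchability---and in particular the Mader path-packing number---agrees between $(G,\cT)$ and $(G',\cT)$ for every $\cT$.

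The main technical obstacle is aligning the universal sieving polynomial family $\{p_{\cT,X}\}$ with the candidate decomposition used in the proof of Theorem~\ref{thm:mader-sparsification} for a single $\cT$: one must check that the same constant-size candidate tuples and vertex-extraction procedure that yield an $O(k^3)$ bound for one partition still suffice for the family of all partitions. This reduces to the observation that specializing $z_t \mapsto z_i$ is a linear change of variables and hence preserves polynomial degree, so the partition-specific and universal degree bounds coincide and the representative-set argument transfers without change.
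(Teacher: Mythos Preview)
Your proposal has a fundamental gap in the correctness argument and in the output construction.

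\textbf{The induced-subgraph output cannot work.} You return $G':=G[T\cup Z]$ and argue that the representative-set property ``produces a witness contained in $T\cup Z^*$'' which yields a Mader path packing in $G'$. But Mader paths can be arbitrarily long: for a fixed matchable $S\subseteq T$, every packing may use $\Omega(n)$ internal vertices, so there need not exist any packing confined to $T\cup Z$. Representative sets in Theorem~\ref{thm:mader-sparsification} are \emph{not} used to find small packings; they are used to identify the $O(k^3)$ ``dangerous'' non-terminals that must not be bypassed. The actual sparsifier is built by an iterative bypassing procedure (add an edge across an induced $P_3$, delete simplicial vertices), and the resulting $G'$ is \emph{not} an induced subgraph of $G$. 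Your reverse-direction claim (``immediate since $G'$ is an induced subgraph'') therefore does not apply to the object you actually need.

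\textbf{The sieving polynomial captures the wrong property.} With the unpivoted matrix $A^{\mathrm{univ}}$, the event $\Pf A^G_\cT[X\cup Y(v)]\neq 0$ means, via Theorem~\ref{thm:rep-delta-mader-long}, that $X$ is Mader matchable in the tiny graph $G^+[X\cup\{v^+\}]$, not that cloning $v$ repairs $X$ in $G$. The correct property is feasibility of $X\cup Y(v)$ in the \emph{pivoted} delta-matroid $D(A^G_\cT*(U\cup U'))$. After pivoting, the entry $(A^G_\cT*(U\cup U'))[t,y]$ is obtained from the universal one by the substitution $z_t\mapsto z_i$, which is not a scalar multiple; your $z_i/z_t$ argument does not survive the pivot. (This is fixable---each pivoted entry is affine in $z_t$, so one can enrich $\psi$ to record both coefficients---but it is not what you wrote.)

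\textbf{What the paper does instead.} The paper avoids the universal-variable route entirely. It attaches $k$ pendant copies $t^1,\dots,t^k$ to each $t\in T$, fixes the single partition $\cT'=\{\{t^i: t\in T\}: i\in[k]\}$, and runs the bypassing procedure of Theorem~\ref{thm:mader-sparsification} on $(G',\cT')$. The point is that although $|T'|=k^2$, the rank of $T'$ in the relevant delta-matroid is only $4k$ (since $N_{G^*}(T')$ has size $4k$), so the rank version Theorem~\ref{thm:dm-extend-rank} gives $|\rep S|=O(k^2)$ per round; with $O(k)$ rounds one still gets $|Z|=O(k^3)$. Any partition $\cT$ of $T$ is then simulated by choosing an appropriate subset of the pendants, and equality of the packing numbers is verified by a direct path-lifting argument.
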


We find this interesting, especially in light of the stubbornly
resistant question of the existence of a polynomial kernel for
\textsc{Multiway Cut}~\cite{KratschW20JACM}.
As noted in~\cite{Wahlstrom22talg},  for edge multicuts, the
cut-covering lemma already gives a 2-approximate multicut sparsifier
with $O(k^3)$ vertices. This is not true for vertex multicuts,
since the gap between 2-way vertex cuts and vertex multiway cuts is
weaker than for edge cuts. However, there is only gap of a factor of 2
between vertex multiway cuts and Mader path-packings; hence we get the
following consequence.

\begin{theorem}
  Let $(G, T)$ be a terminal network. In polynomial time we can compute a
  2-approximate vertex multicut sparsifier $(G', T)$ for $(G,T)$
  such that $|V(G')|  = O(|T|^3)$. 
\end{theorem}

\subsection{Structural preliminaries}

Let $G=(V,E)$ be a graph, $T \subseteq V$ a set of terminals, and
$\cT$ a partition of $T$.  Recall from Theorem~\ref{thm:mader-min-max}
that the Mader path-packing number of $(G, \cT)$ is characterized by
a min-max relation over partitions $V(G)=U_0 \cup U_1 \cup \ldots \cup U_k$:
\begin{equation}
  \nu(G,\cT) = \min_{U_0; U_1, \ldots, U_k} |U_0| + \sum_{i=1}^k \lfloor  \frac{|B_i|}{2} \rfloor,
  \label{eq:mader}
\end{equation}
where $(U_0; U_1, \ldots, U_k)$ is such that every Mader path in
$(G,\cT)$ either intersects $U_0$ or uses an edge of $G[U_i]$
for some $i \in [k]$, and $B_i$ contains $T \cap U_i$ as well as
vertices of $U_i$ of non-zero degree in $G-U_0-\bigcup_i E(U_i)$.
Refer to a partition $(U_0; U_1, \ldots, U_k)$ achieving equality in (\ref{eq:mader})
as a \emph{witnessing partition}. Furthermore, for $i \in [k]$ refer
to $B_i \subseteq U_i$ as the \emph{border vertices} of $U_i$
(even though $B_i$ also contains $T \cap U_i$).
For a set $S \subseteq T$, a \emph{witnessing partition for $S$}
is a witnessing partition $(U_0; U_1, \ldots, U_k)$ of $G-(T \setminus S)$. 
Let $\nu(G,\cT_S)$ denote the Mader path-packing number of $G-(T \setminus S)$,
and let the \emph{deficiency of $S$} be $|S|-2\nu(G,\cT_S)$, i.e., the number of
unmatched terminals in a maximum Mader packing within $S$. 

We will work over a supergraph of the input graph, as follows.
Let $(G=(V,E), \cT)$ be a Mader network on terminal set $T=\bigcup \cT$,
and again write $U=V \setminus T$. For a vertex $v \in U$, to \emph{clone}
$v$ corresponds to creating a new vertex $v^+$ in $G$ such that $v$
and $v^+$ are false twins, i.e., $N(v^+)=N(v)$.  We will be working
with the supergraph $G'$ or $G$ where a clone has been added for every
non-terminal vertex of $G$.  Specifically, let $D^*$ be the
delta-matroid defined from $G'$ and $\cT$ as in
Theorem~\ref{thm:rep-delta-mader-long}, and let $A^*$ be the
skew-symmetric matrix representing $D^*$. Thus, every vertex $v \in V \setminus T$
is represented four times in $D^*$ as $v$, $v'$, $v^+$ and $(v^+)'$. 
We will work over the delta-matroid
\[
  D=D(A^* * (U \cup U'))
\]
where $U'=\{v' \mid v \in U\}$ is the set of extra elements created in
the construction of $D^*$. Thus, a set $S \subseteq T$ is feasible in $D$
if and only if it has a perfect Mader packing in $G$; for a vertex
$v \in U$, $S \cup \{v,v'\}$ is feasible if and only if $S$ has a perfect Mader
packing in $G-v$, and $S \cup \{v^+,(v^+)'\}$ is feasible if and only
if $S$ has a perfect Mader packing in $G+v^+$, i.e., after $v$ has been
cloned.

In particular, the use of clone vertices allows us to use the delta-matroid
representative sets lemmas to sieve for vertices $v$ which are
bottlenecks subject to ``almost Mader matchable'' sets $S \subseteq T$. 
We will show that such bottlenecks are well described by the above
min-max characterization, ultimately allowing us to construct a
polynomial-sized Mader-mimicking network.

We say that a graph modification, such as cloning a vertex $v$
or adding an edge $uv$, \emph{repairs} a set $S \subseteq T$ if $S$ is
not Mader matchable in $G$, but becomes Mader matchable after the
modification.  Note that in both of these cases, a set can be
repaired only if its deficiency is precisely 2. Indeed, a terminal set
of odd cardinality cannot be Mader matchable, and adding one edge or
one vertex cannot support more than one additional path. 
For brevity, we say that $S \subseteq T$ is \emph{feasible}
(respectively \emph{infeasible}) if it is Mader matchable
(respectively not Mader matchable) in $G$. 

\subsection{Mader sparsification}

Let $v \in V \setminus T$ be a non-terminal vertex. To \emph{bypass}
$v$ in $G$ refers to the following steps: first, change $N_G(v)$ into
a clique by adding any missing edges, then delete $v$.  Note that
bypassing $v$ can only increase the Mader path-packing number for any
terminal set $S \subseteq T$, as long as $v$ is not a terminal itself.
Like in previous work on vertex cut
sparsification~\cite{KratschW20JACM}, we would like to sparsify
$(G, \cT)$ by bypassing vertices $v \in V \setminus T$ one by one,
until only $O(k^3)$ vertices remain, without repairing any infeasible
sets $S \subseteq T$ in the process.

We follow this idea, but for technical reasons, we break the bypassing
operation up into smaller steps.  Recall that a \emph{simplicial
  vertex} in a graph is a vertex $v$ such that $N(v)$ is a clique.  Our
sparsification interleaves two operations: (1) delete a simplicial
non-terminal vertex, as long as there is one; (2) find a suitable
non-simplicial vertex $v$, participating in an induced $P_3$ $uvw$,
and add the edge $uw$ to $G$. We first note that the first operation
is safe. 

\begin{proposition} \label{prop:delete-simplicial}
  Let $v \in V \setminus T$ be a simplicial vertex in $G$.  Then any
  set $S \subseteq T$ is feasible in $G$ if and only if it is feasible
  in $G-v$. 
\end{proposition}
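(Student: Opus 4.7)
The plan is to prove the two directions separately; one is trivial and the other is a direct shortcutting argument that exploits the simplicial property.

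For the easy direction, if $S$ is feasible in $G-v$, then any Mader path packing $\cP$ in $G-v$ with endpoints $S$ is also a valid Mader path packing in $G$, since adding the vertex $v$ to the graph does not destroy any existing path, any vertex-disjointness, or the internal-disjointness from $T$. Hence $S$ is feasible in $G$.

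For the nontrivial direction, suppose $\cP$ is a Mader path packing in $G$ whose endpoints are exactly $S$. If no path in $\cP$ visits $v$, then $\cP$ already lies in $G-v$ and we are done. Otherwise some path $P \in \cP$ passes through $v$. Because $v \in V \setminus T$ is not a terminal, $v$ cannot be an endpoint of $P$; thus $v$ is an internal vertex of $P$, and it has two distinct neighbors $u, w$ consecutively on $P$. Both $u, w$ lie in $N_G(v)$, and since $v$ is simplicial, $N_G(v)$ is a clique, so $uw \in E(G)$. Replace the subpath $u\,v\,w$ inside $P$ by the single edge $uw$, obtaining a new path $P'$ with the same endpoints as $P$, one fewer internal vertex, and $V(P') = V(P) \setminus \{v\}$. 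The modified packing $\cP' = (\cP \setminus \{P\}) \cup \{P'\}$ is still pairwise vertex-disjoint (we only dropped $v$), its paths still have endpoints in distinct blocks of $\cT$ and internal vertices disjoint from $T$ (since we only removed the non-terminal $v$), and the endpoint set is still $S$. Thus $\cP'$ is a Mader path packing for $S$ that does not use $v$, so it is a Mader packing in $G-v$.

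There is essentially no obstacle here: the entire content is the simplicial shortcut $u\,v\,w \to uw$, and the only things to check are that $v$ is genuinely internal to the path (guaranteed by $v \notin T$) and that replacing this subpath does not interfere with the other paths of the packing (immediate since we only delete a vertex). No parity, min-max, or algebraic machinery is needed.
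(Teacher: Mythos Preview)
Your proof is correct and follows essentially the same approach as the paper's own proof: the trivial direction plus the simplicial shortcut $uvw \to uw$ for a path passing through $v$. Your write-up is more detailed in checking that the modified packing remains valid, but the argument is identical.
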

\begin{proof}
  On the one hand, clearly we do not introduce any new Mader path
  packings by deleting a vertex.  On the other hand, any Mader path
  packing for some set $S \subseteq T$ can be replaced by one not
  using $v$: Since $v \notin T$, there is a sequence $uvw$ of vertices
  passing through $v$ in any Mader path $P$ using $v$.  Then $uvw$ in
  $P$ can be replaced by $uw$ since $v$ is simplicial, and the
  resulting Mader path packing does not use $v$.
\end{proof}

Next, we show how to use cloned vertices to sieve for vertices $v$ that
may not be bypassed.

\begin{lemma} \label{claim:few-dangerous-clones}
  Let $(G, \cT)$ be a Mader network with $T=\bigcup \cT$, and let
  $S \subseteq T$ be a set that is repaired by cloning a vertex
  $v \in V \setminus T$. Then $S$ has deficiency 2 in $G$,
  and in any witnessing partition $(U_0; U_1, \ldots, U_k)$
  for $S$, either $v \in U_0$ or $v \in B_i$ for some
  set $U_i$ with $|B_i| \geq 3$. In particular, there are at most $3\nu(G,\cT)$
  vertices $v \in V \setminus T$ such that cloning $v$ repairs $S$. 
\end{lemma}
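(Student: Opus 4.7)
The plan is to prove the three assertions—deficiency, structural characterization, and counting—in turn.

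For the deficiency, I note that $|S|$ must be even for Mader matchability. Any Mader path packing in $G+v^+$ uses $v^+$ in at most one path, and deleting that path yields a packing in $G$; thus $\nu(G+v^+,\cT_S)-\nu(G,\cT_S)\leq 1$. Since $S$ is infeasible in $G$ but feasible in $G+v^+$, the deficiency of $S$ in $G$ is exactly~$2$.

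For the structural claim, fix a witnessing partition $(U_0;U_1,\ldots,U_k)$ for $S$ and suppose $v\in U_i$ with $i\geq 1$. I extend it to a partition of $G+v^+-(T\setminus S)$ by placing $v^+$ into $U_i$. The key technical step is verifying validity: a Mader path using $v^+$ but not $v$ corresponds to one in $G$ via $v^+\mapsto v$, and a Mader path $P$ using both $v$ and $v^+$ is handled by the short-cut path $R$ obtained from $P$ by replacing the internal segment from $v$ to $v^+$ with the single edge $vd$, where $d$ is the neighbor of $v^+$ on the far side. Then $R$ is a valid Mader path in $G$, and any cover of $R$ by the original partition yields a cover of $P$ in the extended one (the only edge of $R$ that is not in $P$ is $vd$, whose role in $P$ is played by the edge $v^+d$, which is internal to $U_i\cup\{v^+\}$ whenever $vd$ was internal to $U_i$). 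Granted validity, the extended partition's value upper-bounds $\nu(G+v^+,\cT_S)=\nu(G,\cT_S)+1$, so it must strictly exceed the original partition's value. This rules out the cases $v\notin B_i$ (value unchanged since $v^+$ is also interior) and $v\in B_i$ with $|B_i|$ even (since $\lfloor(|B_i|+1)/2\rfloor=\lfloor|B_i|/2\rfloor$). The delicate remaining case is $|B_i|=1$: here $B_i=\{v\}$ and $U_i\cap T=\emptyset$, and no simple Mader path can traverse the interior of $U_i$ (entry and exit would both occur at the unique border vertex $v$). This lets one restructure the partition by absorbing $U_i$ into the block(s) containing $v$'s external neighbors, yielding a witnessing partition in which $v$ becomes interior—reducing to the first case above.

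For the count, $|U_0|\leq\nu(G,\cT_S)\leq k/2$, and using $|B_i|\leq 2\lfloor|B_i|/2\rfloor+1$ when $|B_i|\geq 3$, the border vertices of blocks with $|B_i|\geq 3$ total at most $2\nu+k\leq 2k$, giving at most $3k$ candidate vertices overall. The main obstacle in the proof is the $|B_i|=1$ subcase: when $v$'s external neighbors span multiple blocks, the restructuring needs to merge several blocks simultaneously, and one must check that enough border vertices of each absorbed block become interior for the partition value not to rise—the analysis hinges on the parity of the merged-in borders and on the observation that $v$ is the sole conduit into $U_i$.
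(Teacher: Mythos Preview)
Your deficiency argument and your handling of the cases $v \notin B_i$ and $v \in B_i$ with $|B_i|$ even are correct and essentially match the paper (you are a bit more explicit about why the shortcut path $R$ being covered forces $P$ to be covered, which is fine).

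The gap is in the case $|B_i|=1$. Your plan is to merge $U_i$ into the blocks $U_{j_1},\ldots,U_{j_p}$ containing $v$'s external neighbours, so that $v$ becomes interior, and then invoke the already-handled interior case. But this only works if the merged partition is still a \emph{witnessing} partition, i.e., if its value does not exceed $\nu(G,\cT_S)$. In general it can: for instance, if $v$ has neighbours in two blocks with $|B_{j_1}|=|B_{j_2}|=3$, the merged border can have size up to $6$, so the contribution rises from $\lfloor 3/2\rfloor+\lfloor 3/2\rfloor+\lfloor 1/2\rfloor=2$ to $\lfloor 6/2\rfloor=3$. You explicitly flag this as ``the main obstacle'' and describe what would need to be checked (enough absorbed border vertices must become interior, parity considerations), but you do not actually prove it, and I do not see how to complete this route without substantial additional structural work.

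The paper handles $|B_i|=1$ by a completely different and shorter argument. Since $|B_i'|=2$ after placing $v^+$ in $U_i$, the extended partition has value exactly $\nu(G,\cT_S)+1=|S|/2$, so it is a \emph{witnessing} partition for $G+v^+$. Now take a perfect Mader packing $\cP$ in $G+v^+$ and apply the standard charging that proves the min-max upper bound: each path is charged either to a vertex of $U_0'$ or to a pair of border vertices of some $B_j'$. Tightness forces exactly $\lfloor|B_i'|/2\rfloor=1$ path to be charged to $B_i'=\{v,v^+\}$, hence some $P\in\cP$ contains a segment in $U_i'$ with endpoints $v$ and $v^+$. But then shortcutting $P$ past $v^+$ (using $N(v^+)=N(v)$) yields a perfect packing for $S$ in $G$, contradicting infeasibility. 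This avoids any restructuring of the witnessing partition.
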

\begin{proof}
  As was observed, in order for $S$ to be repairable the deficiency
  must be positive and even, and cloning $v$ can decrease deficiency
  by at most 2.  For the rest of the statement, 
  let $U=(U_0; U_1, \ldots, U_k)$ be a witnessing partition for $S$.
  Thus
  \[
    \nu(G, \cT_S) = |S|/2-1 = |U_0| + \sum_{i=1}^k \lfloor \frac{|B_i|}{2} \rfloor,
  \]
  by the assumptions on deficiency, where $B_i$ are the border
  vertices in $U_i$.  Let $G'=G+v^+$
  where $v^+$ is a clone of $v$ (i.e., a false twin).  We note that
  the decomposition $U'=(U_0'; U_1', \ldots, U_k')$ where $v^+$ is placed
  in the same part as $v$ is a valid decomposition for $(G',\cT_S)$.
  Indeed, assume for a contradiction that there is a Mader path $P$ in
  $G'$ that does not intersect $U_0'$ and does not use an edge spanned
  by any set $U_i'$.  Then $v^+ \in V(P)$, as otherwise $P$ contradicts 
  $U$ being a valid decomposition.  If $v, v^+ \in V(P)$, then we can
  shorten $P$ to bypass $v^+$;  if $v^+ \in V(P)$ but $v \notin V(P)$
  then we can replace $v^+$ in $P$ by $v$. This creates a path $P'$
  that exists in $G$, hence either intersects $U_0 \subseteq U_0'$
  or uses an edge spanned by some set $U_i \subseteq U_i'$.

  On the other hand, since cloning $v$ repairs $S$, either $v \in U_0$
  or $v \in B_i$ for some set $U_i$, as otherwise $U'$ gives an upper
  bound of $\nu(G+v^+, \cT_S)<|S|/2$.  Assume $v \notin U_0$.  Then there
  is a set $U_i$ such that $|B_i|$ is odd and $v \in B_i$.  We claim
  $|B_i|>1$. Assume to the contrary that $B_i=\{v\}$ in $G$.  Let
  $\cP$ be a maximum Mader path packing in $G+v^+$.  By the
  decomposition theorem, every path $P \in \cP$ uses either a vertex
  of $U_0'$ or two border vertices of some set $B_i'$.  If $|B_i|=1$,
  then one of the paths in $\cP$ must use the two border vertices
  $v, v^+ \in B_i'$ (since eventually, all other pairs are exhausted).
  But as above, a path $P$ using both $v$ and $v^+$ can be shortened
  to one using only $v$, contradicting that $S$ is infeasible in $G$.
  Thus $|B_i| \geq 3$. 

  The bound on the number of ``repairing vertices'' $v$ follows.
  The worst-case ratio between the contribution of a vertex in $U_0$
  or $B_i$, $|B_i| \geq 2$ in the min-max formula is $1/3$, for the
  case that $|B_i|=3$ contributing a values of $\lfloor 3/2 \rfloor=1$. 
  Hence there are at most $3\nu(G,\cT)$ vertices $v$ as described. 
\end{proof}

As a corollary, for every infeasible set $S \subseteq T$, there are
at most $O(k)$ vertices $v$ such that cloning $v$ repairs $S$.  We
wrap up by connecting the cloning operation to our method for
bypassing $v$.  

\begin{lemma} \label{lm:bypass-equals-clone}
  Let $v \in V \setminus T$ and let $S \subseteq T$ be a set that is
  not Mader matchable.  Then $S$ is repaired by cloning $v$ if and
  only if there exists an induced $P_3$ $uvw$ in $G$ 
  such that $S$ is feasible in $G+uw$. 
\end{lemma}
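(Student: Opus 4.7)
The plan is to prove the two directions separately, each by a direct path-surgery argument, using the fact that $v^+$ is a non-terminal with $N(v^+)=N(v)$ and that a Mader path is internally disjoint from $T$.

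For the ``if'' direction, suppose $uvw$ is an induced $P_3$ in $G$ and $\cP$ is a Mader path packing in $G+uw$ that matches $S$. Since $S$ is infeasible in $G$, some path $P \in \cP$ must use the new edge $uw$. I will produce a Mader packing of $S$ in $G+v^+$ by replacing the edge $uw$ in $P$ by the length-two subpath $uv^+w$, using that $uv^+, v^+w \in E(G+v^+)$ because $N(v^+)=N(v) \supseteq \{u,w\}$. The new path is a legitimate Mader path since $v^+ \notin T$ and $v^+$ is a fresh vertex disjoint from every other path, so the resulting packing is vertex-disjoint and still matches $S$; hence cloning $v$ repairs $S$.

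For the ``only if'' direction, suppose cloning $v$ repairs $S$, and let $\cP'$ be a Mader packing matching $S$ in $G+v^+$. Since $S$ is infeasible in $G$, at least one path $P' \in \cP'$ must use $v^+$. Because $v^+ \notin T$, it is an internal vertex of $P'$, so it has two distinct path-neighbours $u,w$ along $P'$. From $N(v^+)=N(v)$ I get $u,w \in N(v)$, and also $u,w \neq v$ (since $v \notin N(v^+)$). Thus $uvw$ is a path in $G$ through $v$; if already $uw \in E(G)$, then shortcutting $uv^+w$ to $uw$ in $P'$ yields a Mader packing of $S$ in $G$, contradicting infeasibility. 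Therefore $uw \notin E(G)$, so $uvw$ is an induced $P_3$, and the same shortcutting now gives a Mader packing of $S$ in $G+uw$, proving feasibility there.

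I expect no serious obstacle: the main point to verify carefully is that in the ``only if'' direction, the two neighbours $u,w$ of $v^+$ on $P'$ really exist, are distinct from each other and from $v$, and lie in $N(v)$; this is where the definition of a clone (false twin, so $v \notin N(v^+)$) and the fact that Mader paths are internally disjoint from $T$ (so $v^+$ is interior to $P'$) are both needed.
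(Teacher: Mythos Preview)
Your proposal is correct and follows essentially the same path-surgery argument as the paper: in both directions you identify the unique path using the new vertex/edge, then shortcut $uv^+w$ to $uw$ (or vice versa) and check the result is a valid Mader packing. You are in fact slightly more explicit than the paper in verifying that $u,w\neq v$ via the false-twin condition $v\notin N(v^+)$, which the paper leaves implicit.
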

\begin{proof}
  On the one hand, let $v^+$ be the clone of $v$ and assume that $S$
  is feasible in $G+v^+$.  Let $P \in \cP$ be a path in a Mader path
  packing for $S$ such that $v^+ \in V(P)$.  Let $uv^+w$ be the
  induced subpath of $P$ passing through $v^+$.  Note that this exists
  since $v \notin T$ and since $S$ is infeasible in $G$. Then
  $uw \notin E(G)$ since otherwise we could bypass $v^+$ in $P$,
  yielding a Mader path packing for $S$ in $G$. Similarly, the
  resulting path packing does exist in $G+uw$.  Conversely, if there
  is a path packing in $G+uw$ where $uvw$ is an induced $P_3$ for $v$,
  then one path in the packing must use the edge $uw$, and that path
  can be replaced in $G+v^+$ by the path containing $uv^+w$ in place
  of $uw$.
\end{proof}

We can now finish the proof. 

\begin{proof}[Proof of Theorem~\ref{thm:mader-sparsification}]
  Let $(G=(V,E), \cT=\{T_1,\ldots,T_r\})$ be given, and let
  $T=\bigcup \cT$, $|T|=k$. If there is a simplicial vertex
  $v \in V \setminus T$, then by Prop.~\ref{prop:delete-simplicial},
  we may delete $v$ in $G$ and recursively return a sparsification of
  $(G-v, \cT)$. Otherwise, we identify an appropriate induced $P_3$
  $uvw$ where $v$ is a non-terminal, add the edge $uw$ to $G$, and
  recursively restart the procedure.  We show how to do this safely.
  Refer to $v \in V \setminus T$ as \emph{dangerous} if there is an
  induced $P_3$ $uvw$ such that adding the edge $uw$ repairs some
  infeasible set $S \subseteq T$ in $G$.  We show a marking procedure
  (marking $O(k^3)$ vertices) that will mark every dangerous vertex $v$. 

  Write $U=V \setminus T$. Let $A=A^* * (U \cup U')$ and let
  $D=D(A)$ be the delta-matroid we have constructed, over ground set
  $V \cup U' \cup U^+ \cup U^{++}$, where $U'=\{v' \mid v \in U\}$,
  $U^+=\{v^+ \mid v \in U\}$ and $U^{++}=\{(v^+)' \mid v \in U\}$.
  For $v \in U$, write $Y(v)=\{v^+,(v^+)'\}$ and
  let $S=\{Y(v) \mid v \in U\}$.
  By Theorem~\ref{thm:dm-extend}, there exists an efficiently
  computable sieving polynomial family for $(D, T, S)$, 
  of arity $r=O(k)$ and degree $2$. Initialize $Z=\emptyset$ and
  repeat the following $3k$ times:\footnote{For an equivalent result,
    we could work with a single sieving polynomial family according to
    a less ad-hoc recipe as follows.  Let $\F$ be the field over which
    $A$ is constructed, and assume $|\F| > |U|$, or otherwise
    replace $\F$ by an extension field with more than $|U|$ elements.
    Let $M$ be the uniform matroid over $U$ of rank $3k$
    and construct a linear representation of $M$ over $\F$.  This is
    possible in polynomial time since $|\F| > |U|$~\cite{OxleyBook2}.
    Let $S \subseteq T$ be an infeasible set such that $Y(v)$ repairs $S$,
    and let $Q \subseteq U$ consist of all vertices $u \neq v$ such
    that $Y(u)$ repairs $S$.  Then there is a quadratic polynomial
    $p_1$ that sieves for $v \in (Q+v)$, and a linear polynomial $p_2$
    that sieves for $v \notin Q$; hence their product is a cubic
    polynomial that sieves only for the vertex $v$. Thus the loop can
    be replaced by a single call for a representative for a cubic
    sieving polynomial family.
  }
  \begin{enumerate}
  \item Compute a representative set $\rep S \subseteq S$ for $S$
    in $(D,T)$.
  \item Add $\{v \in U \mid Y(v) \in \rep S\}$ to $Z$
  \item Update $S \gets S \setminus \rep S$
  \end{enumerate}
  Upon completion, if $Z=U$, declare the procedure to have terminated
  successfully; otherwise select an arbitrary vertex
  $v \in U \setminus Z$, find an induced $P_3$ $uvw$ in $G$ centred on
  $v$, and return a recursively computed sparsification for $(G+uw, \cT)$.

  The construction will complete in polynomial time.  Indeed, every
  step of the above loop can be computed in polynomial time, and the
  recursion depth is bounded by $n+(n^2-|E|)$ since every step either
  reduces $n=|V(G)|$ or removes a non-edge from $G$.  The process
  terminates in a Mader network $(G', \cT)$, since no vertex in $T$ is
  removed, and $|V(G')| \leq |T|+|Z|=O(k^3)$ by Theorem~\ref{thm:dm-extend}.
  We prove that $(G', \cT)$ is a Mader-mimicking network of $(G,\cT)$, i.e., 
  for every $S \subseteq T$, $S$ is feasible in $G$ if and only if $S$
  is feasible in $G'$. We prove the statement by induction
  on $n$ and the number of non-edges of $G$.

  As base case, consider the case that $V(G)=Z \cup T$ and $G$ has no
  simplicial non-terminal vertices.  Then we return $(G,\cT)$
  unchanged, and the correctness is trivial.  If $G$ contains a
  simplicial non-terminal vertex $v$, then $(G-v, \cT)$ is a Mader-mimicking
  network for $(G, \cT)$ by Prop.~\ref{prop:delete-simplicial} and
  the return value is correct by the inductive hypothesis.  Otherwise,
  our algorithm makes a recursive call on some input $(G+uw, \cT)$.
  Now by the inductive hypothesis, it is sufficient to prove that
  $(G+uw, \cT)$ is a Mader-mimicking network for $(G, \cT)$.  Assume not, and
  let $S \subseteq T$ be a set which is infeasible in $G$ but feasible
  in $G+uw$.  Since we added the edge $uw$, the algorithm must have
  selected an induced $P_3$ $uvw$ where $v \in V \setminus (T \cup Z)$.
  By Lemma~\ref{lm:bypass-equals-clone}, cloning $v$ in $G$ repairs $S$.
  By construction, this is equivalent to $S \cup Y(v)$ being feasible
  in $D$, i.e., $Y(v)$ repairs $S$ in $D$.  If $v \notin Z$, then $Z$ contains at
  least $3k$ distinct vertices $x \neq v$ such that cloning $x$ in $G$
  repairs $S$. But by Lemma~\ref{claim:few-dangerous-clones},
  there are at most $3k$ such vertices in total.  Hence we would have
  $v \in Z$, in contradiction to our assumptions. 
  Hence $v$ is not dangerous, and $(G+uw, \cT)$ is a Mader-mimicking
  network for $(G,\cT)$.  By the inductive hypothesis, the output
  $(G', \cT)$ is a Mader-mimicking network for $(G,\cT)$. 
\end{proof}

As a final technical note, we note that there are situations where a
terminal set $S$ of deficiency 2 is feasible after cloning a vertex
$v$ \emph{twice}, but not after cloning $v$ only once.  

\begin{example}
  Create a $K_4$ on vertex set $\{a,b,c,d\}$. Attach three terminals
  $t_{a,i}$ to $a$, one terminal $t_b$ to $b$, one terminal $t_c$
  to $c$ and one terminal $t_d$ to $d$. Let the partition be
  $\cT=\{\{t_{a,1}, t_{a,2},t_{a,3}\}, \{t_b\}, \{t_c\}, \{t_d\}\}$.
  The Mader path-packing number is 2, hence the deficiency of the terminal set is 2.  The graph
  $G+a^+$ still has only packing number 2 (as two paths exhaust four 
  out of the five vertices $a, a^+, b, c, d$), but $G+a^++a^{++}$ has
  packing number 3.
\end{example}

Following this example, the appealing alternative construction of
simply finding an irrelevant vertex $a$ and bypassing it may not work, 
since there could be situations in which a vertex $a$ should not be
bypassed yet we cannot guarantee that $a \in Z$ after the above
computation.  Hence the need for the above, more careful modification
of adding edges of $N(a)$ one at a time.

\subsection{Applications}

We show two implications of this. First, we modify the construction to
give a mimicking network for Mader path packings with respect to all
partitions of the terminal set, as opposed to only considering the
original partition $\cT$.

\begin{theorem} \label{thm:mader-all-partitions}
  Let $G=(V,E)$ be an undirected graph and $T \subseteq V$ a set of
  terminals, $|T|=k$. In polynomial time we can compute a graph $G'=(V',E')$, 
  $T \subseteq V'$, with $|V|=O(k^3)$ such that for every partition $\cT$ of $T$, the
  Mader path packing numbers of $(G,\cT)$ and $(G',\cT)$ are identical.
\end{theorem}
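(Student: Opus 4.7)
The plan is to adapt the proof of Theorem~\ref{thm:mader-sparsification} so that the marked set $Z \subseteq V \setminus T$ of size $O(k^3)$ simultaneously accounts for every partition $\cT$ of $T$. The sparsification loop itself (delete a simplicial non-terminal vertex via Proposition~\ref{prop:delete-simplicial}, or add a fill-in edge $uw$ around a non-marked centre $v$ of an induced $P_3$ $uvw$, repeating until only $T \cup Z$ remains) carries over unchanged; only the computation of $Z$ needs to be revised.

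The key observation is that the sieving polynomial used in the proof of Theorem~\ref{thm:mader-sparsification} to certify that $Y(v) = \{v^+, (v^+)'\}$ repairs an infeasible set $X \subseteq T$ with respect to a partition $\cT$ is $p^\cT_X(Z) = \Pf A^\cT_X$, a polynomial of degree $q = 2$ in an auxiliary variable set $Z$ of arity $O(k)$ whose \emph{structural} shape is dictated only by $T$ and $q$, not by $\cT$. As $\cT$ varies, only the constant coefficients (inherited from $A^\cT[X]$) change, so the full family $\{p^\cT_X\}_{\cT, X}$ lies inside the single ambient space of all degree-$2$ polynomials over $Z$, which by Lemma~\ref{lm:degree-basis} has dimension $O(k^2)$. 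Corollary~\ref{cor:bd-nonvanish} then yields, in one shot, a representative set of size $O(k^2)$ that covers every $p^\cT_X$ at once. Iterating the representative set computation $3k$ times as in the loop of Theorem~\ref{thm:mader-sparsification}, and using Lemma~\ref{claim:few-dangerous-clones} (which bounds by $3k$ the number of repairing vertices \emph{per} $(\cT, X)$ pair, independently of $\cT$), accumulates a marked set $Z$ of total size $O(k^3)$ that captures every vertex dangerous for some partition. Correctness of the sparsification then follows exactly as in Theorem~\ref{thm:mader-sparsification}: by Lemma~\ref{lm:bypass-equals-clone}, bypassing any $v \notin Z$ repairs no $(\cT, X)$, so $\nu(G, \cT) = \nu(G', \cT)$ for every partition.

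The main obstacle is implementing the sieving framework with a \emph{partition-independent} evaluation map $\psi \colon \{Y(v) : v \in V \setminus T\} \to \F^{O(k)}$ that serves every $p^\cT_X$. The natural route is to use the universal skew-symmetric representation of Lemma~\ref{lm:dmrepr} built from per-terminal variables $z_t$, randomly instantiated over a sufficiently large extension field, so that each $\cT$-specific Mader matrix $A^\cT$ is recovered from $\tilde A$ via the block substitution $z_t \mapsto z_{\pi(t)}$, while $\psi(Y(v))$ itself is read off from $\tilde A[\,\cdot\,, Y(v)]$ once and for all. A Schwartz--Zippel union bound across the $B_k$ partitions of $T$ then guarantees that every $\cT$-feasibility is correctly captured by the marking with high probability. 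Verifying that this universal $\psi$ makes each $p^\cT_X$ evaluate as intended is the most delicate step, but it follows the same template as the analysis in the proof of Theorem~\ref{thm:rep-delta-mader-long}.
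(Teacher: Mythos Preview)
Your high-level plan is plausible and genuinely different from the paper's, but the crucial step---constructing a single partition-independent evaluation map $\psi$---has a real gap as written. You propose to take $\tilde A$ with per-terminal variables $z_t$, ``randomly instantiated'', and then recover each $A^{\cT}$ by the substitution $z_t \mapsto z_{\pi(t)}$. These two moves are incompatible: once the $z_t$ are instantiated to field elements, the block substitution is meaningless, and the instantiated matrix represents only the finest-partition ($T$-path) delta-matroid; conversely, if the $z_t$ are kept formal so that substitution makes sense, then $\psi(Y(v))$ read from $\tilde A[\cdot,Y(v)]$ is not a vector over a field and Corollary~\ref{cor:bd-nonvanish} does not apply. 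In either reading you do not get a degree-$2$ polynomial over a fixed $\F^{O(k)}$ that sieves for $\cT$-feasibility. (There \emph{is} a fix---observe that after pivoting by $U\cup U'$ each entry $A[t,w]$ for $t\in T$, $w\in Y(v)$ is linear in the single variable $z_t$, so one can take $\psi(Y(v))$ to be the vector of $z_t$-coefficients; then for any instantiated block values the $\cT$-entry is a linear form in $\psi$, and the Pfaffian is degree $2$---but this is precisely the ``most delicate step'' you defer, and it is not what ``read off from $\tilde A[\cdot,Y(v)]$'' says.)

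For comparison, the paper sidesteps this entirely by a combinatorial reduction to a \emph{single} partition. It attaches $k$ pendants $t^1,\ldots,t^k$ to every $t\in T$, takes $T'=\{t^i\}$ with the partition $\cT'=\{\{t^i:t\in T\}:i\in[k]\}$, and runs the sparsification of Theorem~\ref{thm:mader-sparsification} on $(G',\cT')$. Although $|T'|=k^2$, one checks that $T'$ has rank at most $4k$ in the relevant delta-matroid, so the \emph{rank} version (Theorem~\ref{thm:dm-extend-rank}) still gives representative sets of size $O(k^2)$; iterating $6k$ times (the Mader number of the enlarged instance is $2k$) yields $|Z|=O(k^3)$. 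A short claim then shows that preserving $\cT'$-matchability on $T'$ preserves $\nu(G,\cT)$ for every partition $\cT$ of $T$. This trades your algebraic uniformisation for a gadget plus the rank bound, and requires no new analysis of the matrix structure.
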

\begin{proof}
  Given $G$ and $T$, for every terminal $t \in T$ and every $i \in [k]$
  let $t^i$ be a pendant of $t$, and let $\cT=\{\{t^i \mid t \in T\} \mid i \in [k]\}$
  be the partition where ever block consists of a set of pendant
  copies of the original terminals $T$. Let $T'=\bigcup \cT$. 
  Let $G'$ be the resulting graph. Since $|T'|=k^2$, a bound of
  $|V'|=O(k^6)$ is immediate by invoking Theorem~\ref{thm:mader-sparsification} on $(G', \cT')$. 
  We show the tighter bound by utilising the rank version (Theorem~\ref{thm:dm-extend-rank})
  of the representative sets statement and analysing the process in
  Theorem~\ref{thm:mader-sparsification} more carefully.
  Let $D=D(A)$ be the delta-matroid constructed in the proof of
  Theorem~\ref{thm:mader-sparsification}, and let us 
  recall the steps of the representation of $D(A)$.
  Let $U=V \setminus T$ and let $G^+=(V^+,E^+)$ be the graph resulting
  from cloning every vertex $v \in U$, creating a vertex set
  $V^+=V \cup U^+$, $U^+=\{v^+ \mid v \in U\}$,
  and let $G^*$ be obtained by splitting $U \cup U^+$. 
  Thus $G^*$ now contains four copies $U$, $U^+$, $U'$, $U^{++}$
  of the vertex set $U$, where $U^{++}$ contains the split copies
  $(v^+)'$ of the clones $v^+$, $v \in U$.
  Then $D^*=D(A^*)$ is defined via a particular (non-generic)
  evaluation of the Tutte matrix of $G^*$ as in Section~\ref{ssec:dmader-construct},
  and $D=D(A)$ where $A=(A^* * (U \cup U'))$ is a twist of $D^*$. 
  
  \begin{claim} \label{claim:rank}
    The set $T'$ has rank at most $4k$ in $D$. 
  \end{claim}
  \begin{claimproof}
    We have $N_{G^*}(T')=\{t, t', t^+, (t^+)' \mid t \in T\}$,
    and $G^*[T']$ is edgeless. Hence any matching (whether perfect or
    not) in $G^*$ intersects at most $4k$ vertices of $T'$. 
    Since $A^*$ is a specific evaluation of the Tutte matrix of $G^*$,
    and $A$ is a twist of $A^*$ that does not affect $T'$,
    the same rank bound holds in $D$. 
  \end{claimproof}

  We can now proceed as Theorem~\ref{thm:mader-sparsification}, and define
  $Y(v)=\{v^+,(v^+)'\}$ for $v \in U$ and let $S=\{Y(v) \mid v \in U\}$.
  Let $\rep{S} \subseteq S$ be the representative set for $S$ in $(D,T')$
  and use Theorem~\ref{thm:dm-extend-rank} to compute $\rep{S}$. 
  By Claim~\ref{claim:rank}, $|\rep{S}|=O(k^2)$. 
  Next, we step through the process of marking vertices in $U$
  to place into a set $Z$, except instead of using $3k$ iterations
  as in Theorem~\ref{thm:mader-sparsification}, we use $6k$ iterations.
  We claim that this is sufficient. Indeed, the bound $3k$ in
  Theorem~\ref{thm:mader-sparsification} comes from
  Lemma~\ref{claim:few-dangerous-clones} using Mader
  path-packing number $k$. Since the Mader path-packing number of $(G', \cT')$
  is $2k$ (consisting of paths such as $t^1tt^2$ for every $t \in T$),
  by Lemma~\ref{claim:few-dangerous-clones} there are at most $6k$ ``dangerous'' vertices in any set
  $R \subseteq T'$ such that $R$ can be repaired by adding an edge to $G'$.
  Proceed as in Theorem~\ref{thm:mader-sparsification}, either adding
  an edge $uw$ corresponding to an induced $P_3$ $uvw$ or deleting a
  simplicial vertex, while making sure that no edges are added
  incident to $T'$ by only acting on non-simplicial vertices outside
  of $T$. This ensures that at every step of the way,
  the argument that $T'$ has rank $4k$ remains valid (in particular,
  all graph modifications performed correspond to modifications on the
  original graph $G$). Let $H$ be the graph produced when this is no
  longer possible. Then $(H, \cT')$ is a Mader-mimicking network of $(G',\cT')$
  with $|V(H)| = O(|T'|+|T|+|Z|)=O(k^3)$. Now let $H'=H-T'$. We claim that
  $(H,T)$ is a valid sparsification output of $(G,T)$. 
  
  \begin{claim}
    Let $\cT$ be a partition of $T$. Then $\nu(G,\cT)=\nu(H,\cT)$. 
  \end{claim}
  \begin{claimproof}
    Write $\cT=\{T_1, \ldots, T_p\}$. 
    Let $S \subseteq T$ be the endpoints of a maximum Mader path
    packing in $(G,\cT)$. Define
    \[
      S' = \{t^i \mid t \in S \cap T_i, T_i \in \cT\} \cup
      \cup \{t^1, t^2 \mid t \in (T \setminus S)\}.
    \]
    Let $\cP$ be a Mader path packing in $G$ with endpoints $S$. 
    We can extend $\cP$ to a Mader path packing in $G'$
    by adding a path $t^1tt^2$ for every $t \in T \setminus S$
    and extending every path $P \in \cP$, with endpoints
    $t_a \in T_i$, $t_b \in T_j$, with edges $t_a^it_a$ and $t_bt_b^j$.
    Clearly this defines a perfect Mader path packing on endpoints
    $S'$ in $G'$. Since $(H', \cT')$ is a Mader-mimicking network
    for $(G', \cT')$, there also exists a perfect Mader path packing
    $\cP'$ for $S'$ in $H'$. For any $t \in T \setminus S$,
    $\cP'$ must again contain the path $t^1tt^2$, since
    there are no other edges incident on $t^1$ and $t^2$
    and two distinct paths cannot use the same vertex $t$.
    Thus $\cP'$ also contains a set of Mader paths $P$
    that match the endpoints $S$, with respect to the partition $\cT$.
    Thus $S$ is Mader matchable in $H$.

    In the other direction, let $\cP$ be a maximum Mader path packing
    in $(H,\cT)$ and let $S \subseteq T$ be the endpoints of $\cP$. 
    Again define
    \[
      S' = \{t^i \mid t \in S \cap T_i, T_i \in \cT\} \cup
      \cup \{t^1, t^2 \mid t \in (T \setminus S)\}
    \]
    and extend $\cP$ to a Mader path packing on $S'$
    by adding paths $t^1tt^2$ for $t \in T \setminus S$. 
    Since $(H',\cT')$ is a Mader-mimicking network for $(G',\cT')$,
    the set $S'$ is also Mader matchable in $(G', \cT')$.
    Let $\cP'$ be a Mader path packing in $(G',\cT')$ with endpoints $S'$.
    Again for any $t \in T \setminus S$, the only possible path in $\cP'$
    is $t^1tt^2$, hence no other path in $\cP'$ intersects vertices of
    $T \setminus S$. The remaining paths in $\cP'$ must connect
    some terminals $t_a^i$ and $t_b^j$, where $t_a \in T_i$ and $t_b \in T_j$,
    and each such path must also connect $t_a$ with $t_b$. 
    Hence $S$ is Mader matchable in $G$ with respect to partition $\cT$.     
  \end{claimproof}
  
  Since all steps can be performed in polynomial time, this finishes
  the proof. 
\end{proof}

We show a further application of this to sparsification of the
\textsc{Vertex Multicut} problem. Let $G=(V,E)$ be an undirected graph
and $T \subseteq V$ a set of terminals in $G$. As in~\cite{Wahlstrom22talg},
for a partition $\cT$ of $T$, let a \emph{vertex multiway cut for $\cT$}
be a set of vertices $X \subseteq V$ such that every connected
component of $G-X$ contains terminals from at most one part of $\cT$. 
Similarly, given a set of cut requests $R \subseteq \binom{T}{2}$,
a \emph{vertex multicut for $R$} is a set of vertices $X \subseteq V$
such that every connected component of $G-X$ contains at most one
member of every pair $st \in R$. 
A \emph{vertex multicut-mimicking network} for $(G,T)$ is a graph
$G'=(V',E')$ with $T \subseteq V'$, such that for every set of cut
requests $R \subseteq \binom{T}{2}$, the size of a minimum multicut
for $R$ is equal in $G$ and in $G'$. Finally, $(G',T)$ is a
\emph{$q$-approximate multicut sparsifier} for $(G,T)$ if for every set
of cut requests $R \subseteq \binom{T}{2}$, the size of a minimum
multicut for $R$ differs by at most a factor of $q$ in $G$ and in $G'$.
For the \emph{edge cut}
versions of these notions, in order to get a 2-approximate multicut sparsifier
it suffices to preserve all \emph{minimum edge cuts} over $T$,
i.e., to preserve the size of an $(A,B)$-min cut for every bipartition
$T=A \cup B$ of $T$~\cite{Wahlstrom22talg}. This essentially follows since,  for every
partition $\cT=\{T_1, \ldots, T_p\}$, there is only a factor of 2
between the minimum edge multiway cut for $\cT$ and the sum of
\emph{isolating min-cuts} $(T_i, T \setminus T_i)$, $I \in [p]$.
Furthermore, multicuts for cut requests $R \subseteq \binom{T}{2}$
naturally reduces to multiway cuts for \emph{some} partition $\cT$
of $T$, according to the connected components in $G-X$ where $X$ is
the optimal solution. Such a 2-approximate multicut sparsifier
with $O(|T|^3)$ vertices can therefore be computed in randomised
polynomial time via the cut-covering lemma~\cite{KratschW20JACM}.
It is also known
that an exact multicut-mimicking network with $|T|^{\log^{O(1)} |T|}$
vertices can be computed in randomised polynomial time,
via more intricate methods~\cite{Wahlstrom22talg}.

We note that for \emph{vertex multiway cuts}, it no longer suffices to
consider only isolating min-cuts (or more generally, vertex min-cuts
for bipartitions $T=A \cup B$ of $T$). However, there is a natural
duality between vertex multiway cuts and Mader path-packings.
We include a proof for completeness. 

\begin{lemma} \label{lemma:mader-gives-apx}
  Let $G=(V,E)$ be an undirected graph and $\cT=T_1 \cup \ldots \cup T_s$
  a partition of a set $T=\bigcup_i T_i$. Let $\nu=\nu(G,\cT)$ be the
  Mader path-packing number of $(G,\cT)$. There is a vertex multiway cut 
  $X$ for $(G,\cT)$ of size at most $|X| \leq 2\nu$ (where $X$ may
  intersect $T$).
\end{lemma}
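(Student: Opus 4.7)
I plan to invoke Mader's $\cT$-path-packing theorem (Theorem~\ref{thm:mader-min-max}) to obtain a witnessing partition $(U_0; U_1, \ldots, U_k)$ of $V$ achieving $\nu = |U_0| + \sum_{i=1}^k \lfloor |B_i|/2 \rfloor$, and construct $X = U_0 \cup \bigcup_{i=1}^k Y_i$, where $Y_i \subseteq B_i$ has size $2\lfloor|B_i|/2\rfloor$: I take $Y_i = B_i$ when $|B_i|$ is even, and $Y_i = B_i \setminus \{b_i\}$ for an arbitrarily chosen $b_i \in B_i$ when $|B_i|$ is odd (so $Y_i = \emptyset$ when $|B_i| \le 1$). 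The size bound is then immediate, $|X| = |U_0| + 2\sum_i \lfloor|B_i|/2\rfloor = 2\nu - |U_0| \le 2\nu$.

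To verify the multiway cut property I argue by contradiction: suppose there is a simple path $P$ in $G - X$ between terminals $s \in T_a$ and $t \in T_b$ with $a \neq b$. Taking $P$ of minimum length forces its internal vertices to be non-terminal, so $P$ is itself a Mader path in $G$. Since $P$ avoids $U_0 \subseteq X$, Mader's theorem forces $P$ to traverse some edge spanned by a block $U_\ell$. I will derive a contradiction by showing that no such spanned edge can actually survive inside $G - X$.

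The crux is a structural analysis of the piece of $P$ lying in each block $U_\ell$. Any transition of $P$ between distinct blocks uses a cross edge, whose endpoints must be retained borders (elements of $B_\ell \setminus Y_\ell$): vertices of $U_\ell \setminus B_\ell$ have no neighbors outside $U_\ell$ in $G - U_0$ by the definition of $B_\ell$ and cannot serve as transition points. When $|B_\ell|$ is even there are no retained borders, so $P$ cannot enter $U_\ell$ at all, and neither $s$ nor $t$ can lie in $U_\ell$ (which would force them into $X$). When $|B_\ell|$ is odd the only retained border is $b_\ell$; simplicity permits $P$ to visit $b_\ell$ at most once, so any excursion $b_\ell - u_1 - \cdots - u_k$ into $U_\ell \setminus B_\ell$ cannot return to $b_\ell$ nor exit via the internal vertex $u_k$, forcing $P$ to end at $u_k$, which is non-terminal, contradicting the fact that $P$'s endpoints are terminals. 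Consequently each piece of $P$ reduces to the single vertex $\{b_\ell\}$, and $P$ uses no spanned edge at all, contradicting Mader's theorem. The most delicate step, as anticipated, is this uniform piece analysis, though the case distinctions themselves are short once one unpacks what it means for an internal-to-$U_\ell$ vertex to have no cross neighbors.
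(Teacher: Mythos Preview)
Your proposal is correct and follows essentially the same approach as the paper: both take a witnessing partition $(U_0;U_1,\ldots,U_k)$ from Mader's min-max theorem and set $X$ to be $U_0$ together with all but (at most) one vertex of each $B_i$, then argue that any $\cT$-path avoiding $U_0$ must have a maximal subpath inside some $U_i$ whose two endpoints lie in $B_i$, hence one of them is in $X$. The only cosmetic differences are that the paper always keeps exactly one vertex of each $B_i$ (giving $|X|\le |U_0|+\sum_i(|B_i|-1)$, which is never larger than your $|U_0|+2\sum_i\lfloor|B_i|/2\rfloor$), and its correctness argument jumps directly to the maximal subpath containing the guaranteed spanned edge rather than analysing all pieces of $P$ as you do.
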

\begin{proof}
  Recall the min-max formula of  Mader's $\cT$-path-packing theorem (Theorem~\ref{thm:mader-min-max}),
  and let $(U_0; U_1, \ldots, U_p)$ be a partition of $V$ that witnesses the value of $\nu(G,\cT)$.
  That is, $V = U_0 \cup \ldots \cup U_p$, and every $\cT$-path either intersects $U_0$
  or uses an edge of $G[U_i]$ for some $i \in [p]$. Furthermore, for $i \in [p]$
  let $B_i \subseteq U_i$ contain $T \cap U_i$ together with the
  vertices $v \in U_i$ that neighbour some vertex of $V \setminus (U_0 \cup U_i)$. 
  Then by Theorem~\ref{thm:mader-min-max},
  \[
    \nu(G,\cT) =  |U_0| + \sum_{i=1}^p \lfloor \frac{|B_i|}{2} \rfloor.
  \]
  Construct a set $X$ by starting from $X=U_0$, then for every $i \in [p]$
  select an arbitrary member $x \in B_i$ and add $B_i-x$ to $X$. We get
  \[
    |X| = |U_0| + \sum_{i=1}^p (|B_i|-1) \leq 2\nu(G,\cT),
  \]
  since $\lfloor \frac{n}{2} \rfloor \geq (n-1)/2$ for every $n \in \N$.
  It remains to show that $X$ is a multiway cut for $\cT$.
  Consider a $\cT$-path $P$ in $G$. If $P$ intersects $U_0$ then clearly $P$ also intersects $X$.
  Otherwise, there is an index $i \in [p]$ and an edge $e$ in $G[U_i]$
  such that $P$ uses $e$. Let $P'$ be the maximal subpath of $P$ that is contained in $U_i$ and contains $e$,
  and let $u$ and $v$ be the endpoints of $P'$. Then $u \neq v$ since $P'$ is non-empty.
  We claim that $u, v \in B_i$. Indeed, consider $u$. Either $u$ is an endpoint of $P$,
  in which case $u \in T$, or $P$ contains an edge $wu$ where $w \in U_j$ for some $j \in [p]$,
  $j \neq i$ (since $P$ does not intersect $U_0$). Hence $u \in B_i$, and symmetrically $v \in B_i$.
  Thus $X$ contains at least one of $u$ and $v$, and $X$ hits $P$.
\end{proof}

\begin{theorem}[Cor.~\ref{cor:two-approx-vmc}]
  Let $(G, T)$ be a terminal network. In polynomial time we can compute a
  2-approximate vertex multicut sparsifier $(G', T)$ for $(G,T)$
  such that $|V(G')|  = O(|T|^3)$. 
\end{theorem}
\begin{proof}
  Let $(G', T)$ be the ``all partitions Mader-mimicking network'' of $(G,T)$
  computed by Theorem~\ref{thm:mader-all-partitions}. Let $k=|T|$.
  We show that $(G',T)$ is a 2-approximate vertex multicut sparsifier
  for $(G,T)$. Let $R \subseteq \binom{T}{2}$ be a set of cut requests. 
  On the one hand, the size of a multicut for $R$ in $G'$ is at least
  as high as in $G$, since the only reductions employed in creating $G'$
  are to either delete a non-terminal simplicial vertex or add an
  edge, and neither of these can reduce the size of a minimum multicut.
  In the other direction, let $X$ be a minimum multicut for $R$ in $G$
  and let $\cT$ be the partition of $T$ induced by connected
  components of $G-X$, placing every terminal $t \in X \cap T$
  into a separate partition $\{t\}$. Then $\nu(G,\cT)=\nu(G',\cT) \leq |X|$
  (since every Mader path of $\cT$ connects a pair of terminals
  separated by $X$ in $G$, and the paths are pairwise vertex-disjoint).
  By Lemma~\ref{lemma:mader-gives-apx}, there therefore exists a
  vertex multiway cut $X'$ for $\cT$ in $G'$ with $|X'| \leq 2\nu(G', \cT)$.
  Clearly $X'$ is also a multicut for $R$. This finishes the proof. 
\end{proof}

\section{Conclusions and open questions}
\label{sec:conc}

We presented several new results, comprising a framework towards new
results in polynomial kernelization. First, we introduced the notion
of \emph{sieving polynomial families}, generalizing the representative
sets lemma for linear matroids as well as the method of instance
sparsification based on bounded-degree polynomials~\cite{JansenP19toct}.
In particular, this provides a simpler proof of the representative
sets lemma for linear matroids, bypassing the need for exterior algebra
or generalized Laplace expansion~\cite{Lovasz1977,Marx09-matroid,FominLPS16JACM}
(at least for the case that the cardinality $q$ of the set family that
is being reduced is a constant). 

Second, using this approach we establish representative sets-like
statements for linear \emph{delta-matroids}, which is a setting that
generalises linear matroids while also capturing other structures.
These representative set statements are somewhat more intricate than
the representative-sets lemma for linear matroids, since
delta-matroids appear to be somewhat more fragile structures compared
with matroids (in particular, there is no \emph{truncation} operation).
However, again up to constant factors and the assumption that
$q=O(1)$, our results form a strict generalisation of the linear
matroid representative sets lemma.

Third, we show a new class of linear delta-matroids, \emph{Mader delta-matroids},
which captures the problem of packing vertex-disjoint $\cT$-paths
over a partition $\cT$ of a terminal set $T$; i.e., a delta-matroid
capturing the setting of Mader's remarkable $\cS$-paths
theorem~\cite{Mader78Hpath,SchrijverBook}. 
This allows us to generalize the $O(k^3)$-vertex exact cut sparsifier
results of Kratsch and Wahlström~\cite{KratschW20JACM}
(i.e., the cut-covering sets lemma) 
to a notion of exact \emph{Mader-mimicking networks} with $O(k^3)$ vertices.
In particular, this implies a 2-approximate vertex multicut sparsifier
of $O(k^3)$ vertices, generalising a similar result for edge
multicuts~\cite{Wahlstrom22talg}. 

\paragraph{Questions: Delta-matroid representative sets.}
In the extended abstract of this article~\cite{Wahlstrom24repset},
several gaps were noted in the derived representative set statements
for delta-matroids. Some, but not all, of these have been filled using
recent methods~\cite{KoanaW25stacs}; we note this for reference. 

First, in the extended abstract~\cite{Wahlstrom24repset}
results were only given for directly represented delta-matroids. 
Using the new \emph{contraction representations} of Koana and the author~\cite{KoanaW25stacs},
this gap was filled in Section~\ref{ssec:rep-new} to handle general
linear delta-matroids, as well as extending the result to
\emph{projected} linear delta-matroids. 

Second, given the context of delta-matroids, it may be natural to ask 
for a variant of the notion of representative set, as follows.
Let $D=(V,\cF)$ be a delta-matroid and $X, Y \subseteq V$.
Say that $Y$ \emph{delta-extends} $X$ if $X \Delta Y \in \cF$.
Given that the symmetric difference operation $X \Delta Y$
is fundamental for delta-matroids, it is natural to ask for a
representative set with respect to this notion, instead of the current
requirement that $X \cap Y = \emptyset$. As stated in~\cite{Wahlstrom24repset},
we are not aware of any applications of such a result; we merely note
that it is a natural question which does not appear to immediately
follow from current tools.

We can also consider the following. Let $D=(V,\cF)$ be a delta-matroid
and define the \emph{diameter of $D$} as
$\max_{S, T \in \cF} |S \Delta T|$. We note that if $D$ is linear,
then the diameter is related to the minimum rank of the matrix in a
twist representation of $D$. The results of this paper are stated in
terms of the rank of the terminal set $T$. Can this be extended to a
setting where the bounds (and/or definitions) are adapted to be stated
in terms of the diameter parameter instead? The challenge here is not
just to come up with a valid variation of Theorem~\ref{thm:dm-extend-rank}
of Corollary~\ref{cor:dm-extend-full}, 
but to identify one which carries additional utility.

Finally, there is a gap between the upper and lower bounds. Given a
directly represented delta-matroid $D=(V,\cF)$ and a terminal set
$T \subseteq V$ of rank $k$ in $D$, a representative set $\rep{S}$ for
a collection $S$ of $q$-sets in $(D,T)$ computed in this paper has
cardinality roughly $|\rep{S}|=O((qk)^q)$, whereas the lower bounds only
require $|\rep{S}| \geq \binom{k}{q}$. Can the bound on $|\rep{S}|$
be improved, closer to the bound $\binom{k+q}{q}$ for linear matroids?
However, we note that the determinantal sieving method and its
extension to Pfaffians~\cite{theoretics:14026,abs-2502-13654}
alleviate some of the need for such a result. 

\paragraph{Questions: Kernelization.}
The results raise multiple questions for kernelization. Let us point
out a few. There are a few questions in kernelization which
have connections to non-bipartite graph matching, where delta-matroid
methods might help. \textsc{Edge Dominating Set}, equivalent to the
problem of computing a minimum maximal matching, has a kernel of
$O(k^2)$ vertices and $O(k^3)$ edges~\cite{XiaoKP13eds}, and it is a
long-open question whether either of these bounds can be
improved~\cite{Book_kernelization_FLSZ19}. Another, perhaps more obscure
question is \textsc{Directed Multiway Cut with Deletable Terminals} (DTDMWC).  
In the standard formulation, \textsc{Directed Multiway Cut}
does not allow a polynomial kernel unless the polynomial hierarchy
collapses, even for the case of 2 terminals~\cite{CyganKPPW14nopk}.
However, the variant where the solution is allowed to delete terminals 
appears easier. In fact, the dual path-packing problem for DTDMWC can
be solved in polynomial time and has a min-max characterization via a
reduction to matching in an auxiliary graph. Does DTDMWC have a
(presumably randomised) polynomial kernel? This would be analogous to
the undirected version, where a kernel for \textsc{Multiway Cut} in
general is open but \textsc{Multiway Cut with Deletable Terminals} 
is known to have a kernel of $O(k^3)$ vertices~\cite{KratschW20JACM}.

Finally, it is obvious to ask whether the tools of Mader
delta-matroids and Mader-mimicking networks can settle any further
cases of kernelization for \textsc{Multiway Cut}. So far, this is only
known for planar graphs~\cite{JansenPL21planarpk} and for a constant
number of terminals~\cite{KratschW20JACM}. As an intermediate result,
does \textsc{Vertex Multiway Cut} have a quasipolynomial kernel,
as is known for \textsc{Edge Multiway Cut}~\cite{Wahlstrom22talg}?\footnote{There
is preliminary work in this direction, but not conclusive yet.}

Naturally, there is also any number of other open questions in
kernelization that are less immediately related to the tools developed
in this paper; see~\cite{Book_kernelization_FLSZ19} for a selection.

\paragraph{Matching-mimicking networks.} 
Let $(G,T)$ be a terminal network. A \emph{matching-mimicking network}
for $(G,T)$ is a terminal network $(G',T)$ such that
for every $S \subseteq T$, there is a perfect matching in $G-S$ if and
only if there is a perfect matching in $G'-S$. 
In particular, if $G$ is bipartite then constructing a
matching-mimicking network reduces back to graph cuts,
hence a matching-mimicking network of size $O(|T|^3$)
for bipartite graphs follows from the cut-covering sets lemma.
However, the case of non-bipartite graphs appears (unsurprisingly)
harder. Eppstein and Vazirani~\cite{EppsteinV21} defined
the notion and noted that a matching-mimicking network with \emph{some}
size bound $|V(G')| \leq f(|T|)$ always exists, simply since $(G,T)$
for finite $|T|$ has a finite number of behaviours. However, they gave
no explicit bound and no method of constructing them.
See also follow-up blog posts by Eppstein~\cite{Eppblog1,Eppblog2}.

We note that an algebraic representation as a matrix encoded into
$O(|T|^3\log n)$ bits can be constructed via a contraction of the
matching delta-matroid (and in fact, one with $O(|T|^3)$ bits exists
via a less trivial argument). 
However, this does not imply anything for the combinatorial question
of the minimum size of a matching-mimicking network.

Is there always a matching-mimicking network $(G',T)$ where
$|V(G')|=|T|^{O(1)}$? Furthermore, to repeat another question of
Epstein and Vazirani~\cite{EppsteinV21}, are there matching-mimicking
networks of bounded size for \emph{weighted} matchings?

\paragraph{Path-packing systems.}
Multiple extensions and variations on Mader's path-packing theorem
have been considered in the literature, including packing non-zero
$T$-paths in a group-labelled graph~\cite{ChudnovskyGGGLS06,ChudnovskyCG08}
and packing ``non-returning'' $T$-paths~\cite{Pap07,Pap08}.
Does the existence of a linear Mader delta-matroid extend suitably to
these settings? 

Furthermore, for the setting of group-labelled graphs,
Chudnovsky et al.\ show the existence of a related matroid~\cite{ChudnovskyGGGLS06}
and an object similar in structure to a delta-matroid (although not
precisely, and not by name)~\cite{ChudnovskyCG08},
which capture more of the path-packing structure
than the Mader matroids and delta-matroids do. 
Can these be suitably captured by linear representations, and constructed in polynomial time?

\paragraph{Acknowledgements.}
Part of the work was done at the 2019 IBS summer research program on
algorithms and complexity in discrete structures, Daejeon, South Korea.
We are grateful to Eun Jung Kim and Stefan Kratsch for discussions on
topics related to the paper.

\bibliographystyle{abbrv}
\bibliography{bib}

\begin{thebibliography}{10}

\bibitem{AntipovK25boundaried}
L.~Antipov and S.~Kratsch.
\newblock Boundaried kernelization.
\newblock In {\em {WG}}, 2025.
\newblock To appear.

\bibitem{AntipovK25repset}
L.~Antipov and S.~Kratsch.
\newblock Boundaried kernelization via representative sets.
\newblock In {\em IPEC}, 2025.
\newblock To appear.

\bibitem{Bjorklund14tsp}
A.~Bj{\"{o}}rklund.
\newblock Determinant sums for undirected {H}amiltonicity.
\newblock {\em {SIAM} J. Comput.}, 43(1):280--299, 2014.

\bibitem{BjorklundHT12soda}
A.~Bj{\"{o}}rklund, T.~Husfeldt, and N.~Taslaman.
\newblock Shortest cycle through specified elements.
\newblock In {\em {SODA}}, pages 1747--1753. {SIAM}, 2012.

\bibitem{BodlaenderJK14crosscomp}
H.~L. Bodlaender, B.~M.~P. Jansen, and S.~Kratsch.
\newblock Kernelization lower bounds by cross-composition.
\newblock {\em {SIAM} J. Discret. Math.}, 28(1):277--305, 2014.

\bibitem{Bouchet87DMone}
A.~Bouchet.
\newblock Greedy algorithm and symmetric matroids.
\newblock {\em Math. Program.}, 38(2):147--159, 1987.

\bibitem{Bouchet89dam}
A.~Bouchet.
\newblock Matchings and {$\Delta$}-matroids.
\newblock {\em Discret. Appl. Math.}, 24(1-3):55--62, 1989.

\bibitem{BouchetS98dm}
A.~Bouchet and W.~Schw{\"{a}}rzler.
\newblock The delta-sum of matching delta-matroids.
\newblock {\em Discret. Math.}, 181(1-3):53--63, 1998.

\bibitem{BrakensiekG25stoc}
J.~Brakensiek and V.~Guruswami.
\newblock Redundancy is all you need.
\newblock In {\em {STOC}}, pages 1614--1625. {ACM}, 2025.

\bibitem{abs-2507-07942}
J.~Brakensiek, V.~Guruswami, B.~M.~P. Jansen, V.~Lagerkvist, and
  M.~Wahlstr{\"{o}}m.
\newblock The richness of {CSP} non-redundancy.
\newblock {\em CoRR}, abs/2507.07942, 2025.

\bibitem{Brand22discr}
C.~Brand.
\newblock Discriminantal subset convolution: Refining exterior-algebraic
  methods for parameterized algorithms.
\newblock {\em J. Comput. Syst. Sci.}, 129:62--71, 2022.

\bibitem{Carbonnel22redundancy}
C.~Carbonnel.
\newblock On redundancy in constraint satisfaction problems.
\newblock In {\em {CP}}, volume 235 of {\em LIPIcs}, pages 11:1--11:15. Schloss
  Dagstuhl - Leibniz-Zentrum f{\"{u}}r Informatik, 2022.

\bibitem{ChenJP20}
H.~Chen, B.~M.~P. Jansen, and A.~Pieterse.
\newblock Best-case and worst-case sparsifiability of {B}oolean {CSPs}.
\newblock {\em Algorithmica}, 82(8):2200--2242, 2020.

\bibitem{ChudnovskyCG08}
M.~Chudnovsky, W.~H. Cunningham, and J.~Geelen.
\newblock An algorithm for packing non-zero {{\it A}}-paths in group-labelled
  graphs.
\newblock {\em Combinatorica}, 28(2):145--161, 2008.

\bibitem{ChudnovskyGGGLS06}
M.~Chudnovsky, J.~Geelen, B.~Gerards, L.~A. Goddyn, M.~Lohman, and P.~D.
  Seymour.
\newblock Packing non-zero {A}-paths in group-labelled graphs.
\newblock {\em Combinatorica}, 26(5):521--532, 2006.

\bibitem{Chuzhoy12STOC}
J.~Chuzhoy.
\newblock On vertex sparsifiers with {S}teiner nodes.
\newblock In {\em {STOC}}, pages 673--688. {ACM}, 2012.

\bibitem{CyganFKLMPPS15PCbook}
M.~Cygan, F.~V. Fomin, L.~Kowalik, D.~Lokshtanov, D.~Marx, M.~Pilipczuk,
  M.~Pilipczuk, and S.~Saurabh.
\newblock {\em Parameterized Algorithms}.
\newblock Springer, 2015.

\bibitem{CyganKN18hamiltonrank}
M.~Cygan, S.~Kratsch, and J.~Nederlof.
\newblock Fast {H}amiltonicity checking via bases of perfect matchings.
\newblock {\em J. {ACM}}, 65(3):12:1--12:46, 2018.

\bibitem{CyganKPPW14nopk}
M.~Cygan, S.~Kratsch, M.~Pilipczuk, M.~Pilipczuk, and M.~Wahlstr{\"{o}}m.
\newblock Clique cover and graph separation: New incompressibility results.
\newblock {\em {ACM} Trans. Comput. Theory}, 6(2):6:1--6:19, 2014.

\bibitem{DellvM14Journal}
H.~Dell and D.~van Melkebeek.
\newblock Satisfiability allows no nontrivial sparsification unless the
  polynomial-time hierarchy collapses.
\newblock {\em J. {ACM}}, 61(4):23, 2014.

\bibitem{Drucker15ANDcomp}
A.~Drucker.
\newblock New limits to classical and quantum instance compression.
\newblock {\em {SIAM} J. Comput.}, 44(5):1443--1479, 2015.

\bibitem{abs-2502-13654}
E.~Eiben, T.~Koana, and M.~Wahlstr{\"{o}}m.
\newblock {FPT} algorithms over linear delta-matroids with applications.
\newblock {\em CoRR}, abs/2502.13654, 2025.

\bibitem{theoretics:14026}
E.~Eiben, T.~Koana, and M.~Wahlström.
\newblock Determinantal sieving.
\newblock {\em TheoretiCS}, Volume 4, Oct 2025.

\bibitem{Eppblog1}
D.~Eppstein.
\newblock Parallel matching in one-crossing-minor-free graphs.
\newblock Blog post,
  \url{https://11011110.github.io/blog/2018/02/01/parallel-matching-in.html},
  2018.

\bibitem{Eppblog2}
D.~Eppstein.
\newblock More matching-mimicking networks.
\newblock Blog post,
  \url{https://11011110.github.io/blog/2019/05/25/more-matching-mimicking.html},
  2019.

\bibitem{EppsteinV21}
D.~Eppstein and V.~V. Vazirani.
\newblock {NC} algorithms for computing a perfect matching and a maximum flow
  in one-crossing-minor-free graphs.
\newblock {\em {SIAM} J. Comput.}, 50(3):1014--1033, 2021.

\bibitem{FominLPS16JACM}
F.~V. Fomin, D.~Lokshtanov, F.~Panolan, and S.~Saurabh.
\newblock Efficient computation of representative families with applications in
  parameterized and exact algorithms.
\newblock {\em J. {ACM}}, 63(4):29:1--29:60, 2016.

\bibitem{Book_kernelization_FLSZ19}
F.~V. Fomin, D.~Lokshtanov, S.~Saurabh, and M.~Zehavi.
\newblock {\em Kernelization: Theory of Parameterized Preprocessing}.
\newblock Cambridge University Press, 2019.

\bibitem{Gallai61}
T.~Gallai.
\newblock Maximum-minimum {Sä}tze und verallgemeinerte {F}aktoren von
  {G}raphen.
\newblock {\em Acta Math. Acad. Sci. Hungaricae}, 12:131--173, 1961.

\bibitem{HagerupKNR98JCSS}
T.~Hagerup, J.~Katajainen, N.~Nishimura, and P.~Ragde.
\newblock Characterizing multiterminal flow networks and computing flows in
  networks of small treewidth.
\newblock {\em J. Comput. Syst. Sci.}, 57(3):366--375, 1998.

\bibitem{HeLW21dags}
Z.~He, J.~Li, and M.~Wahlstr{\"{o}}m.
\newblock Near-linear-time, optimal vertex cut sparsifiers in directed acyclic
  graphs.
\newblock In {\em {ESA}}, volume 204 of {\em LIPIcs}, pages 52:1--52:14.
  Schloss Dagstuhl - Leibniz-Zentrum f{\"{u}}r Informatik, 2021.

\bibitem{IwataYY18FOCS}
Y.~Iwata, Y.~Yamaguchi, and Y.~Yoshida.
\newblock 0/1/all {CSPs}, half-integral {A}-path packing, and linear-time {FPT}
  algorithms.
\newblock In {\em {FOCS}}, pages 462--473. {IEEE} Computer Society, 2018.

\bibitem{JansenP19toct}
B.~M.~P. Jansen and A.~Pieterse.
\newblock Optimal sparsification for some binary {CSPs} using low-degree
  polynomials.
\newblock {\em {ACM} Trans. Comput. Theory}, 11(4):28:1--28:26, 2019.

\bibitem{JansenPL21planarpk}
B.~M.~P. Jansen, M.~Pilipczuk, and E.~J. van Leeuwen.
\newblock A deterministic polynomial kernel for odd cycle transversal and
  vertex multiway cut in planar graphs.
\newblock {\em {SIAM} J. Discret. Math.}, 35(4):2387--2429, 2021.

\bibitem{KarpovPZ19Mimic}
N.~Karpov, M.~Pilipczuk, and A.~Zych{-}Pawlewicz.
\newblock An exponential lower bound for cut sparsifiers in planar graphs.
\newblock {\em Algorithmica}, 81(10):4029--4042, 2019.

\bibitem{KhannaP025stoc}
S.~Khanna, A.~Putterman, and M.~Sudan.
\newblock Efficient algorithms and new characterizations for {CSP}
  sparsification.
\newblock In {\em {STOC}}, pages 407--416. {ACM}, 2025.

\bibitem{KhannaPS24soda}
S.~Khanna, A.~L. Putterman, and M.~Sudan.
\newblock Code sparsification and its applications.
\newblock In {\em {SODA}}, pages 5145--5168. {SIAM}, 2024.

\bibitem{KoanaW25stacs}
T.~Koana and M.~Wahlstr{\"{o}}m.
\newblock Faster algorithms on linear delta-matroids.
\newblock In {\em {STACS}}, volume 327 of {\em LIPIcs}, pages 62:1--62:19.
  Schloss Dagstuhl - Leibniz-Zentrum f{\"{u}}r Informatik, 2025.

\bibitem{KratschW20JACM}
S.~Kratsch and M.~Wahlstr{\"{o}}m.
\newblock Representative sets and irrelevant vertices: New tools for
  kernelization.
\newblock {\em J. {ACM}}, 67(3):16:1--16:50, 2020.

\bibitem{KrauthgamerM23flowspars}
R.~Krauthgamer and R.~Mosenzon.
\newblock Exact flow sparsification requires unbounded size.
\newblock In {\em {SODA}}, pages 2354--2367. {SIAM}, 2023.

\bibitem{KrauthgamerR20planar}
R.~Krauthgamer and H.~Rika.
\newblock Refined vertex sparsifiers of planar graphs.
\newblock {\em {SIAM} J. Discret. Math.}, 34(1):101--129, 2020.

\bibitem{KrauthgamerR13SODA}
R.~Krauthgamer and I.~Rika.
\newblock Mimicking networks and succinct representations of terminal cuts.
\newblock In {\em {SODA}}, pages 1789--1799. {SIAM}, 2013.

\bibitem{LeightonM10spars}
F.~T. Leighton and A.~Moitra.
\newblock Extensions and limits to vertex sparsification.
\newblock In {\em {STOC}}, pages 47--56. {ACM}, 2010.

\bibitem{LokshtanovMPS18TALG}
D.~Lokshtanov, P.~Misra, F.~Panolan, and S.~Saurabh.
\newblock Deterministic truncation of linear matroids.
\newblock {\em {ACM} Trans. Algorithms}, 14(2):14:1--14:20, 2018.

\bibitem{LokshtanovPTWY17SODA}
D.~Lokshtanov, R.~Paturi, S.~Tamaki, R.~R. Williams, and H.~Yu.
\newblock Beating brute force for systems of polynomial equations over finite
  fields.
\newblock In {\em {SODA}}, pages 2190--2202. {SIAM}, 2017.

\bibitem{Lovasz1977}
L.~Lov\'asz.
\newblock Flats in matroids and geometric graphs.
\newblock In {\em Proc. Sixth British Combinatorial Conf.}, Combinatorial
  Surveys, pages 45--86, 1977.

\bibitem{Lovasz80}
L.~Lov{\'{a}}sz.
\newblock Matroid matching and some applications.
\newblock {\em J. Comb. Theory, Ser. {B}}, 28(2):208--236, 1980.

\bibitem{Mader78Hpath}
W.~Mader.
\newblock {Ü}ber die {M}aximalzahl kreuzungsfreier {$H$}-{W}ege.
\newblock {\em Arch. Math}, 31:387--402, 1978.

\bibitem{Marx09-matroid}
D.~Marx.
\newblock A parameterized view on matroid optimization problems.
\newblock {\em Theor. Comput. Sci.}, 410(44):4471--4479, 2009.

\bibitem{Moffatt19deltamatroids}
I.~Moffatt.
\newblock Delta-matroids for graph theorists.
\newblock In A.~Lo, R.~Mycroft, G.~Perarnau, and A.~Treglown, editors, {\em
  Surveys in Combinatorics 2019}, London Mathematical Society Lecture Note
  Series, pages 167--220. Cambridge University Press, 2019.

\bibitem{Nederlof20}
J.~Nederlof.
\newblock Algorithms for {NP}-hard problems via rank-related parameters of
  matrices.
\newblock In {\em Treewidth, Kernels, and Algorithms}, volume 12160 of {\em
  Lecture Notes in Computer Science}, pages 145--164. Springer, 2020.

\bibitem{OxleyBook2}
J.~Oxley.
\newblock {\em Matroid Theory}.
\newblock Oxford University Press, 2011.

\bibitem{Pap06Madermatroids}
G.~Pap.
\newblock Mader matroids are gammoids.
\newblock Technical Report TR-2006-17, EGRES technical reports, 2006.

\bibitem{Pap07}
G.~Pap.
\newblock Packing non-returning {$A$}-paths.
\newblock {\em Combinatorica}, 27(2):247--251, 2007.

\bibitem{PapSlides}
G.~Pap.
\newblock A survey of results on packing {$\mathcal{A}$}-paths.
\newblock Slides from talk given at HJ'07 (the 5th Hungarian-Japanese symposium
  on discrete mathematics and its application), 2007.
\newblock Available at {\url{https://web.cs.elte.hu/~gyuszko/HJ2007.pdf}}
  (retrieved Dec. 2022).

\bibitem{Pap08}
G.~Pap.
\newblock Packing non-returning {A}-paths algorithmically.
\newblock {\em Discrete Mathematics}, 308(8):1472--1488, 2008.

\bibitem{Pratt19waring}
K.~Pratt.
\newblock Waring rank, parameterized and exact algorithms.
\newblock In {\em {FOCS}}, pages 806--823. {IEEE} Computer Society, 2019.

\bibitem{SchrijverBook}
A.~Schrijver.
\newblock {\em Combinatorial Optimization: Polyhedra and Efficiency}.
\newblock Algorithms and combinatorics. Springer, 2003.

\bibitem{Schwartz80}
J.~T. Schwartz.
\newblock Fast probabilistic algorithms for verification of polynomial
  identities.
\newblock {\em J. {ACM}}, 27(4):701--717, 1980.

\bibitem{SeboS04pathpacking}
A.~Seb{\H{o}} and L.~Szeg{\H{o}}.
\newblock The path-packing structure of graphs.
\newblock In {\em {IPCO}}, volume 3064 of {\em Lecture Notes in Computer
  Science}, pages 256--270. Springer, 2004.

\bibitem{Thomasse10}
S.~Thomass{\'e}.
\newblock A {4{\it k}$^{\mbox{2}}$} kernel for feedback vertex set.
\newblock {\em ACM Transactions on Algorithms}, 6(2), 2010.

\bibitem{Wahlstrom13}
M.~Wahlstr{\"o}m.
\newblock Abusing the {T}utte matrix: An algebraic instance compression for the
  {K}-set-cycle problem.
\newblock In {\em STACS}, pages 341--352, 2013.

\bibitem{Wahlstrom22talg}
M.~Wahlstr{\"{o}}m.
\newblock Quasipolynomial multicut-mimicking networks and kernels for multiway
  cut problems.
\newblock {\em {ACM} Trans. Algorithms}, 18(2):15:1--15:19, 2022.

\bibitem{Wahlstrom24repset}
M.~Wahlstr{\"{o}}m.
\newblock Representative set statements for delta-matroids and the mader
  delta-matroid.
\newblock In {\em {SODA}}, pages 780--810. {SIAM}, 2024.

\bibitem{XiaoKP13eds}
M.~Xiao, T.~Kloks, and S.~Poon.
\newblock New parameterized algorithms for the edge dominating set problem.
\newblock {\em Theor. Comput. Sci.}, 511:147--158, 2013.

\bibitem{Zippel79}
R.~Zippel.
\newblock Probabilistic algorithms for sparse polynomials.
\newblock In {\em EUROSAM}, pages 216--226, 1979.

\end{thebibliography}

\end{document}